\newtheorem{fact}{Fact}[section]
\newtheorem{theorem}[fact]{Theorem}
\newtheorem{corollary}[fact]{Corollary}
\newtheorem{definition}[fact]{Definition}
\newtheorem{lemma}[fact]{Lemma}
\newtheorem{remark}[fact]{Remark}
\newenvironment{proof}[1][Proof]{\textbf{#1.} }{\ \rule{0.5em}{0.5em}}
\numberwithin{equation}{section}
\newcommand{\notree}{{\rm no\;tree}}
\newcommand{\bb}{{\rm bb}}
\newcommand{\tback}{T^{\bb}}
\newcommand{\id}{\operatorname{id}}
\newcommand{\0}{{\mathbf{0}}}
\newcommand{\pin}{{\operatorname{pin}}}
\newcommand{\grad}{{\operatorname{grad}}}
\newcommand{\treevar}{{\tt treevar}}
\newcommand{\ul}{{\underline{L}}}
\newcommand{\ol}{{\overline{L}}}
\newcommand{\uroot}{r}
\newcommand{\oroot}{{\overline{r}}}
\newcommand{\rechts}{{\rm right}}
\newcommand{\links}{{\rm left}}
\newcommand{\mitte}{{\rm middle}}
\newcommand{\vertical}{{\rm vert}}
\newcommand{\hor}{{\rm hor}}
\newcommand{\re}{\operatorname{Re}}
\newcommand{\im}{\operatorname{Im}}
\newcommand{\sk}[1]{\left\langle #1 \right\rangle}
\newcommand{\refl}{\leftrightarrow}
\newcommand{\fol}{\leadsto}
\newcommand{\word}{\mathbb{T}}
\newcommand{\glue}{\operatorname{glue}}
\newcommand{\K}{\mathcal{K}}
\newcommand{\one}{\mathbf{1}}
\newcommand{\bomega}{{\boldsymbol\omega}}
\newcommand{\Bomega}{{\boldsymbol\Omega}}
\newcommand{\words}{{\tt words}}
\newcommand{\indices}{_{L}^{\0\ell}}
\newcommand{\Egamma}{E\indices(\alpha)}
\newcommand{\Sgamma}[1]{S\indices(#1)}
\newcommand{\Pint}{\mathbb{P}\indices}
\newcommand{\Zint}{Z\indices}
\newcommand{\Hint}{H\indices}
\newcommand{\betamax}{\beta_{\max}}
\newcommand{\takegrad}{\boldsymbol\Phi}
\newcommand{\aux}{G^{\rm aux}}
\def\R{{\mathbb R}}  %%
\def\N{{\mathbb N}}  %%
\def\p{{\mathbb P}}  %%
\def\Z{{\mathbb Z}}  %%
\def\C{{\mathbb C}}  %%
\def\T{{\mathcal{T}}}
\def\G{{\mathcal{G}}}  %%
\newcommand{\ceins}{c_{12}}
\newcommand{\czwei}{c_{13}}
\newcommand{\cdrei}{c_{11}}
\newcommand{\cvier}{c_{10}}
\newcommand{\cfuenf}{c_{7}}
\newcommand{\csechs}{c_{3}}
\newcommand{\csieben}{c_{4}}
\newcommand{\cacht}{c_{5}}
\newcommand{\cneun}{c_6}
\newcommand{\czehn}{c_{1}}
\newcommand{\celf}{{c_{2}}}
\newcommand{\czwoelf}{c_{8}}
\newcommand{\cdreizehn}{c_{9}}
\newcommand{\cvierzehn}{c_{14}}
\begin{document}
\thispagestyle{empty}

\begin{center}
{\LARGE  Localization for a nonlinear sigma model 
in a strip \\ 
related to  vertex reinforced
 jump processes}\\[3mm]
%{\small Extremely preliminary version}\\[3mm]
{\large Margherita Disertori\footnote{Laboratoire de 
Math\'ematiques Rapha\"el Salem, UMR CNRS 6085, 
Universit\'e de Rouen, 76801, France, 
e-mail: margherita.disertori@univ-rouen.fr}
\hspace{1cm} 
Franz Merkl\footnote{Mathematical Institute, University of Munich,
Theresienstr.\ 39,
D-80333 Munich,
Germany.
e-mail: merkl@mathematik.uni-muenchen.de
}
\hspace{1cm} 
Silke W.W.\ Rolles\footnote{Zentrum Mathematik, Bereich M5,
Technische Universit{\"{a}}t M{\"{u}}nchen,
D-85747 Garching bei M{\"{u}}nchen,
Germany.
e-mail: srolles@ma.tum.de}
\\[3mm]
{\small \today}}\\[3mm]
\end{center}

\begin{abstract}
We study a lattice sigma model which is expected to reflect 
Anderson localization and delocalization transition for real 
symmetric band matrices in 3D, but describes  the mixing measure
for a vertex reinforced jump process too. 
For this model we prove exponential localization at any temperature
in a strip, and more generally in any quasi-one dimensional graph, 
with pinning (mass) at only one site. 
The proof uses a Mermin-Wagner type argument and a transfer operator approach. 
\end{abstract}

%\tableofcontents

%%%%%%%%%%%%%%%%%%%%%%%%%%%%%%%%%%%%%%%%%%%%%%%%%%%%%%%%%%%%%%
\section{Introduction}
%%%%%%%%%%%%%%%%%%%%%%%%%%%%%%%%%%%%%%%%%%%%%%%%%%%%%%%%%%%%%%
%%%%%%%%%%%%%%%%%%%%%%%%%%%%%%%%%%%%%%%%%%%%%%%%%%%%%%%%%%%%%%

%The existence of a diffusive phase in 3 dimensions was proved 
%for low temperatures, while localization was proved  at high temperatures
%for any dimension $d\geq 1$.

\paragraph{Nonlinear sigma models and random matrices.}
Nonlinear sigma models appear as effective models at low energy in
a large variety of physical problems where 
some kind of spontaneous symmetry breaking and phase transition is expected. 
They can be viewed, in analogy to statistical mechanics, as models of interacting spins, 
taking values in a nonlinear manifold.  In the context of
disordered conductors and quantum chaos,  the spectral and transport
properties of random Schr\"odinger operators and random band matrices
can be translated 
in the study of the correlation functions for a  statistical mechanics model
where the spin at each lattice site $j$ is replaced by a
 matrix $Q_{j}$, whose elements are both ordinary (bosonic) complex or real 
variables and anticommuting (fermionic)  Grassmann variables. 
This representation
was introduced and developed by  Efetov \cite{efetov-adv,efetov-book}, 
based on seminal work by Wegner \cite{wegner,sw}, 
using the supersymmetric approach. 
In the corresponding nonlinear sigma model, the matrix $Q_{j}$ satisfies $Q_{j}^{2}=\mathbf{1}$
and is restricted to take values on a supermanifold, whose symmetry  
properties depend on the symmetries of the initial random matrix ensemble and
the observable (correlation function) under study.
Efetov's supersymmetric nonlinear sigma model and its variants  
were intensively studied, especially in the physics literature, but they 
still defy a rigorous mathematical understanding. See 
\cite{spencer2012,mirlin2000,mirlin1999,fyodorov} for an 
introduction to these problems.

In this context Zirnbauer introduced  a supersymmetric sigma model 
\cite{zirnbauer-91,migdal-kad}, that is expected to reflect 
Anderson localization and delocalization transition for real 
symmetric band matrices in 3D. 
In this statistical mechanical model the field (or spin) at site $j$ 
is a vector $v_{j}= (x_{j},y_{j},z_{j}, \xi_j, \eta_j)$  where $x,y,z$
are real and $ \xi,\eta$ are Grassmann variables. We endow this vector space 
with a generalization of  the Lorentz metric 
$(v,v')= xx'+yy'-zz'+ \xi\eta'-\eta\xi'$. Imposing the constraint
$(v_{j},v_{j})=-1$, the field $v_{j}$ has four degrees of
 freedom (two bosonic and two fermionic),  and takes values in a target space denoted by $H^{2|2}$,
which is a supermanifold extension of the hyperbolic plane $H^{2}$. 
The effective action for this model is given by
\begin{equation}\label{free-en}
\mathcal{F}(v)= \sum_{i\sim j}\frac{ \beta_{ij}}{2} (v_{i}-v_{j},\, v_{i}-v_{j}) +  
\sum_{j} \varepsilon_{j} (z_{j}-1)
\end{equation}
where the first term is the kinetic energy, and $i\sim j$ denotes edges connecting nearest 
neighbors $i$ and $j$. See  \cite[Sect. 2.1]{disertori-spencer-zirnbauer2010} for  more details.
The parameter $\beta_{ij}=\beta_{ji}>0$  
may be seen as a local inverse temperature along the edge $i\sim j$,
using the language of statistical mechanics. The last term in the action
is needed to break the non-compact symmetry and make the corresponding 
integral finite, so $\varepsilon_{j}\geq 0$ can be seen as the analog of a 
magnetic field, or a mass term.
%We are interested in the situation when $\sum_{j}\varepsilon_{j}\sim 1$. 
Note that if we add a new vertex 
$\rho$ to the lattice, we connect  it to all lattice points $j$ with 
$\varepsilon_{j}>0$ and we fix $v_\rho= (0,0,1,0,0)$ we have
\[
\sum_{j} \varepsilon_{j} (z_{j}-1)= 
\sum_{j} \frac{\varepsilon_{j} }{2} (v_{j}-v_\rho,v_{j}-v_\rho).
\]
Then the mass term  may be seen as a kinetic term too.
In the appropriate coordinate system (see \cite{disertori-spencer-zirnbauer2010}) 
the action becomes quadratic in the 
fermionic variables and these variables can be integrated out exactly. We 
 are left with two real variables $t_{j},s_{j}$ at each
lattice site and  a probability measure $d\mu (t,s)$. 
The resulting statistical mechanical model then has
 a probabilistic interpretation. In this paper, we will not use the 
supersymmetric formalism at all, and will work directly on the probability
measure  $d\mu (t,s)$, whose precise form is given in \eqref{eq:tmeasure} 
below\footnote{Actually here we work only with the formula for one pinning point.
For the more general formula see \cite{disertori-spencer2010}.}.

\paragraph{Connection with stochastic processes.}
Recently Sabot and Tarr\`es \cite{sabot-tarres2012} 
proved  a precise relation between $H^{2|2}$  and both   vertex reinforced jump process (VRJP)
and linearly edge reinforced random walk (LERRW) on the graph with the additional
vertex $\rho$. Both  are history dependent
stochastic processes, describing self-organization and learning behavior. 
VRJP was conceived by Werner and studied in  
\cite{davis-volkov1,davis-volkov2,collevecchio2,collevecchio1,basdevant}.
It is a continuous time  process $Y= (Y_{u})_{u\geq 0}$ where the particle jumps
from the lattice site $i$ to $j$ with  rate $\beta_{ij} (1+L_{j} (u))$, 
where $L_{j} (u)$ is the local time at $j$, 
that is the time the particle has  already spent on $j$ up to time $u$.
Here we take the convention $\beta_{i\rho}=\beta_{\rho i}=\varepsilon_{i}$.
In this context large/small $\beta$ corresponds to weak/strong reinforcement.
Indeed, assuming $\beta$ to be constant, we can rescale the time by 
$u'=\beta u$.
Then $L'_{j} (u')=\beta L_{j} (u)$, the jump rate becomes  $1+L'_{j} (u') /\beta$ and
the bigger $\beta$ is, the weaker the influence of the local time.
 Let $\tilde{Y}= (\tilde{Y}_{n})_{n\in \mathbb{N}_{0}}$ 
be the discrete time process associated to $Y$ by taking only the value of $Y_{u}$
immediately before the jump times, ignoring the waiting time between jumps.
Sabot-Tarr\`es  \cite{sabot-tarres2012} proved that on any finite graph
it can be represented as a random walk in a random environment, 
and  more precisely as a mixture of reversible Markov
chains  
\begin{equation}\label{mixing}
  \mathbb{P} (\tilde{Y} \in A)  = 
\int \mathbb{P}^{W (t,s)}( \tilde{Y}\in A) \  d\mu (t,s) %,\qquad  \qquad 
%(b)\ \mathbb{P}_{X}  = \int \mathbb{P}^{w}_{X} \  d\mu (w), 
\end{equation}
for any event $A$ on paths, where $ d\mu (t,s)$ is the measure for 
$H^{2|2}$ defined in \eqref{eq:tmeasure} below.  
Here  $\mathbb{P}^{W (t,s)} (\tilde{Y}\in \cdot)$ 
is the probability law associated to the 
Markovian random walk starting at the root $\rho$  
and jumping  from $i$ to $j$ with probability proportional to 
$W_{ij} (t,s)=W_{ji} (t,s) =\beta_{ij}e^{t_{i}+t_{j}}$ for any $i\sim j$, with the 
convention $t_\rho=0$.
The probability measure  $\mu(W\in \cdot )$ allows to pick randomly the environment 
where the particle moves.  It is called  the mixing measure  for the process.
From the stochastic process perspective, the most natural situation 
is to consider the case of one pinning point 
$\varepsilon_{j}=\varepsilon\delta_{jj_{0}}$, 
where $j_{0}$ is some fixed lattice site.

LERRW is a discrete time  process $X= (X_{n})_{n\in \mathbb{N}_{0}}$ 
where the particle jumps at time $n$ from the lattice site $i$ to $j$ with 
a probability depending on the number of times it has traversed the $i\sim j$ 
edge in the past.
This model is known to be a mixture of reversible Markov chains 
with explicitly known mixing measure 
\cite{Coppersmith-Diaconis-unpublished,Keane-Rolles2000}.
The relation of this model to $H^{2|2}$ was clarified by Sabot and Tarr\`es 
\cite[Thm.~1]{sabot-tarres2012}, who showed that LERRW is obtained 
from the discrete time VRJP as a mixture  
by taking the weights  $(\beta_{ij})$ in $W_{ij}(t,s)$ to be independent 
Gamma distributed random variables. 
Using this relation they proved localization of LERRW in $d\geq 1$ 
for strong reinforcement.

\paragraph{Results and conjectures.}
 Exponential localization for $H^{2|2}$ was established  in $d=1$
  \cite{zirnbauer-91}-\cite{disertori-spencer2010} for any value of $\beta$,  
and in $d\geq 1$  for small $\beta$  \cite{disertori-spencer2010}.
%at high temperature, meaning the $\beta_{ij}$ are small 
%for all pairs $i\sim j$ \cite{disertori-spencer2010}. 
%In that paper localization for all $\beta $ 
%was also recovered in $d=1$ with an estimate of the localization length. 
A quasi-diffusive phase was established 
in $d\geq 3$ for large $\beta$ %at low 
%temperature,   meaning the $\beta_{ij}$ are large for all pairs $i\sim j$, 
thus proving the existence of a phase transition 
\cite{disertori-spencer-zirnbauer2010}. 
In $d=2$ localization is expected to hold for any value of $\beta$, 
with localization length of order $e^{\beta }$.
The proofs in \cite{disertori-spencer2010,disertori-spencer-zirnbauer2010}
are derived for constant parameters $\beta_{ij}=\beta $, but they can be easily 
generalized to the case of variable betas.
In the case of one pinning point $\varepsilon_{j}=\delta_{jj_{0}}$,
the results listed above imply  that
the corresponding VRJP starting at $j_{0}$ is recurrent in $d=1$ for any value of $\beta$,  
and in $d\geq 1$  for small $\beta$  \cite{sabot-tarres2012}.

In this paper we consider $H^{2|2}$ in the case of one pinning point  
on a generalized strip, consisting of  copies of an arbitrary finite connected graph. 
For this model we prove exponential localization 
for any periodic choice of $\beta_{ij}$, uniformly in the number of copies.  
This implies the corresponding discrete time process associated to VRJP is recurrent on 
the infinite strip
and exponentially localized in a finite region with high probability: 
$\mathbb{P} (|\tilde{Y}_{n}|>R)\leq e^{-cR}$ for some constant $c$, 
independent from $n$.
Similar statements could be made about the continuous time process too, although they are not
worked out here.

\paragraph{Idea of the proof.} Though one may expect this should be just a small
modification of the $1D$ proof, it turns out    the
argument used in \cite{disertori-spencer2010} breaks down as soon 
as we leave the perfect one dimensional chain, unless we take $\beta $ small. 
Here we use  a quite different approach, namely a deformation argument on 
probability measures (in the spirit of Mermin-Wagner).
The transfer operator method is used in a non standard way. 
Instead of estimating the top eigenvalue directly,  it  
 is used to bound some of the terms generated by  the
deformation uniformly in the length of the strip. For this purpose,
we need only a reflection symmetry and compactness. 
To set up a transfer operator we need to express our measure as a product of local
functionals. This is not trivial due to the presence of a highly non local determinant 
in the measure. One might write this determinant in terms of  a product of local functions of
Grassmann (anticommuting) variables, but would have to deal with a transfer matrix
involving both real and Grassmann variables. In contrast, here
we write the determinant as a sum over spanning trees,  using the matrix-tree theorem.
Then these trees can be described by a set of local variables and the Boltzmann weight
becomes a product of local non-negative functions.

The arguments are inspired by the methods used by two of us in previous work on the LERRW 
\cite{Merkl-Rolles2005c,merkl-rolles-chain}. 
In contrast to that work though, here our deformation depends on a cut-off function
selecting only small gradients. With this choice, we have to estimate bounded observables only.
An alternative method, not presented here, would be to remove the cut-off function, and deal 
with unbounded observables.

To simplify the proof as much as possible, we did not try to estimate 
the localization length as a function of $\beta $ or $W$, though such an estimate
is maybe achievable by a more detailed analysis.
Some variant of the arguments we use here may perhaps be applied to the case 
of uniform pinning $\varepsilon_{j}=\varepsilon$. On the other hand, the true $d=2$ case
with large beta is much harder and will probably need a different approach.

\vskip1.truecm

\noindent {\bf \textit{Acknowledgements.}} 
%da qui...
It is our pleasure to thank T. Spencer for many useful discussions and 
suggestions related to this paper.

%%%%%%%%%%%%%%%%%%%%%%%%%%%%%%%%%%%%%%%%%%%%%%%%%%%%%%%%%%%%
\section{Model and main result}\label{sect2}
%%%%%%%%%%%%%%%%%%%%%%%%%%%%%%%%%%%%%%%%%%%%%%%%%%%%%%%%%%
\subsection{The model}
%%%%%%%%%%%%%%%%%%%%%%%%%%
Let $G_0=(V_0,E_0)$ be a finite undirected graph with vertex set $V_0$ and edge
set $E_0$. If there is an edge between $v$ and $v'$ in the graph $G_0$, 
we write $(v\sim v')\in E_0$. 
We consider the sigma model on the graph $\G$ obtained 
by putting infinitely many copies of $G_0$ in a row. 
Let $G_n=(V_n,E_n)$ be the copy of $G_0$ at level $n\in\Z$. 
More precisely, 
$\G=(V,E)$ has vertex set $V:=\Z\times V_0$ and  
edge set  $E=\bigcup_{n\in \mathbb{Z}} (E_n\cup E_{n+1/2})$, where 
\begin{align}
E_n:=\{ e_n:=((n,v)\sim (n,v')):\; 
e=(v\sim v')\in E_0\}
\end{align}
is the set of ``vertical'' edges in $G_n$, connecting the copies at level $n$ 
of the vertices $v$ and $v'$, for any edge $(v\sim v')\in E_{0}$, and 
\begin{align}
E_{n+1/2}:=\{v_{n+1/2}:=((n,v)\sim (n+1,v)):\; v\in V_0\}
\end{align}
is the set of ``horizontal'' edges connecting each vertex in $V_n$ with 
its copy in $V_{n+1}$. We say that an edge in $E_{n+1/2}$ is  at level
$n+1/2$. 
Note that, in the special case when $G_{0}$ is a finite segment of $\mathbb{Z}$, 
the graph we obtain is an infinite strip. In this case  the edges $e_{n}$
are vertical lines while the edges $v_{n+1/2}$ are horizontal  lines.
Hence, the names above. 

For $\ul,\ol\in\N$, we set $L:=(-\ul,\ol)$ 
and consider the finite piece $\G_L=(V_L,E_L)$ of $\G$ with vertex set 
$V_{L}:=\{-\ul,\ldots,\ol\}\times V_0$. 
Let $p$ be a fixed vertex in $V_0$. 
We abbreviate $\0:=(0,p)$. The site $\0$ in $G_0$ will be used 
as pinning point (hence the name we chose). 
To each site $j\in V_{L}$ we associate the 
real variables $t_{j}$ and $s_{j}$. 
We abbreviate $t=(t_i)_{i\in V_L}$ and $s=(s_i)_{i\in V_L}$ and  
let $\nabla t=(t_i-t_j)_{i,j\in V_L}$ denote the vector of gradients
of the $t$ variables. 
We introduce the probability measure\footnote{This measure is normalized to one by
supersymmetry, see   \cite[Sect. 4]{disertori-spencer-zirnbauer2010}. The factor $2\pi $
comes from integrating over the fermionic variables. Alternatively
one may notice that this is the mixing measure for a VRJP hence it is normalized to one.}
\begin{equation}
\label{eq:tmeasure}
d\mu^{\0 }_L  (t,s) = 
\prod_{j\in V_L} \frac{dt_{j}ds_{j}e^{-t_{j}}}{2\pi }
e^{-F_L(\nabla t)} e^{-\frac{1}{2}[s,A_L(t) s]} 
\det  [A_L(t)+ \widehat\varepsilon\, ]\, 
e^{- M (t_{\0},s_{\0})} ,
\end{equation}
where $dt_{j}$ and $ds_{j}$ denote the Lebesgue measure on $\R$,  
$A_L (t)=(A_L(t)_{ij})_{i,j\in V_L}$ 
is the positive definite matrix  defined by
\begin{equation}\label{eq:Mmatrix}
A_L(t)_{ij} = 
\left\{\begin{array}{ll}
-\beta_{ij}e^{t_{i}+t_{j}}   & \text{ if }i\sim j,\\
\displaystyle\sum_{k: k\sim j} \beta_{kj}  e^{t_{k}+t_{j}}
 & \text{ if }i=j, \\
0  & \text{ otherwise,}
\end{array}\right.
\end{equation}
and $\widehat\varepsilon$ is the diagonal matrix with entries 
\begin{align}
\widehat\varepsilon_{ij} = \delta_{i\0}\delta_{j\0}\varepsilon e^{t_{\0}} 
\quad\text{ for }i,j\in V_L.
\end{align}
The arguments in the exponent 
are defined by
\begin{align}
F_L (\nabla t)  &= \sum_{(i\sim j)\in E_L } \beta_{ij}   
(\cosh(t_i - t_j)-1), \\
[s,A_L(t) s] &=  
\sum_{(i\sim j)\in E_L } \beta_{ij} (s_i-s_j)^{2} 
e^{t_{i}+t_{j}},\\
M(t_{\0},s_{\0}) &=  \varepsilon \left[  \cosh t_\0 -1  + 
 \frac{s_{\0}^{2}}{2} e^{t_{\0}}\right], 
\end{align} 
where $\varepsilon, \left(\beta_{ij} \right)_{ (i\sim j)\in E_L}$
are positive fixed weights. In the remainder of this article, we 
consider only translation invariant weights:
\begin{align}\label{betadef0} 
\beta_{e_{n}}= \beta_{e_{0}} \quad \forall e\in E_0 \quad\text{ and }\quad
\beta_{v_{n+1/2}}=\beta_{v_{1/2}} \quad \forall v\in V_0,\ \forall n\in\Z. 
\end{align}
Therefore, we can recover $\beta_e$ for all $e\in E_L$ from 
\begin{align}
\vec\beta:=(\beta_e)_{e\in E_0\cup E_{1/2}} . 
\label{betadef}\end{align}

\subsection{The main result} 
%%%%%%%%%%%%%%%%%%%%%%%%%%
With the above definitions we can now state the main result of the paper.

\begin{theorem}\label{th1} 
There exist constants $\czehn,\celf>0$ depending only on 
$G_0$ and $\vec\beta$ such that for all $L=(-\ul,\ol)$ and $l$ with
$-\ul\le l\le\ol$, one has 
\begin{align}
\label{claim-main-thm}
\mathbb{E}_{\mu^{\0 }_L}\left[ e^{\frac{t_{\ell}-t_{\0}}{2}} \right]\leq 
\czehn e^{-\celf l},
\end{align}
where $\ell:=(l,p)$ denotes the copy of the pinning point $p$ at level $l$.
The estimate holds uniformly in $L$.  
Moreover, there exists a probability measure
$\mu^{\0 }_\infty$ on $\R^{V}\times \R^{V}$ such that for any bounded observable $\mathcal{O}$ 
depending only on finitely many $t_{i},s_{i}$ we have
$\mathbb{E}_{\mu^{\0 }_L}[\mathcal{O}]\to \mathbb{E}_{\mu^{\0 }_\infty}[\mathcal{O}]$
as $L=(-\ul,\ol)\to (-\infty,+\infty)$.
\end{theorem}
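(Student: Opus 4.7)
The plan is to prove the exponential estimate \eqref{claim-main-thm} by combining a Mermin--Wagner type deformation in the $t$ variables with a transfer operator argument that relies only on reflection symmetry and compactness, and to deduce the existence of the infinite-volume limit from the resulting uniform-in-$L$ bounds by a tightness plus consistency argument.

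First I would turn the measure \eqref{eq:tmeasure} into something local in the strip direction. The only non-local ingredient is $\det[A_L(t)+\widehat\varepsilon\,]$, which by the matrix--tree theorem can be rewritten as
\begin{equation*}
\det[A_L(t)+\widehat\varepsilon\,] \;=\; \varepsilon\,e^{t_{\0}}\sum_{T}\prod_{(i\sim j)\in T}\beta_{ij}\,e^{t_i+t_j},
\end{equation*}
the sum running over spanning trees $T$ of $\G_L$. Inserting this expansion turns the full Boltzmann weight into a sum of products of non-negative factors supported on single edges of $\G_L$. Grouping these factors level by level, one can then define a transfer operator $\mathcal{T}$ mapping functions on the ``slab state'' at level $n$ (comprising the real variables $(t_{(n,v)},s_{(n,v)})_{v\in V_0}$ together with the combinatorial data telling which horizontal tree edges cross from level $n$ to level $n+1$ and how the tree connects inside the copy $G_n$) to functions on the slab state at level $n+1$. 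The partition function of $\mu^{\0}_L$ and the expectation in \eqref{claim-main-thm} then become matrix elements of iterates of $\mathcal{T}$.

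Next I would perform the deformation. For a function $\phi:V_L\to\R$ with $\phi_{\0}=0$, the shift $t\mapsto t+\phi$ leaves $\prod_j dt_j$ invariant and produces a computable change in each factor of the Boltzmann weight. Taking $\phi$ to be a cutoff profile vanishing on one side of $\0$ and rising to the value $\tfrac{1}{2}$ at $\ell$, so as to generate the observable $e^{(t_\ell-t_{\0})/2}$ after the change of variables, one obtains an identity schematically of the form
\begin{equation*}
\mathbb{E}_{\mu^{\0}_L}\!\left[e^{(t_\ell-t_{\0})/2}\right]
=\mathbb{E}_{\mu^{\0}_L}\!\left[\prod_{e\in E_L}R_e(t,s;\phi)\right],
\end{equation*}
where each local factor $R_e$ depends only on the variables at the endpoints of $e$. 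To keep the $R_e$ bounded I would multiply by a cutoff selecting configurations with edge-gradients $|t_i-t_j|\le c$, after expanding the $\cosh(\nabla t+\nabla\phi)-\cosh(\nabla t)$ and quadratic-in-$s$ corrections explicitly; the complementary event is controlled by a direct estimate using the quadratic penalty of $F_L$ near zero and its linear growth at infinity.

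Finally the transfer operator enters in a ``soft'' way. Each bounded local correction factor, together with the cutoff, extends $\mathcal{T}$ to a perturbed transfer operator $\mathcal{T}_\phi$, and the relevant expectation becomes a ratio of traces of iterates of $\mathcal{T}$ and $\mathcal{T}_\phi$. Exploiting the reflection symmetry in the strip direction that comes from the periodicity \eqref{betadef0}, together with compactness of the slab state space after the gradient cutoff (thanks to the exponential tails provided by the $e^{-t_j}$ factor and the tree weights), one shows, in the spirit of the previous work of two of us on LERRW, that these ratios are bounded uniformly in $L$ and yield a contraction factor strictly less than one per level step, which produces the exponential decay in \eqref{claim-main-thm}. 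The infinite-volume measure $\mu^{\0}_\infty$ is then obtained by combining this exponential bound (which gives tightness of restrictions of $\mu^{\0}_L$ to any finite window) with a DLR-type consistency argument based on the same transfer operator. The main obstacle I anticipate is matching the deformation against the tree expansion: spanning trees are global combinatorial objects, so to close the reflection--compactness argument one has to verify that the slab-by-slab representation records enough information about how each tree weight responds to the shift $t\mapsto t+\phi$, while keeping the slab state space manageable.
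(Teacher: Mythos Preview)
Your proposal has the right list of ingredients---matrix--tree expansion, a deformation in the $t$-variables with a gradient cutoff, local encoding of spanning trees, and a transfer operator exploiting reflection symmetry---but the mechanism that actually produces the exponential decay is missing, and the one you sketch does not close.

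The paper does \emph{not} try to write $\mathbb{E}[e^{(t_\ell-t_\0)/2}]$ as an expectation of a product of bounded local correction factors $R_e$ and then extract a ``contraction factor $<1$ per level step'' from a perturbed transfer operator. Instead it treats $Z=\mathbb{E}[e^{(t_\ell-t_\0)/2}]$ as the normalising constant of an \emph{interpolated} probability measure and uses the Gibbs variational inequality
\[
\ln Z \;\le\; \mathbb{E}_{\Pi}\!\left[\tfrac{t_\ell-t_\0}{2}\right]
\;+\;\mathbb{E}_{\Pi}\!\left[\ln\frac{d\Pi}{d\mathbb{P}^{\0\ell}_L}\right]
\]
for \emph{any} probability measure $\Pi$ absolutely continuous with respect to the interpolated measure. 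The deformation is then a one-parameter family $\Pi_\alpha=\xi_\alpha[\mathbb{P}^{\0\ell}_L]$, where $\xi_\alpha$ shifts each backbone gradient $\nabla t^\bb_{p_{n+1/2}}$ on the segment $0\le n<l$ by $\alpha\chi_{n+1/2}$, with the cutoff built into the deformation itself. The point is that the entropy term vanishes to second order at $\alpha=0$ (since $\Pi_0$ is the interpolated measure), so a Taylor expansion gives entropy $\le c_7\alpha^2 l$; while the energy term is shown, via the transfer operator and reflection symmetry, to equal $\alpha l\,c_{10}+O(1)$ with $c_{10}>0$, the symmetry killing the $\alpha=0$ contribution. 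Choosing $\alpha<0$ small then gives $\ln Z\le -c\,l+O(1)$. Your change of variables $t\mapsto t+\phi$ with $\phi$ rising to $1/2$ does not by itself generate the observable in the way you describe, and more importantly you give no argument for why the resulting local factors, or the perturbed transfer operator, would produce a strict contraction; in the paper the decay comes from the \emph{sign} of the linear-in-$\alpha$ energy term after the variational bound, not from any spectral gap of a perturbed kernel. You correctly anticipate the difficulty with localising the tree (this is the content of the paper's Section~5 on local tree variables), but that machinery is used only to express the energy term as a sum of transfer-operator matrix elements, not to manufacture a contraction.

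For the infinite-volume limit, the paper again uses the transfer operator directly: expectations of local observables are ratios of scalar products involving $\hat\Psi^m_{\mathrm{left}/\mathrm{right}}$, and Perron--Frobenius gives $\hat\Psi^m_{\mathrm{left}/\mathrm{right}}\to\hat\Phi^{\mp}_{\mathrm{left}/\mathrm{right}}$, so the limit exists and Kolmogorov consistency finishes. A tightness plus DLR argument as you propose may also work, but it is not what is done here and would require separate control of conditional distributions.
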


Using this result we can derive several properties of the VRJP.
Let $\G_\rho$ denote the graph $\G$ with the additional vertex
$\rho$, that is connected only to $\0$. 

\begin{corollary}
\label{cor-vrjp}
The discrete time process associated to the vertex reinforced jump process on the infinite graph 
$\G_\rho$ is a mixture of positive recurrent irreducible reversible 
Markov chains for any translation invariant beta as in \eqref{betadef0} above.
The mixing measure for the random weights, indexed by edges $i\sim j$ in 
$\G_\rho$, is given by the joint distribution 
of $(W_{ij}(t,s)=\beta_{ij}e^{t_{i}+t_{j}})_{i\sim j}$ 
with respect to $\mu^{\0 }_\infty$; 
here we use the convention 
$W_{\rho \0} (t,s)= \beta_{\rho\0  }e^{t_{\0 }+t_{\rho }}= \varepsilon e^{t_{\0 }}$ 
with $t_{\rho }=0$ and $\beta_{\rho\0  }= \varepsilon$.
\end{corollary}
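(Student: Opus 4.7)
The strategy is to derive the mixing representation on $\G_\rho$ as the $L\to\infty$ limit of the Sabot--Tarr\`es identity \eqref{mixing} on the finite truncations $(\G_L)_\rho$, where $\rho$ is attached only to $\0$. On each such finite graph, \cite{sabot-tarres2012} gives that the discrete-time VRJP started at $\rho$ is a mixture, under $\mu^{\0}_L$, of the reversible Markov chains with conductances $W_{ij}(t,s)=\beta_{ij}e^{t_i+t_j}$; these chains are reversible by construction and irreducible since $W>0$ on the finite connected graph. I would then pass to the limit using the fact that any cylinder event for $\tilde Y$ involves only finitely many transitions and hence depends on only finitely many weights $W_e$ through the bounded, continuous product of transition probabilities $W_{i_k i_{k+1}}/\sum_{j\sim i_k}W_{i_k j}$. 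This is a bounded local functional of $(t,s)$, so the convergence assertion in Theorem~\ref{th1} transports \eqref{mixing} from $\mu^{\0}_L$ to $\mu^{\0}_\infty$ cylinder-wise, yielding the mixing identity on the infinite graph.

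It then remains to check that, $\mu^{\0}_\infty$-almost surely, the mixed chain on $\G_\rho$ is positive recurrent; reversibility and irreducibility on the infinite connected graph are again immediate. A reversible irreducible chain on a countable state space is positive recurrent iff its reversible measure $\pi_i=\sum_{j\sim i}W_{ij}$ has finite total mass, so it suffices to show $\sum_{(i\sim j)}\beta_{ij}e^{t_i+t_j}<\infty$ almost surely. Using reflection symmetry and Markov's inequality applied to \eqref{claim-main-thm} gives
\begin{equation*}
\mathbb{P}_{\mu^{\0}_\infty}\!\left(t_{(l,p)}-t_\0>-\celf|l|\right)
=\mathbb{P}_{\mu^{\0}_\infty}\!\left(e^{(t_{(l,p)}-t_\0)/2}>e^{-\celf|l|/2}\right)
\le \czehn\, e^{-\celf|l|/2},
\end{equation*}
which is summable in $l$. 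By Borel--Cantelli, $t_{(l,p)}\le t_\0-\celf|l|$ for all but finitely many $l$, so $\sum_l e^{t_{(l,p)}}<\infty$ a.s. To propagate this from the pinning copies to every vertex $(l,v)$ in the column $\{l\}\times V_0$ I would exploit the gradient penalty $\beta_{ij}(\cosh(t_i-t_j)-1)$ in \eqref{eq:tmeasure}, which supplies uniform exponential moment bounds for the edge increments $t_i-t_j$; since $V_0$ is finite, iterating Borel--Cantelli along an intra-column path gives $t_{(l,v)}\le t_\0-\tfrac{\celf}{2}|l|+O(1)$ a.s.\ for $|l|$ large, and summability of the conductances follows from $e^{t_i+t_j}\le \tfrac{1}{2}(e^{2t_i}+e^{2t_j})$.

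The main obstacle I anticipate is this last intra-column propagation. The moment in \eqref{claim-main-thm} is taken at exponent $1/2$ and does not upgrade directly to a finite exponential moment of $e^{t_\ell-t_\0}$, so one has to rely on almost-sure summability via Borel--Cantelli rather than summability in expectation. A cleaner alternative would be to rerun the proof of Theorem~\ref{th1} with the pinning point $p$ replaced by an arbitrary vertex $v\in V_0$; the transfer-operator argument sketched in the introduction relies only on translation invariance along the column direction, so one expects a bound $\mathbb{E}_{\mu^{\0}_L}[e^{(t_{(l,v)}-t_\0)/2}]\le\czehn e^{-\celf|l|}$ (with possibly $v$-dependent constants) for every $v\in V_0$. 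Such a strengthening would bypass the intra-column step entirely and complete the verification that $\mu^{\0}_\infty$ is the mixing measure of a positive recurrent irreducible reversible Markov chain on $\G_\rho$.
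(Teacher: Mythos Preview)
Your derivation of the mixing representation on $\G_\rho$ is essentially identical to the paper's: both invoke the finite-speed property of $\tilde Y$ to reduce cylinder probabilities to finite volume, apply Sabot--Tarr\`es there, and pass to $\mu^\0_\infty$ via the local-observable convergence in Theorem~\ref{th1}.

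For positive recurrence the paper takes a more direct route than your two-stage Borel--Cantelli. Rather than first controlling $t_{(l,p)}$ and then propagating intra-column, it proves the uniform moment bound $\mathbb{E}_{\mu^\0_L}[e^{t_x/4}]\le \cvierzehn e^{-\celf |x|/2}$ for every $x\in V$ in one step: write $e^{t_x/4}=e^{(t_x-t_\ell)/4}e^{(t_\ell-t_\0)/4}e^{t_\0/4}$, apply Cauchy--Schwarz together with the independence of $t_\0$ from the gradients (Theorem~\ref{th:muzero}), bound the middle factor by Theorem~\ref{th1}, and handle the intra-column factor $\mathbb{E}[e^{(t_x-t_\ell)/2}]$ via exactly the cosh-penalty mechanism you propose---made precise by the supersymmetric Ward estimate \cite[Lemma~3]{disertori-spencer-zirnbauer2010}, which gives $\mathbb{E}_{\mu^\0_L}\big[\exp\tfrac12\sum_{e\in\gamma}\beta_e(\cosh\nabla t^\bb_e-1+\tfrac12(y^\bb_e)^2)\big]\le 2^{|\gamma|}$ uniformly in $L$ for any path $\gamma\subset\tback$. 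A single Borel--Cantelli then yields $\sum_x e^{t_x}<\infty$ a.s. Your route also works, with the minor correction that your intra-column Borel--Cantelli gives $|t_{(l,v)}-t_{(l,p)}|=O(\log l)$ rather than $O(1)$, since a uniform exponential moment on the increment yields tail probabilities that are constant in $l$ and summability forces a logarithmically growing threshold; this is harmless for the conclusion. The payoff of the paper's approach is that it delivers an expectation bound $\mathbb{E}[e^{t_x/4}]\lesssim e^{-c|x|}$ rather than only an almost-sure one, and this stronger statement is precisely what gets reused in the proof of Corollary~\ref{lemma-exp-loc}.
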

By standard arguments similar to the ones given in  \cite{merkl-rolles-asymptotic} for 
the linearly edge reinforced random walk case, the decay properties of $t_{i}$ as $i\to\infty $  
with respect to $\mu^{\0 }_\infty$  allow to derive several asymptotic properties of VRJP.

In the following, the level of any vertex $v= (m,x)\in V_{m}$, 
$x\in V_{0}$, is denoted by $|v|=m$. 

\begin{corollary}\label{lemma-exp-loc}
For the discrete-time process $(\tilde{Y}_n)_{n\in\N_0}$
associated to the VRJP on $\G_{\rho }$ there exist constants
$\csechs,\csieben>0$ depending only on 
$G_0$ and $\vec\beta$  such that for all $v\in V$,  one
has
\begin{equation}
\sup_{n\in\N_0}\p(\tilde{Y}_n=v)\le
\csechs e^{-\csieben|v|}.
\end{equation}
Furthermore, there exists a constant $\cacht>0$ such that 
$\p$-a.s.\
\[
\max_{k=0,\ldots,n}|\tilde{Y}_k|\le\  \cacht\log n
\quad\text{for all $n$ large enough.}
\]
\end{corollary}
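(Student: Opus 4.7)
The plan is to combine the mixture representation of Corollary~\ref{cor-vrjp} with a reversibility estimate and then extract the required exponential decay from Theorem~\ref{th1}. Conditionally on the environment $W=W(t,s)$, the process $\tilde Y$ is a positive recurrent reversible Markov chain on $\G_\rho$ starting at $\rho$, with reversing measure $\pi_v(W)=\sum_{u\sim v}W_{uv}$; since $\rho$ has the single neighbor $\0$ with weight $W_{\rho\0}=\varepsilon e^{t_\0}$, one has $\pi_\rho(W)=\varepsilon e^{t_\0}$. The standard reversibility bound $P_W^n(\rho,v)^2\le P_W^{2n}(\rho,\rho)\cdot\pi_v(W)/\pi_\rho(W)\le\pi_v(W)/\pi_\rho(W)$, together with $\sqrt{a+b}\le\sqrt a+\sqrt b$, yields after integration against $\mu_\infty^\0$
\begin{equation*}
\sup_{n\in\N_0}\p(\tilde Y_n=v)\le\frac{1}{\sqrt\varepsilon}\sum_{u\sim v}\sqrt{\beta_{uv}}\,\mathbb{E}_{\mu_\infty^\0}\!\bigl[e^{(t_u+t_v-t_\0)/2}\bigr].
\end{equation*}
So the first assertion reduces to proving that $\mathbb{E}_{\mu_\infty^\0}[e^{(t_u+t_v-t_\0)/2}]$ decays exponentially in $|v|$ uniformly over edges $u\sim v$.

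To establish this decay I would write, for $v=(m,y)$ and $\ell=(m,p)$ the copy of the pinning point at the same level,
\[
\tfrac{1}{2}(t_u+t_v-t_\0)=(t_\ell-t_\0)+(t_v-t_\ell)+\tfrac{1}{2}(t_u-t_v)+\tfrac{1}{2}t_\0,
\]
and apply H\"older's inequality with appropriate conjugate exponents to split the expectation into four factors. The factor involving $(t_\ell-t_\0)$ provides the decay in $|m|=|v|$ via Theorem~\ref{th1}. The factors coming from the gradients $(t_v-t_\ell)$ and $(t_u-t_v)$ are finite because the density in \eqref{eq:tmeasure} contains the product $e^{-F_L(\nabla t)}$ with $F_L(\nabla t)\ge\beta_{ij}(\cosh(t_i-t_j)-1)$ on every edge, which yields all exponential moments of a gradient; for the $(t_v-t_\ell)$ factor one chains such estimates along a finite path in the connected graph $G_0$ from $p$ to $y$. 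The factor involving $t_\0$ is finite because the mass term $M(t_\0,s_\0)$ contains $\varepsilon(\cosh t_\0-1)$, giving exponential moments of $t_\0$ of every order. Since all these bounds are uniform in $L$, they pass to $\mu_\infty^\0$ by the convergence statement of Theorem~\ref{th1}.

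For the almost sure logarithmic bound on the maximum, sum the pointwise estimate over the $|V_0|$ vertices at each level $m\ge R$ to obtain $\p(|\tilde Y_k|\ge R)\le c\,e^{-\csieben R}$ uniformly in $k$. A union bound gives
\[
\p\!\bigl(\max_{k\le n}|\tilde Y_k|\ge R\bigr)\le(n+1)\,c\,e^{-\csieben R}.
\]
Choosing $R=\cacht\log n$ with $\cacht>2/\csieben$, the right-hand side is summable along the dyadic sequence $n=2^K$. Borel--Cantelli then gives $\max_{k\le 2^K}|\tilde Y_k|\le\cacht K\log 2$ for all $K$ large enough, almost surely; monotonicity of the running maximum on the intervals $[2^K,2^{K+1})$ transfers this to arbitrary large $n$, at the cost of a harmless adjustment of~$\cacht$.

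The principal obstacle is the H\"older factor on $(t_\ell-t_\0)$: Theorem~\ref{th1} is stated at the specific exponent $1/2$, but H\"older forces an exponent $>1/2$ on that factor, so a mild strengthening of the theorem is required. This is not automatic, but the Mermin--Wagner-type deformation behind Theorem~\ref{th1} depends on a free amplitude parameter; rerunning the transfer-operator argument with a slightly larger amplitude yields the exponential decay of $\mathbb{E}_{\mu_\infty^\0}[e^{r(t_\ell-t_\0)}]$ for some $r>1/2$, with worse constants but the same qualitative form, which is all that the H\"older decomposition needs. Once this point is established, the transfer from $\ell$ to an arbitrary vertex $v$ at the same level, the local gradient step from $v$ to its neighbor $u$, and the Borel--Cantelli argument for the maximum are all routine.
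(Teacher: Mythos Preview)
Your overall strategy---reversibility bound, then H\"older/Cauchy--Schwarz to reduce to Theorem~\ref{th1} plus local gradient moments and a moment of $t_\0$, then Borel--Cantelli for the maximum---is exactly the paper's route. The local gradient factors and the $t_\0$ factor are handled in the paper just as you describe (see \eqref{7.5}--\eqref{7.6}), and the Borel--Cantelli step is the same.

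The real issue is your handling of the obstruction you correctly identify. You fix the reversibility exponent at $1/2$ and then propose to repair the mismatch by strengthening Theorem~\ref{th1} to an exponent $r>1/2$, claiming this follows by ``rerunning with a slightly larger amplitude''. This misidentifies which parameter is free. The deformation amplitude $\alpha$ in Section~\ref{sect:defm} controls the decay rate $\celf$, not the exponent $1/2$ in the observable $e^{(t_\ell-t_\0)/2}$. That $1/2$ is structural: it is precisely the value for which the reorganization in Theorem~\ref{interp-m} leaves the middle region free of the linear terms $\pm\tfrac12\nabla t^{\bb}_{p_{n+1/2}}$, so that $H_\mitte$ enjoys the reflection symmetry~\eqref{symmetry-Hmitte}. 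That symmetry is what forces $\sk{\Phi_\links,\tilde\K_0\Phi_\rechts}=0$ in \eqref{claim symmetry}; for $r\neq 1/2$ an extra $(r-\tfrac12)\nabla t_\hor$ appears in the middle, the reflection symmetry is lost, and the energy estimate no longer yields a term linear in $\alpha$ with zero offset. So your proposed fix does not go through as stated.

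The paper avoids the whole difficulty one step earlier. From detailed balance $\pi^W_\rho\p^W_\rho(\tilde Y_n=v)=\pi^W_v\p^W_v(\tilde Y_n=\rho)$ and the trivial bound $\p^W\le 1$ one gets, for \emph{any} $\alpha\in(0,1)$,
\[
\p^W_\rho(\tilde Y_n=v)=\Bigl[\tfrac{\pi^W_v}{\pi^W_\rho}\Bigr]^\alpha \p^W_v(\tilde Y_n=\rho)^\alpha\,\p^W_\rho(\tilde Y_n=v)^{1-\alpha}
\le \Bigl[\tfrac{\pi^W_v}{\pi^W_\rho}\Bigr]^\alpha.
\]
Choosing $\alpha=1/8$ and applying Cauchy--Schwarz after using the independence of $t_\0$ from the gradients (Theorem~\ref{th:muzero}) produces a factor $\mathbb{E}_{\mu^\0_\infty}[e^{4\alpha(t_\ell-t_\0)}]^{1/2}$ with $4\alpha=1/2$, matching Theorem~\ref{th1} exactly. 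Your own inequality already gives this: from $P^n_W(\rho,v)\le\sqrt{\pi_v/\pi_\rho}$ and $P^n_W(\rho,v)\le 1$ you get $P^n_W(\rho,v)\le(\pi_v/\pi_\rho)^\alpha$ for every $\alpha\le 1/2$, so simply take $\alpha$ small rather than trying to push Theorem~\ref{th1} beyond $1/2$.
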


\subsection{Plan of the paper and outline of the proof} 
%%%%%%%%%%%%%%%%%%%%%%%%%%

Before starting the proof we  reorganize the expressions in a 
more convenient way. We  perform a change of
coordinates, replacing the $t,s$ variables by gradient variables 
taken along a fixed tree. We also replace the determinant in \eqref{eq:tmeasure} 
by a sum over the set of spanning trees. 
The measure $\mu^{\0}_{L}$  then factors in a product of a ``pinning'' measure 
on $t_{\0},s_{\0}$ and
a ``gradient''  measure on $\nabla t, y,T$, where $y$ is a rescaling 
of $\nabla s$ and $T$ is a spanning tree. 
This is done in subsections \ref{sect:grad-tree}-\ref{sect:newmeasure}.

Now, since the quantity we want to average is strictly positive,
we can include it in the measure. The problem is then to estimate
the normalization constant of this new ``interpolated measure''. 
This in turn can be translated in the problem of  minimizing a free energy 
with respect to a set of probability measures. The argument is given in 
Lemma~\ref{measurebound}. In the next subsection we introduce a particular
deformation of the interpolated measure: the guiding principle
behind is to move closer to the minimizer without moving too far away 
from the interpolated measure, in order to exploit its symmetry properties.

Sect.\  \ref{sect:entropy}, \ref{sect:locv} and \ref{sect:energy}   
are then devoted to prove that this deformed measure
gives a sufficiently good bound to ensure exponential localization.
The free energy we need to minimize consists of two terms: an energy 
term and an entropy term. In Sect.\ \ref{sect:entropy} we derive 
an upper bound for the entropy by a second order Taylor expansion.
For the energy term though we use 
a transfer operator method in Sect.\ \ref{sect:energy}. 
To apply the transfer operator we need first to rewrite the sum over 
(global) spanning trees in terms of new local tree variables. 
This is done in Sect.\ \ref{sect:locv}. Finally Sect.\ \ref{sect:end} 
puts together all the pieces to complete
the proof of the main theorem. There we also prove the results on VRJP.

In order to be as self-contained as possible, and since  the results  
on transfer operators are somehow scattered in the literature,
 we have collected in the appendix the parts we need 
for the convenience of the reader.

\paragraph{Notation.} In the following, constants are  labelled by $c_{1},c_{2}\dotsc $. 
They keep their meaning throughout the whole paper.

%%%%%%%%%%%%%%%%%%%%%%%%%%%%%%%%%%%%%%%%%%%%%%%%%%%%%%%%%%%%%%%%%%%%%%%%%%%
%%%%%%%%%%%%%%%%%%%%%%%%%%%%%%%%%%%%%%%%%%%%%%%%%%%%%%%%%%%%%%%%%%%%%%%%%%%
%%%%%%%%%%%%%%%%%%%%%%%%%%%%%%%%%%%%%%%%%%%%%%%%%%%%%%%%%%%%%%%%%%%%%%%%%%%

\section{Reorganizing the problem}
\label{sect3}
%%%%%%%%%%%%%%%%%%%%%%%%%%%%%%%%%%%%%%%%%%%%%%%%%%%%%%%%%%%%%%%%%%%%%%%%%%%
%%%%%%%%%%%%%%%%%%%%%%%%%%%%%%%%%%%%%%%%%%%%%%%%%%%%%%%%%%%%%%%%%%%%%%%%%%%
%%%%%%%%%%%%%%%%%%%%%%%%%%%%%%%%%%%%%%%%%%%%%%%%%%%%%%%%%%%%%%%%%%%%%%%%%%%

In this section we perform a change of variables
in order to make the gradient structure of the measure $\mu_L^\0$ more explicit.
We need to introduce first a few definitions.

%%%%%%%%%%%%%%%%%%%%%%%%%%%%%%%%%%%%%%%%%%%%%%%%%%%%%%%%%%%%%%%%%%%%%%%%%%%
%%%%%%%%%%%%%%%%%%%%%%%%%%%%%%%%%%%%%%%%%%%%%%%%%%%%%%%%%%%%%%%%%%%%%%%%%%%

\subsection{Gradient and tree variables}\label{sect:grad-tree}
%%%%%%%%%%%%%%%%%%%%%%%%%%%%%%%%%%%%%%%%%%%%%%%%%%%%%%%%%%%%%%%%%%%%%%%%%%%

\paragraph{Spanning trees and backbones.}
Let $ \T_L$ denote the set of spanning trees on $\G_L$. In the following, 
we write $\T$ instead of $\T_L$, unless there is a risk of confusion.
For each tree in $\T_{L}$ we define the  {\it backbone of $T$}, denoted by   $B(T)$,
as the unique path in $T$ connecting $\uroot= (-\ul,p)$ to  $\oroot= (\ol,p)$.
Moreover, we denote by $B^{c}(T)$ the set of edges in the complement of $B(T)$ 
inside $T$: these are the branches of the tree. Then $T=B (T)\cup B^{c} (T)$.
See Fig.\ref{fig0} for an example. 

We use a fixed reference tree defined in the following way.
 Let $B$ be the set of horizontal edges in $\bigcup_{n\in \Z}E_{n+1/2}$
connecting all copies of $\0 $: 
\begin{equation}\label{backbone}
B= \{p_{n+1/2}: -\ul\leq n\leq \ol-1,\, n\in \Z \}.
\end{equation}
We call this line the {\it backbone}.
Let $S$ be a fixed spanning tree of the finite graph $G_0$. 
We  define the {\it backbone tree} $T^{\bb }_{L}$ to be the spanning tree of $\G_L$
consisting of $\ul+1+\ol$ copies of the spanning tree $S$ 
which are just connected by the horizontal edges in $B$. 
In the following, 
we write $\tback $ instead of $\tback_{L}$, 
unless there is a risk of confusion.
With these definitions we have $B(\tback )=B$ (see Fig.\ref{fig0}).

\paragraph{Orienting the edges.}
We assign to every edge $e=(i\sim j)$ an arbitrary 
orientation from $i$ to $j$ for bookkeeping reasons only.
We define the {\it oriented gradient}
\begin{equation}\label{tydef}
 \nabla t_e=\nabla t_{i,j} := t_{j}-t_{i}, 
\qquad 
\nabla y_e=y_{i,j}:= (s_{j}-s_{i})e^{ \frac{t_{i}+t_{j}}{2}}.
\end{equation}
We will mostly use the notation $\nabla t_e,y_e$ as an argument of 
some even function, where the orientation (hence the sign) will not matter.

Let $T\in \T$ be an arbitrary spanning tree. 
In the following, any tree $T\in\T$ is oriented away from the root $\uroot$.
For each edge $e\in T$, we denote its endpoints by $i_{e,T}$ and $j_{e,T}$ such 
that the orientation of $e$ in $T$ goes from $i_{e,T}$ towards  $j_{e,T}$. 
Then we define the {\it oriented gradient along the tree $T$} as 
\begin{align}\label{gradtdef}
&  \nabla t^{T}_e:= t_{j_{e,T}}-t_{i_{e,T}}, \quad  
 y_e^{T}:=(s_{j_{e,T}}-s_{i_{e,T}})e^{ \frac{t_{i_{e,T}}+t_{j_{e,T}}  }{2}}\quad
\mbox{if} \ e\in T\\
 & \nabla t^{T}_e=    y_e^{T} := 0 \quad \mbox{if} \ e\not\in T. \nonumber
\end{align}
Of particular interest is $T=\tback $. As the other trees, 
the backbone tree is always oriented  away from the 
point $\uroot$. This corresponds to orient all edges in the spanning tree 
$S$ of $G_{0}$ away from the pinning point $\0 = (0,p)$ 
(likewise orient all edges
in the $n$-th copy of $S$ away from $(n,p)$) and orient each edge $p_{n+1/2}$ 
on the backbone $B$
from $(p,n)$ towards $(p,n+1)$. In this case, we abbreviate 
$i_e=i_{e,\tback}$ and $j_e=j_{e,\tback}$. 
\begin{figure}
\centerline{\psfig{figure=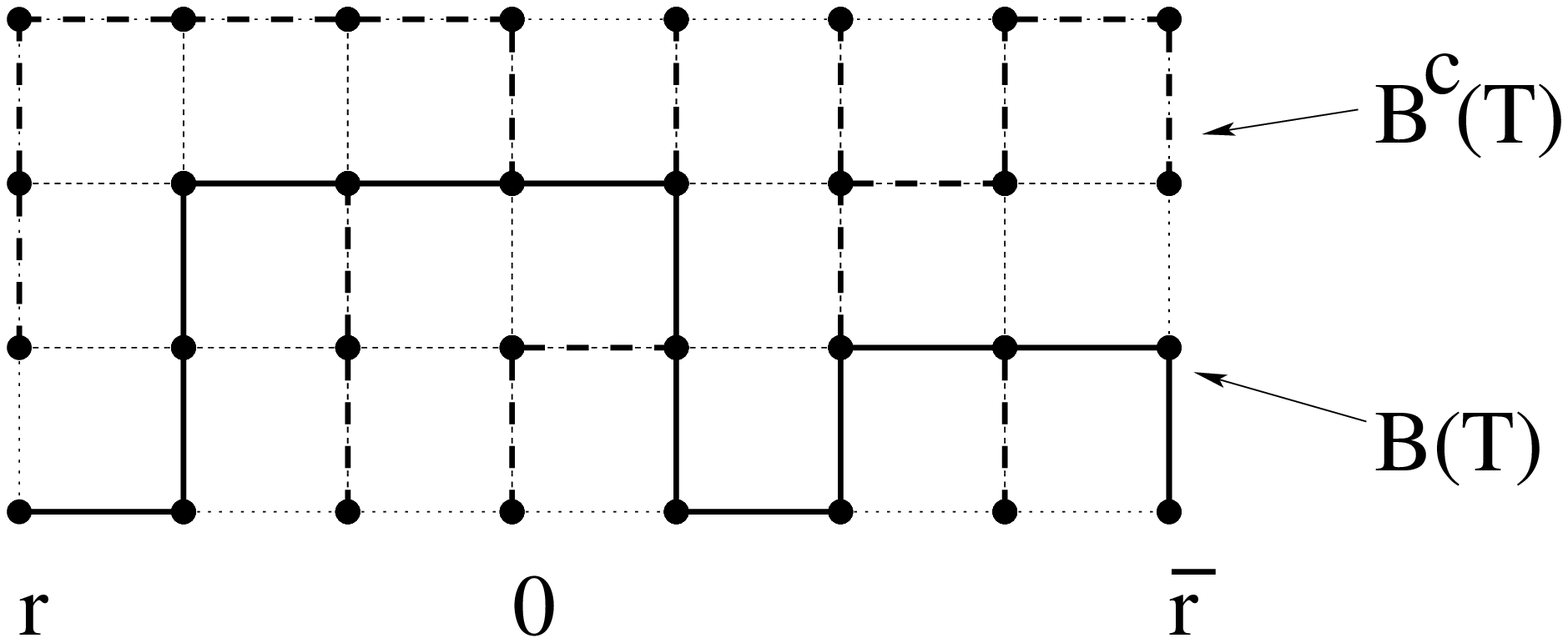 ,width=7cm}\hspace{0.8cm} \psfig{figure=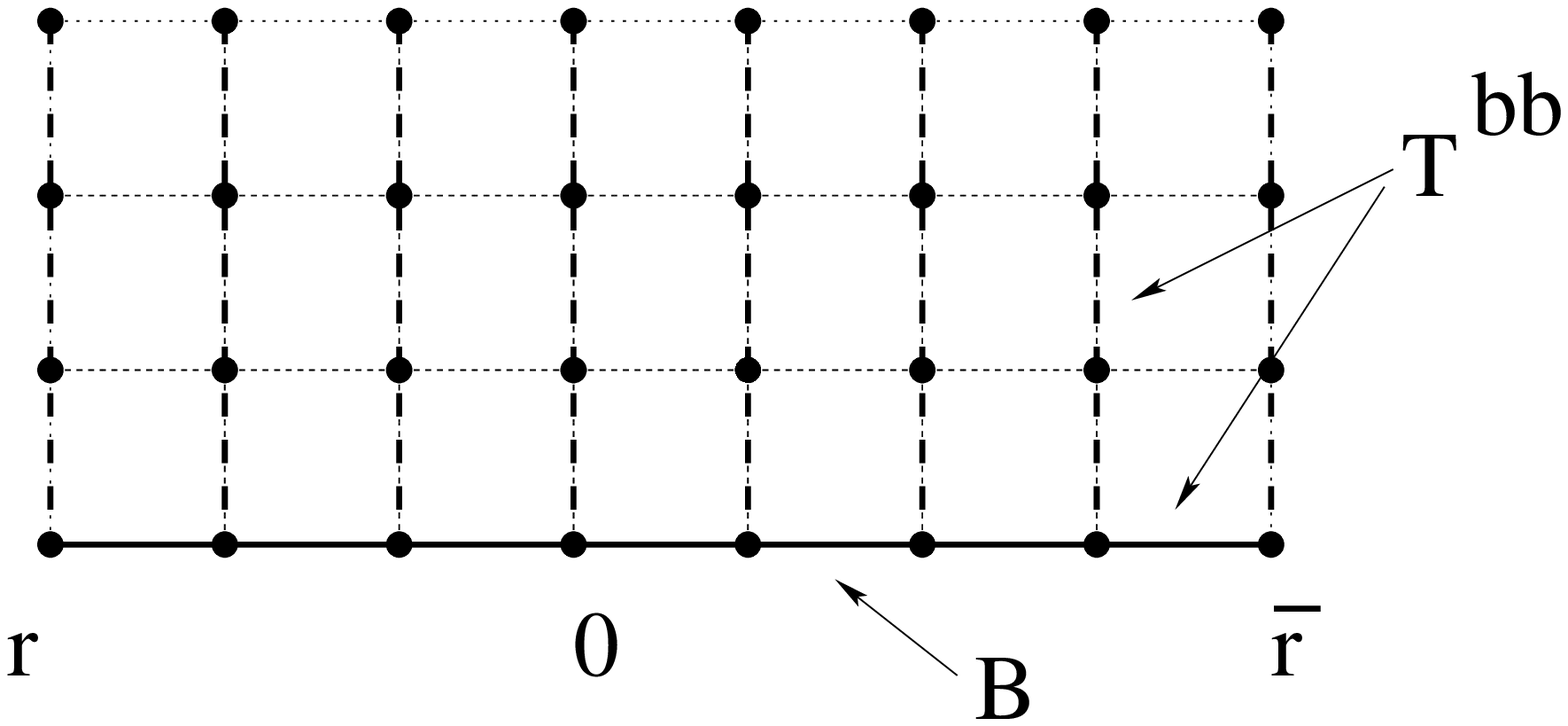 ,width=7cm}}
\caption{(a) an example of spanning tree $T$ with the corresponding sets $B(T)$ and $B^{c}(T)$; (b) 
the backbone tree $\tback$ with its backbone $B$.}
\label{fig0}
\end{figure}

\paragraph{Gradient variables.} In the following we replace 
$\mu_L^\0$ by a measure depending on  the set of spanning trees $\T $ 
defined above 
plus the set of {\it oriented gradients along the backbone tree}
\begin{equation}
 \nabla t_\bb=\ (\nabla t^{\bb }_e)_{e\in\tback}:=  
(\nabla t^{\tback  }_e)_{e\in\tback}, 
\qquad 
y_\bb= (y^{\bb}_e)_{e\in\tback}  :=(y^{\tback }_e)_{e\in\tback}.
\end{equation}
 Let 
\begin{align}
\Omega_L:=\R^{\tback}\times\R^{\tback}\quad\text{and}\quad
\Bomega_L:=\Omega_L\times\T_L
\end{align}
denote the set of all possible values of 
$\vec{\omega}:=(\nabla t_{\bb }, y_{\bb})$
and $\vec{\bomega}:=(\vec\omega,T)$, respectively. Finally we call 
$\omega_{n}$ (and $\omega_{n+1/2}$, respectively) 
the set of gradient variables associated to
``vertical edges'' $e\in  S_{n}$ at the  $n$-th level  (and
the gradient variables
associated to the unique horizontal edge in $\tback$ at level $n+1/2$,
respectively):
\begin{align}
\omega_n:=&(\omega_e)_{e\in S_{n}}:=
(\nabla t^{\bb }_e,y_e^{\bb})_{e\in S_{n}},\quad n=-\ul,\ldots,\ol,
\label{leveli}\\
\omega_{n+1/2}:=& (\nabla t^{\bb }_{p_{n+1/2}},y_{p_{n+1/2}}^{\bb}),
\qquad\qquad  n=-\ul,\ldots,\ol-1.\nonumber
\end{align}
All vertical variables $\omega_{n}$, $ n=-\ul,\ldots,\ol,$ belong to the same set  
$\Omega_{\vertical}=\R^{S}\times\R^{S} $, all horizontal variables  $\omega_{n+1/2}$,
$n=-\ul,\ldots,\ol-1$ belong to the same set  $\Omega_{\hor}=\R\times\R$.
Oriented gradient variables from \eqref{gradtdef} can be viewed as functions 
of gradient variables along the backbone tree as described in the 
following lemma. 

\begin{lemma}\label{lemma3.1}
Let $i,j\in V_L$ be two vertices. We can write $t_j-t_i$ as a function of 
$\nabla t_\bb$ as follows: 
\begin{align}
\label{repr-nabla-t}
t_{j}-t_{i}=
 \sum_{e'\in\gamma^{ij}_{\tback } }\ \nabla t^{\bb }_{e'}  
\left[\one_{\{e'\in  \gamma^{rj}_{\tback } \}}- \one_{\{e'\in \gamma^{ri}_{\tback } \}}  
\right], 
\end{align}
where $\gamma^{ij}_{\tback }$ is the unique path on the backbone tree connecting
the vertices $i$ and $j$, and  $ \one_{\{e\in   \gamma^{ij}_{\tback }\}}$ is the corresponding
indicator function.  For $(i\sim j)\in E_{n}$ 
the path $\gamma^{ij}_{\tback }$
is completely inside $S_{n}$. On the other hand, for $(i\sim j)\in E_{n+1/2}$ 
the path
uses only the edge $p_{n+1/2}$  and edges in $\tback $ at levels $n$ and $n+1$.

Finally, for any edge $e=(i\sim j)\in E$ directed  from 
$i$ to $j$, we have: 
\begin{align}
\label{nablat1}
&  y_{i,j} 
=  \sum_{e'\in \gamma^{ij}_{\tback }}  Y_{e'}, \quad \mbox{where}\quad 
Y_{e'}=Y_{e'}(\nabla t_{\bb },y_{\bb })
\quad\mbox{is given by}
\\ \nonumber 
& Y_{e'}= y_{e'}^\bb \left[  \one_{\{e'\in \gamma^{rj}_{\tback }\}}- 
\one_{\{e'\in\gamma^{ri}_{\tback }  \}}
\right]
\exp\Bigg\{\frac12\sum_{e''\in \gamma^{ij}_{\tback }\backslash \{e' \}} 
\nabla t^{\bb }_{e''}\left[1-2\cdot 
\one_{\{ e''\in \gamma^{ri_{e'}}_{\tback } \}} \right]\Bigg\}  
\end{align}
\end{lemma}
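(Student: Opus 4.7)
The plan is to treat all three assertions as consequences of the tree structure of $\tback$: between any two vertices $i,j\in V_L$ there is a unique path $\gamma^{ij}_\tback$ in $\tback$, and all gradient-type quantities telescope along that path. I would establish the three assertions in sequence.

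For the identity \eqref{repr-nabla-t}, I would telescope $t_j-t_i$ along $\gamma^{ij}_\tback$: each edge $e'$ on the path is traversed either with or against its tree orientation, contributing $\pm\nabla t^\bb_{e'}$ accordingly. The sign is $+1$ exactly when $e'$ lies on the ``downhill'' (away-from-root) portion of the path, equivalently $e'\in\gamma^{rj}_\tback\setminus\gamma^{ri}_\tback$, and $-1$ on the ``uphill'' portion, $e'\in\gamma^{ri}_\tback\setminus\gamma^{rj}_\tback$. Since an edge of $\tback$ belongs to $\gamma^{ij}_\tback$ precisely when it lies in exactly one of $\gamma^{rj}_\tback,\gamma^{ri}_\tback$, the signed indicator $\one_{\{e'\in\gamma^{rj}_\tback\}}-\one_{\{e'\in\gamma^{ri}_\tback\}}$ accounts for both cases at once and automatically vanishes off the path.

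The assertions about which levels $\gamma^{ij}_\tback$ visits are purely structural. By construction $\tback$ consists of copies $S_n$ of the spanning tree $S$ of $G_0$ glued only at the pinning points through the backbone edges $p_{n+1/2}$. A simple path between two vertices at the same level $n$ cannot leave level $n$, since crossing a backbone edge would force a later recrossing and violate simplicity; so it lies entirely in $S_n$. For a horizontal edge $(i\sim j)\in E_{n+1/2}$ with $i=(n,v),j=(n+1,v)$, the path must change level exactly once, which forces the use of $p_{n+1/2}$ and confines the remaining pieces to $S_n$ and $S_{n+1}$.

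For the representation of $y_{i,j}$, I would telescope $s_j-s_i$ along $\gamma^{ij}_\tback$ in the same way, using $s_{j_{e''}}-s_{i_{e''}}=y^\bb_{e''}e^{-(t_{i_{e''}}+t_{j_{e''}})/2}$, and then absorb the prefactor $e^{(t_i+t_j)/2}$ into each term. The key algebraic step is to verify
\[
\tfrac12(t_i+t_j)-\tfrac12(t_{i_{e'}}+t_{j_{e'}})=\tfrac12\sum_{e''\in\gamma^{ij}_\tback\setminus\{e'\}}\nabla t^\bb_{e''}\bigl[1-2\one_{\{e''\in\gamma^{ri_{e'}}_\tback\}}\bigr].
\]
Expanding the left-hand side with \eqref{repr-nabla-t} reduces this to an identity of integer coefficients at each $e''\in\gamma^{ij}_\tback\setminus\{e'\}$: the coefficient of $\nabla t^\bb_{e''}$ is $\one_{\{e''\in\gamma^{ri}\}}+\one_{\{e''\in\gamma^{rj}\}}-\one_{\{e''\in\gamma^{ri_{e'}}\}}-\one_{\{e''\in\gamma^{rj_{e'}}\}}$. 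Using $\gamma^{rj_{e'}}_\tback=\gamma^{ri_{e'}}_\tback\cup\{e'\}$ one rewrites the last two terms as $2\one_{\{e''\in\gamma^{ri_{e'}}\}}$ (for $e''\neq e'$), and combining with the already-noted fact that $\one_{\{e''\in\gamma^{ri}\}}+\one_{\{e''\in\gamma^{rj}\}}=1$ on $\gamma^{ij}_\tback\setminus\{e'\}$ yields exactly $1-2\one_{\{e''\in\gamma^{ri_{e'}}\}}$. I expect the only real obstacle to be notational bookkeeping of the various root-to-vertex paths and their overlaps; there is no analytic difficulty, since the whole argument is a finite combinatorial verification on a tree.
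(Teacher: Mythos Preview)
Your proposal is correct and follows essentially the same route as the paper: telescoping along $\gamma^{ij}_{\tback}$, identifying the sign of each contribution via the root-to-vertex paths, and then reducing the exponent in $Y_{e'}$ to the combinatorial identity $\one_{\{e''\in\gamma^{ri}\}}+\one_{\{e''\in\gamma^{rj}\}}-\one_{\{e''\in\gamma^{ri_{e'}}\}}-\one_{\{e''\in\gamma^{rj_{e'}}\}}=1-2\one_{\{e''\in\gamma^{ri_{e'}}\}}$ for $e''\in\gamma^{ij}_{\tback}\setminus\{e'\}$, using $\gamma^{rj_{e'}}_{\tback}=\gamma^{ri_{e'}}_{\tback}\cup\{e'\}$. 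The only cosmetic difference is that the paper first writes each $t_i$ (and $s_i$) relative to the reference point $\0$ and then takes differences, whereas you telescope directly along $\gamma^{ij}_{\tback}$; the resulting computation is identical.
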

\begin{proof}
Every $t_{i}$ can be expressed 
in terms of $t_{\0 }$ and $\nabla t_{\bb }$: 
\begin{align}\label{ti}
t_{i}= t_{\0 } + (t_{i}-t_{r})- (t_{\0 }-t_{r}) 
&=  t_{\0 }
+\sum_{e\in \tback } \nabla t_e^{\bb } \left[ \one_{\{  e\in \gamma^{ri}_{\tback } \} }-  
\one_{\{  e\in\gamma^{r\0 }_{\tback   }  \}}  \right]
\end{align}
Therefore, for any vertices $i$ and $j$, the difference $t_j-t_i$ 
is a function of  $\nabla t_{\bb }$ only: 
\begin{align}
t_{j} - t_{i} =  \sum_{e\in \tback } \nabla t_e^{\bb }
\left[ \one_{\{e\in  \gamma^{rj}_{\tback } \}}-  
\one_{\{ e\in \gamma^{ri }_{\tback   } \}}  \right] 
=   \sum_{e\in \gamma^{ij}_{\tback } } \nabla t_e^{\bb }
 \left[ \one_{\{e\in  \gamma^{rj}_{\tback } \}}-  
\one_{\{ e\in \gamma^{ri }_{\tback   } \}}  \right], 
\label{grad}\end{align}
where we used $ \one_{\{e\in  \gamma^{rj}_{\tback } \}}-  
\one_{\{ e\in \gamma^{ri  }_{\tback   } \}}=0$ when $e\not\in \gamma^{ij }_{\tback}$.
In the same way $s_{j}$ can be expressed 
in terms of $t_{\0 },s_{\0 }, \nabla t_{\bb },y_{\bb }$:
\begin{equation}\label{si}
s_{i }= s_{\0 }
+\sum_{e\in \tback    } y_e^{\bb } e^{-\frac{t_{i_e}+t_{j_e}  }{2}} 
\left[ \one_{ \{ e\in \gamma^{ri}_{\tback } \}}-  
\one_{\{  e\in \gamma^{r\0 }_{\tback   } \}}  \right],
\end{equation}
where for each $e\in \tback $, 
\[
\frac{t_{i_e}+t_{j_e}  }{2}=
t_{\0 } + \frac{1}{2}
\sum_{e'\in \tback } \nabla t_{e'}^{\bb } 
\left[ \one_{\{e'\in \gamma^{ri_e}_{\tback } \} }+ 
\one_{ \{e'\in\gamma^{rj_e}_{\tback }  \}} - 2 \cdot  
\one_{\{ e'\in \gamma^{r\0 }_{\tback   } \}}  \right].
\]
For any edge $e=(i\sim j)$ directed from 
$i$ to $j$,  $y_{i,j}$ 
is a function of  $\nabla t_{\bb }$ and $y_{\bb }$ only: 
\begin{align}
&y_{i,j} = 
(s_{j}-s_{i}) e^{\frac{t_{i}+t_{j}}{2}}  = 
\sum_{e'\in  \gamma^{ij}_{\tback } } y_{e'}^{\bb } 
e^{\frac{t_{i}+t_{j}-t_{i_{e'}}-t_{j_{e'}}  }{2}} 
 \left[ \one_{\{e'\in  \gamma^{rj}_{\tback } \}}-  
\one_{\{ e'\in \gamma^{ri }_{\tback   } \}}  \right].
\end{align}
 The argument in the exponent is
\begin{align*}
&t_{i}+t_{j}-t_{i_{e'}}-t_{j_{e'}}  =
  \sum_{e''\in \gamma^{ij}_{\tback } } \nabla t_{e''}^{\bb }
\left[ \one_{\{ e''\in \gamma^{ri}_{\tback } \} } +  \one_{\{e''\in\gamma^{rj}_{\tback }  \} }
-  \one_{\{e''\in  \gamma^{ri_{e'}}_{\tback } \}} 
-  \one_{\{ e''\in \gamma^{rj_{e'}}_{\tback } \} }  \right],
\end{align*}
where  we used \eqref{grad}, $\gamma^{i_{e'}j_{e'}}_{\tback }\subset \gamma^{ij}_{\tback },$ and 
\[
\one_{\{ e''\in \gamma^{ri}_{\tback } \} } +  \one_{\{e''\in\gamma^{rj}_{\tback }  \} }
-  \one_{\{e''\in  \gamma^{ri_{e'}}_{\tback } \}} 
-  \one_{\{ e''\in \gamma^{rj_{e'}}_{\tback } \} }=0 \quad  \mbox{when}\quad  
e''\not\in \gamma^{ij}_{\tback}.
\]
Now we remark that
\[
( \gamma^{ri}_{\tback }\cap  \gamma^{rj}_{\tback })\cap  \gamma^{ij}_{\tback }=\emptyset,
\qquad  \gamma^{ij}_{\tback }= (\gamma^{ri}_{\tback }\cap \gamma^{ij}_{\tback } )
\cup  (\gamma^{rj}_{\tback }\cap \gamma^{ij}_{\tback } ).
\]
Then for each $e'\in \gamma^{ij}_{\tback } $ we have
\begin{align*}
&t_{i}+t_{j}-t_{i_{e'}}-t_{j_{e'}}  =
  \sum_{e''\in \gamma^{ij}_{\tback } } \nabla t_{e''}^{\bb }
\left[1
-  \one_{\{e''\in  \gamma^{ri_{e'}}_{\tback } \}} 
-  \one_{\{ e''\in \gamma^{rj_{e'}}_{\tback } \} }  \right]
\cr
&\qquad= \sum_{e''\in \gamma^{ij}_{\tback } } \nabla t_{e''}^{\bb }
\left[1
- 2\cdot  \one_{\{e''\in  \gamma^{ri_{e'}}_{\tback } \}} - \one_{\{ e''=e' \} }  \right]
= 
  \sum_{e''\in \gamma^{ij}_{\tback }\backslash \{e' \} } \hspace{-0.2cm} \nabla t_{e''}^{\bb }
\left[1
- 2\cdot  \one_{\{e''\in  \gamma^{ri_{e'}}_{\tback } \}}   \right]
\end{align*}
where we used  
$ \one_{\{ e''\in \gamma^{rj_{e'}}_{\tback }\}} = \one_{\{ e''\in \gamma^{ri_{e'}}_{\tback }\}}
+ \one_{\{ e''=e' \} }$.
This completes the proof of \eqref{nablat1}.
\end{proof}

\smallskip

%%%%%%%%%%%%%%%%%%%%%%%%%%%%%%%%%%%%%%%%%%%%%%%%%%%%%%%%%%%%%%%%%%%%%%%%%%%
%%%%%%%%%%%%%%%%%%%%%%%%%%%%%%%%%%%%%%%%%%%%%%%%%%%%%%%%%%%%%%%%%%%%%%%%%%%

\subsection{The measure $\mu_L^\0 $}\label{sect:newmeasure}
%%%%%%%%%%%%%%%%%%%%%%%%%%%%%%%%%%%%%%%%%%%%%%%%%%%%%%%%%%%%%%%%%%%%%%%%%%%

In the following $\nabla t_{e}, y_{e}, \nabla t_{e}^{T}$ 
and also $t_{j}-t_{i}$,  for any edge $e$ 
and any two vertices $i,j$, are viewed as functions 
of $\vec{\omega}$. This is possible by Lemma \ref{lemma3.1}.
With all the definitions from above we have

\begin{theorem}\label{th:muzero}
Consider the transformation 
\begin{align}
\label{def takegrad}
\takegrad:
\R^V\times\R^V&\to(\R\times\R)\times\Omega_L, 
\cr (t_i,s_i)_{i\in V_L}&\mapsto 
(t_\0,s_\0, \nabla t_\bb,y_\bb)=(t_\0,s_\0,\vec\omega).
\end{align}
With respect to this transformation, 
the image of the probability measure $\mu_L^\0$ equals the product 
measure $\mu^\pin\times \mu_{L,\notree}^{\grad,\0 }$,
where $\mu^\pin$ and $\mu_{L,\notree}^{\grad,\0 }$ are two probability 
measures defined on $\R\times \R$ and $\Omega_L$, respectively.
These measures are defined as follows:
\begin{align}
& d\mu^\pin (t_{\0 },s_{\0 }) =
e^{-H^{\pin}(t_{\0},s_\0)} dt_{\0 }ds_{\0 } \quad \text{ with}\cr
& H^{\pin}(t_\0,s_\0) = \varepsilon \left[  \cosh t_\0 -1+ 
\frac{s_{\0}^{2} e^{t_{\0}}}{2} \right]
-\ln \frac{\varepsilon }{2\pi }
,
\end{align}
where  $dt_{\0 }\,ds_{\0 }$ denotes
the Lebesgue measure on $\R\times \R$.
The measure $\mu_{L,\notree}^{\grad,\0 }$ is obtained from a 
probability measure $\mu_L^{\grad,\0 }$ on $\Bomega_L$ by summing over 
all spanning trees. More precisely, $\mu_{L,\notree}^{\grad,\0 }$ is 
 the marginal with respect to the 
projection $\Bomega_L\to\Omega_L$, $\vec\bomega=
(\vec\omega,T)\mapsto\vec\omega$ of 
the probability measure $\mu_L^{\grad,\0 }$ on $\Bomega_L$ defined as follows:
\begin{align}
 d\mu_L^{\grad,\0 } (\vec{\bomega})   = & 
e^{-H_L^{\grad,\0 }(\vec{\bomega})} d\vec{\bomega} \quad\text{with }\cr 
 H_L^{\grad,\0 }(\vec{\bomega}) = & \sum_{e\in E_L} \beta_e 
\left[ \cosh \nabla t_e-1 +  \frac{y_e^2}{2} \right] 
+ \sum_{e\in T} \nabla t_e^{T}  - \sum_{e\in \tback } \frac{\nabla t_e^{\bb } }{2}
\cr
&
 + t_r -t_{\0}- \sum_{e\in T} \ln \frac{\beta_e }{2\pi } \label{mugrad}
\end{align}
and  $d\vec{\bomega}=d\vec\omega\,dT=
\prod_{e\in \tback  }d\nabla t_e^\bb\,dy_e^\bb\, dT$  
is the Lebesgue measure
on $\Omega_{L}$ times the counting measure $dT$ on $\T$.
\end{theorem}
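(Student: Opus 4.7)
The plan is a three-step proof: (i) perform the change of variables $\takegrad$ and compute its Jacobian; (ii) apply the matrix-tree theorem to rewrite the determinant as a sum over spanning trees; (iii) use a combinatorial tree identity to collapse all ``bare'' $t_j$ factors into gradient expressions, exposing the product structure. For step (i), injectivity and the explicit inverse formulas for $\takegrad$ are already contained in Lemma~\ref{lemma3.1}. The Jacobian decouples into a $t$-block and an $s$-block (conditional on $t$), because $(t_\0,\nabla t_\bb)$ does not depend on $s$. The $t$-block $(t_j)_{j\in V_L}\mapsto(t_\0,\nabla t_\bb)$ is a linear bijection with $|\det|=1$, as one checks by factoring it through $(t_\uroot,\nabla t_\bb)$: both factors are unit-determinant by triangularity after a DFS ordering of vertices in $\tback$ from $\uroot$. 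The $s$-block $(s_j)\mapsto(s_\0,\nabla s_\bb)$ has the same structure, while the rescaling $\nabla s_e^\bb\mapsto y_e^\bb=\nabla s_e^\bb\,e^{(t_{i_e}+t_{j_e})/2}$ contributes an extra factor $\prod_{e\in\tback}e^{(t_{i_e}+t_{j_e})/2}$, so that $\prod_j dt_j\,ds_j=\prod_{e\in\tback}e^{-(t_{i_e}+t_{j_e})/2}\,dt_\0\,ds_\0\,d\vec\omega$.

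For step (ii), substituting $F_L(\nabla t)=\sum_{e\in E_L}\beta_e(\cosh\nabla t_e-1)$ and $[s,A_L(t)s]=\sum_{e\in E_L}\beta_e\,y_e^2$ (both justified by Lemma~\ref{lemma3.1}) handles the non-determinant factors in \eqref{eq:tmeasure}. The determinant is expanded via the matrix-tree theorem applied to the weighted Laplacian $A_L(t)$ with the rank-one perturbation $\widehat\varepsilon$ at $\0$:
\[
\det[A_L(t)+\widehat\varepsilon]=\varepsilon\,e^{t_\0}\sum_{T\in\T_L}\prod_{e\in T}\beta_e\,e^{t_{i_{e,T}}+t_{j_{e,T}}}.
\]
This produces both the sum over $\T_L$ in the target formula and the factors $\prod_{e\in T}\beta_e$, which feed the $-\sum_{e\in T}\ln(\beta_e/(2\pi))$ term of $H_L^{\grad,\0}$; of the $|V_L|$ factors of $1/(2\pi)$ in the original density, $|V_L|-1$ are absorbed into these tree products, and the remaining one goes into $H^\pin$ together with the stray $\varepsilon$.

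Step (iii) is the crux, and I expect it to be the subtlest. The workhorse is the tree identity, valid for any spanning tree $T$ oriented away from a root $r$ and any assignment $(t_j)$,
\[
\sum_{j\in V_L}\deg_T(j)\,t_j\;=\;2\sum_{j\ne r}t_j\;-\;\sum_{e\in T}\nabla t_e^T,
\]
which follows by counting each vertex as head or tail of its incident tree-edges. Applying this with $T$ (from the matrix-tree expansion) and with $\tback$ (from the Jacobian), and combining with $\prod_j e^{-t_j}$ and the extra $e^{t_\0}$ from the perturbation, the coefficient of $\sum_j t_j$ in the total exponent collapses to zero (one checks $-|V_L|+2(|V_L|-1)-(|V_L|-1)+1=0$). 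What remains is exactly $-t_\uroot+t_\0-\sum_{e\in T}\nabla t_e^T+\tfrac12\sum_{e\in\tback}\nabla t_e^\bb$, reproducing the corresponding terms of $-H_L^{\grad,\0}$; observe that $-t_\uroot+t_\0$ is itself a linear function of $\nabla t_\bb$ via \eqref{ti}.

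Finally, I would isolate the $(t_\0,s_\0)$-dependent factors --- $e^{-M(t_\0,s_\0)}$ together with the leftover $\varepsilon/(2\pi)$ from step (ii) --- into $e^{-H^\pin}\,dt_\0\,ds_\0$, and verify $\int e^{-H^\pin}\,dt_\0\,ds_\0=1$ by Gaussian integration in $s_\0$ followed by the Bessel-type identity $\int_\R e^{-\varepsilon\cosh t-t/2}\,dt=\sqrt{2\pi/\varepsilon}\,e^{-\varepsilon}$. The normalization of $\mu_L^{\grad,\0}$ on $\Bomega_L$ then follows automatically from the known normalization of $\mu_L^\0$ (cf.\ the footnote to \eqref{eq:tmeasure}).
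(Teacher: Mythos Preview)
Your proposal is correct and follows essentially the same approach as the paper: matrix-tree expansion of the determinant, the change of variables $\takegrad$ with Jacobian $\prod_{e\in\tback}e^{-(t_{i_e}+t_{j_e})/2}$, and the tree identity (your $\sum_j\deg_T(j)t_j=2\sum_{j\neq r}t_j-\sum_{e\in T}\nabla t_e^T$ is exactly Lemma~\ref{lemma-htree-rewritten} rearranged). The only cosmetic differences are the order in which you apply the matrix-tree theorem versus the change of variables, and that you verify $\int e^{-H^\pin}=1$ explicitly via the $K_{1/2}$ Bessel identity, whereas the paper appeals to the single-vertex supersymmetric partition function.
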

\begin{proof}
By the matrix tree theorem the determinant of $A_L(t)+ \widehat{\varepsilon}$ 
can be written as
\begin{align}
\det [A_L(t)+ \widehat{\varepsilon}\,] = 
\varepsilon e^{t_{\0}} 
\sum_{T\in \T} \prod_{(i\sim j) \in T} 
\beta_{ij} e^{t_{i}+t_{j}}=\sum_{T\in \T}  
e^{ t_\0+ \ln\varepsilon + \sum_{(i\sim j) \in T} 
\left( t_{i}+t_{j} + \ln\beta_{ij} \right)}.
\end{align}
Therefore we can rewrite the measure $\mu^{\0 }_L$ as a marginal, by taking the spanning tree $T$ 
as additional variable. We have then 
\[
 d\mu^{\0 }_L (t,s)  = \int_{T\in \T} e^{-H_L (t,s,T)} d[t,s,T], \quad 
\mbox{where}\quad  d[t,s,T]  =    
 \prod_{j\in  V_L} \frac{dt_{j}ds_{j}}{2\pi } dT,\quad  \mbox{and}
\]
\begin{align}
H_L (t,s,T) & =  \sum_{(i\sim j)\in E_L } \beta_{ij} 
\left[\cosh(t_i - t_j)-1+ \frac{(s_i-s_j)^{2}}{2}e^{t_{i}+t_{j}} \right]
+ \sum_{j\in V_L} t_{j}   \cr
& +\varepsilon\left [\cosh t_{\0}-1+ \frac{s_{\0 }^{2}}{2}e^{t_{\0 }}\right ]
 - \left [ t_\0+ \ln\varepsilon + \sum_{(i\sim j) \in T} 
\left( t_{i}+t_{j} + \ln\beta_{ij} \right)\right].
\label{Hamiltonian-HL}
\end{align}
The normalizing constant $(2\pi)^{-|V_L|}$ is distributed in pieces 
among the terms $\ln(\beta_e/(2\pi))$, $e\in T$, and 
$\ln(\varepsilon/(2\pi))$  appearing in $H_L^{\grad,\0 }$ and $H^\pin$, 
respectively. 
The transformation $\takegrad$ from \eqref{def takegrad} is a  bijection. 
Changing the variables according to 
$(t_\0,s_\0,\nabla t_\bb,y_\bb)=\takegrad(s,t)$
yields the transformed measure 
\begin{align}
\takegrad[e^{-H_L(t,s,T)}dt ds] = e^{-H_L(  \takegrad^{-1}(t_\0,s_\0,\nabla t_\bb,y_\bb),T)} J
\,
dt_\0\,ds_\0\,d\nabla t_\bb\,dy_\bb
\end{align}
where the last term $J$
is the Jacobian
\begin{align}
\label{Jacobian-J}
J
=\prod_{(i\sim j)\in\tback}e^{-\frac12(t_i+t_j)}= e^{-\left[ \sum_{j\in V_{L}}t_{j}-t_{r}-
\frac{1}{2}\sum_{e\in \tback } \nabla t_e^{\bb }  \right]  },
\end{align}
and in the last equality  we used  Lemma \ref{lemma-htree-rewritten} below.
Using Lemma \ref{lemma-htree-rewritten} again 
\[
\sum_{j\in V_L} t_j  -  t_\0 - \sum_{(i\sim j) \in T} 
\left( t_{i}+t_{j}  \right)  =- \sum_{j\in V_L} t_j 
 +  t_{r} + (t_{r}-t_{\0 }) + \sum_{e\in T}\nabla t_e^{T} .
\]
Inserting this in the Hamiltonian \eqref{Hamiltonian-HL}, we get
\begin{align}
H_L (t,s,T) &=  \sum_{e\in E}\beta_e
\left[\cosh \nabla t_e-1+ \frac{y_e^{2}}{2} \right]
+  \sum_{e\in T} \nabla t^{T}_e - \sum_{j\in V_L} t_{j} + t_{r}+ (t_{r}-t_{\0 })\cr
&-  \sum_{e\in T}\ln \beta_e 
+\varepsilon \left[\cosh t_{\0}-1 + \frac{s_{\0 }^{2}e^{t_{\0 }}}{2} \right] 
- \ln \varepsilon.   \label{Hl}
\end{align}
Adding the contribution \eqref{Jacobian-J} from the Jacobian the result follows.
\end{proof}

\begin{lemma}
\label{lemma-htree-rewritten}
For every $T\in\T$, one has 
\begin{equation}
\sum_{e\in T} (t_{i_e}+t_{j_e})  -2\sum_{j\in V_L} t_j
+2t_{r} + \sum_{e\in T}\nabla t_e^{T}=0.
\end{equation}
\end{lemma}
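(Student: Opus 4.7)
The plan is to rewrite the left-hand side by grouping the two sums indexed by $e\in T$. Since $T$ is oriented away from the root $r$, the definition gives $\nabla t_e^T = t_{j_{e,T}} - t_{i_{e,T}}$. Hence for each $e\in T$,
\begin{equation*}
(t_{i_e}+t_{j_e}) + \nabla t_e^T = 2 t_{j_e},
\end{equation*}
so that $\sum_{e\in T}(t_{i_e}+t_{j_e}) + \sum_{e\in T}\nabla t_e^T = 2\sum_{e\in T} t_{j_e}$.

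The second step is to identify the multiset $\{j_{e,T} : e\in T\}$. Because $T$ is a spanning tree oriented away from $r$, every vertex $v\in V_L\setminus\{r\}$ has exactly one incoming edge (the edge from its unique parent in $T$), while $r$ has no incoming edge. Therefore the map $e\mapsto j_{e,T}$ is a bijection from $T$ onto $V_L\setminus\{r\}$, and
\begin{equation*}
\sum_{e\in T} t_{j_e} = \sum_{j\in V_L} t_j - t_r.
\end{equation*}

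Combining these two observations yields
\begin{equation*}
\sum_{e\in T}(t_{i_e}+t_{j_e}) + \sum_{e\in T}\nabla t_e^T = 2\sum_{j\in V_L} t_j - 2t_r,
\end{equation*}
which rearranges to the claimed identity. There is essentially no obstacle here; the only subtlety is keeping track of the fact that $i_{e,T},j_{e,T}$ refer to the orientation induced by $T$ (away from $r$) rather than by $\tback$, so that the bijection argument goes through for a general tree $T\in\T$.
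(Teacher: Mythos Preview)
Your proof is correct and follows essentially the same approach as the paper: both rely on the bijection $e\mapsto j_{e,T}$ from $T$ onto $V_L\setminus\{r\}$, which gives $\sum_{e\in T} t_{j_{e,T}} = \sum_{j\in V_L} t_j - t_r$. The only difference is cosmetic: the paper first substitutes this bijection into $\sum_{e\in T}(t_{i_e}+t_{j_e}) - 2\sum_j t_j + 2t_r$ and then recognizes the result as $-\sum_{e\in T}\nabla t_e^T$, whereas you first combine $(t_{i_e}+t_{j_e})+\nabla t_e^T = 2t_{j_{e,T}}$ and then apply the bijection. Your closing remark about the $T$-orientation versus the $\tback$-orientation is exactly the point one must be careful about, since in the paper's conventions $j_e$ abbreviates $j_{e,\tback}$; it would be cleanest to write $2t_{j_{e,T}}$ rather than $2t_{j_e}$ in your displayed identity.
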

\begin{proof}
Recall that for each edge $e\in T$, we denote its endpoints 
by $i_{e,T}$ and $j_{e,T}$ such 
that $j_{e,T}$ is farther away from $r$ in the tree $T$ than $i_{e,T}$: 
$t_{j_{e,T}}-t_{i_{e,T}}= \nabla t^{T}_e.$
For every vertex $j\in V_L\setminus\{r\}$ there is a unique 
edge $e\in T$ with $j_{e,T}=j$, therefore 
\begin{align} 
\sum_{j\in V_L} t_j = t_r+\sum_{e\in T}t_{j_{e,T}}.
\end{align} 
Using this, we get
\begin{align}
\sum_{e\in T} (t_{i_e}+t_{j_e})  -2\sum_{j\in V_L} t_j
+ 2t_r
= &\sum_{e\in T} (t_{i_{e,T}}+t_{j_{e,T}})  
-2\left(  t_r+\sum_{e\in T}t_{j_{e,T}} \right)
+2t_r\nonumber\\
= & \sum_{e\in T}(t_{i_{e,T}} - t_{j_{e,T}})
=  - \sum_{e\in T} \nabla t^{T}_e. 
\end{align}
The claim follows. 
\end{proof}

%%%%%%%%%%%%%%%%%%%%%%%%%%%%%%%%%%%%%%%%
\subsection{The interpolated  measure}%\label{}
%%%%%%%%%%%%%%%%%%%%%%%%%%%%%%%%%%%%%%%%

Let $l\in\N$ with $0<l\leq \ol$. The reader may imagine $1\ll l\ll\ol$.
We set $\ell:=(l,p)$. Thus, $\ell$ is the copy of $p$ at level $l$. 
We are interested in studying the average
$\mathbb{E}_{\mu_{L}^{\0 }}\left[ e^{\frac{t_{\ell}-t_{\0 } }{2}} \right]$.
Note that
\begin{align}
\label{expectation-exp-tl-t0}
\mathbb{E}_{\mu_{L}^{\0 }} \left[e^{\frac{t_{\ell}-t_{\0 } }{2}} \right]
= \int  e^{\frac{t_{\ell}-t_{\0 } }{2}}  d\mu_L^{\grad, \0 } (\vec{\bomega}), 
\end{align}
since 
\[
\int d\mu^\pin (t_{\0 },s_{\0 })  = 1.
\]
The last expression is true  by supersymmetry,  being the partition function for the 
probability measure \eqref{eq:tmeasure}  in 
the special case of a single vertex. In this simple case one may check the identity also
by direct computation.
We now merge the observable $e^{\frac{t_{\ell}-t_{\0 } }{2}}$
 with $H_L^{\grad, \0 }(\vec{\bomega})$ to define a new probability 
measure, called the {\it interpolated measure}, 
\begin{equation}
\label{def-Z}
d\Pint  = \frac{ e^{\Delta H}d\mu_L^{\grad,\0 }}{\Zint},\quad 
\mbox{where} 
\ \Delta H= \frac{t_{\ell}-t_{\0 }}{2}
\ , \ \Zint = \mathbb{E}_{\mu_L^{\grad,\0 }} 
\left[e^{\frac{t_{\ell}-t_{\0 } }{2}} \right].
\end{equation}
The normalization constant of this new measure 
is exactly the observable we want to estimate.

\begin{theorem}\label{interp-m} 
Using the gradient variables above the interpolated measure $d\Pint $
can be written as
\begin{equation}\label{decomp HLlo}
d\Pint  (\vec{\bomega})  = \frac{e^{- H_{L}^{\0 \ell}(\vec{\bomega}) } }{\Zint}
d\vec{\bomega},\; 
\mbox{with}\; H_{L}^{\0 \ell}(\vec{\bomega}) 
   = \sum_{e\in E_{L}} h_e (\vec{\bomega})
-\hspace{-0.1cm} 
\sum_{n=-\ul}^{-1} \frac{\nabla t^{\bb }_{p_{n+1/2}} }{2}  
+\hspace{-0.1cm}  
\sum_{n=l}^{\ol-1 } \frac{\nabla t^{\bb}_{p_{n+1/2}} }{2},
\end{equation} 
where
\begin{align}
  h_e (\vec{\bomega}) &=
\beta_e \left[ \cosh \nabla t_e-1 +  \frac{y_e^2 }{2}  \right] +
 \nabla t_e^{T} \one_{\{e\in B^{c} (T) \}} - \frac{\nabla t_e^{\bb } }{2} 
\one_{\{e\in B^{c}(\tback) \}}-\ln \frac{\beta_e}{2\pi }\, \one_{\{e\in T \}}.
\label{Hn12}
\end{align}
\end{theorem}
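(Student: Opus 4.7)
The statement is essentially an identity: by definition of the interpolated measure, one has $H_L^{\0\ell}(\vec\bomega) = H_L^{\grad,\0}(\vec\bomega)-\Delta H(\vec\bomega)$, so the task is to rewrite the right-hand side in edge-local form plus the announced boundary terms on the backbone. The proof will therefore be a direct algebraic regrouping of the expression for $H_L^{\grad,\0}$ from Theorem~3.2 (written in display \eqref{mugrad}).

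First, I would split the global tree sums appearing in $H_L^{\grad,\0}$ according to backbone versus branches:
\begin{equation*}
\sum_{e\in T}\nabla t_e^{T} = \sum_{e\in B(T)}\nabla t_e^{T} + \sum_{e\in B^{c}(T)}\nabla t_e^{T},\qquad
\sum_{e\in\tback}\frac{\nabla t_e^{\bb}}{2} = \sum_{e\in B}\frac{\nabla t_e^{\bb}}{2} + \sum_{e\in B^{c}(\tback)}\frac{\nabla t_e^{\bb}}{2}.
\end{equation*}
The branch pieces, together with the edgewise kinetic terms $\beta_e[\cosh\nabla t_e-1+y_e^2/2]$ and the logarithms $\ln(\beta_e/(2\pi))\one_{\{e\in T\}}$, are exactly $\sum_{e\in E_L} h_e(\vec\bomega)$ as defined in \eqref{Hn12}. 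So after this regrouping one is left with
\begin{equation*}
H_L^{\grad,\0} = \sum_{e\in E_L}h_e + \sum_{e\in B(T)}\nabla t_e^{T} - \sum_{e\in B}\frac{\nabla t_e^{\bb}}{2} + t_r - t_\0.
\end{equation*}

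Next, I would use the crucial telescoping observation: since any tree $T\in\T$ is oriented away from the root $r=(-\ul,p)$, and $B(T)$ is the unique path in $T$ from $r$ to $\oroot=(\ol,p)$, its edges are oriented along the direction from $r$ to $\oroot$. Hence $\sum_{e\in B(T)}\nabla t_e^{T}=t_{\oroot}-t_r$. The same argument applied to $\tback$ (whose backbone is $B=\{p_{n+1/2}\}$, oriented from $(n,p)$ to $(n{+}1,p)$) gives $\sum_{e\in B}\nabla t_e^{\bb}=t_{\oroot}-t_r$. Substituting,
\begin{equation*}
H_L^{\grad,\0} = \sum_{e\in E_L}h_e + \tfrac{1}{2}(t_{\oroot}-t_r) + t_r - t_\0 = \sum_{e\in E_L}h_e + \tfrac{1}{2}(t_{\oroot}+t_r) - t_\0.
\end{equation*}

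Finally, subtracting $\Delta H=(t_\ell-t_\0)/2$ and grouping the purely vertex-valued terms, I need to show that
\begin{equation*}
\tfrac{1}{2}(t_{\oroot}+t_r)-t_\0 - \tfrac{1}{2}(t_\ell-t_\0) = -\sum_{n=-\ul}^{-1}\frac{\nabla t_{p_{n+1/2}}^{\bb}}{2} + \sum_{n=l}^{\ol-1}\frac{\nabla t_{p_{n+1/2}}^{\bb}}{2}.
\end{equation*}
But the right-hand side telescopes along the backbone $B$ to $-(t_\0-t_r)/2+(t_{\oroot}-t_\ell)/2$, which coincides with the left-hand side after trivial simplification. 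This yields exactly the decomposition claimed in \eqref{decomp HLlo}. The entire proof is essentially bookkeeping; the only point requiring care is consistency of edge orientations, namely that on the backbones of $T$ and of $\tback$ the intrinsic tree orientation (away from $r$) agrees with the orientation from $r$ to $\oroot$ so that the telescoping sums evaluate cleanly.
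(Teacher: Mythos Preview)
Your proof is correct and takes essentially the same approach as the paper's own proof: both split the tree sums $\sum_{e\in T}\nabla t_e^T$ and $\sum_{e\in\tback}\nabla t_e^{\bb}/2$ into backbone plus branches, use the telescoping identity $\sum_{e\in B(T)}\nabla t_e^T=t_{\oroot}-t_r$ (and its analogue for $\tback$), and then telescope the leftover vertex terms along the backbone $B$ between $r,\0,\ell,\oroot$. The only cosmetic difference is the order of operations---the paper first reorganizes $-(t_\0-t_r)-\tfrac12(t_\ell-t_\0)$ as $-\tfrac12(t_\0-t_r)-\tfrac12(t_{\oroot}-t_r)+\tfrac12(t_{\oroot}-t_\ell)$ and then cancels the middle piece against the backbone sums, whereas you evaluate the backbone sums first and regroup afterward---but the content is identical.
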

\begin{proof}
By \eqref{mugrad} above, the Hamiltonian for the interpolated measure is 
\begin{align*}
H_{L}^{\0 \ell}(\vec{\bomega})&= H_L^{\grad,\0 } + \frac{t_{\0 }-t_{\ell}}{2} = 
 \sum_{e\in E_L} \beta_e 
\left[ \cosh \nabla t_e-1 +  \frac{y_e^2}{2} \right] 
 - \sum_{e\in T} \ln \frac{\beta_e }{2\pi }
\cr
&+  \sum_{e\in T} \nabla t_e^{T}  - \sum_{e\in \tback } \frac{\nabla t_e^{\bb } }{2}
 - (t_{\0}-t_{\uroot}) -  \frac{t_{\ell }-t_{\0}}{2} .
\end{align*}
We reorganize the terms that are not already in local form as
\[
- (t_{\0}-t_{\uroot})  -  \frac{t_{\ell }-t_{\0}}{2} = 
- \frac{t_{\0}- t_{\uroot}}{2}  -  \frac{t_{\oroot}- t_{\uroot}}{2}  + 
   \frac{t_{\oroot}- t_{\ell}}{2}. 
\]
We decompose the middle term (that no longer depends on $\0 $ or $\ell$)
as
\[
 -  \frac{t_{\oroot}- t_{\uroot}}{2} =  - (t_{\oroot}- t_{\uroot}) 
+  \frac{t_{\oroot}- t_{\uroot}}{2} =
-  \sum_{e\in B (T)} \nabla t_e^{T}+  \sum_{e\in B }
\frac{\nabla t_e^{\bb } }{2},
\]
where we used the definition of $B(T)$ and $B$ given in Sect.\  
\ref{sect:grad-tree}.
The other terms can be decomposed as sums along the backbone $B$.
Inserting all this in  the formula above, and writing $t_{\0}- t_{\uroot}$, 
$t_{\oroot}- t_{\ell}$ as telescopic sums along the backbone,  
we have
\begin{align*}
H_{L}^{\0 \ell}(\vec{\bomega})&= \sum_{e\in E_L} \beta_e 
\left[ \cosh \nabla t_e-1 +  \frac{y_e^2}{2} \right] 
 - \sum_{e\in T} \ln \frac{\beta_e }{2\pi }
+  \sum_{e\in T} \nabla t_e^{T} \left[1-\one_{\{e\in B(T) \}}  \right]\cr
& 
-  \sum_{e\in \tback }\frac{\nabla t_e^{\bb } }{2}    \left[1-\one_{\{e\in B \}}  \right] 
- \sum_{n=-\ul }^{-1} \frac{\nabla t^{\bb }_{p_{n+1/2}} }{2}  
+ \sum_{n=l }^{\ol-1} \frac{\nabla t^{\bb }_{p_{n+1/2}} }{2} .
\end{align*}
Isolating the contribution of each edge the result follows. 
\end{proof}

\begin{remark}
\label{rem-path-gamma} 
Note that from \eqref{repr-nabla-t} and \eqref{nablat1} 
the gradients $\nabla t_e^T,y_e^T$
for $e= (i\sim j)\in E$, given the direction of $e$ in $T$,
depend only on the independent
(backbone tree) 
variables $\nabla t^{\bb }_{e'},y^{\bb }_{e'}$
associated to the unique path in $\tback$ connecting $i$ to $j$.
When $e\in E_{n}$ this path belongs completely to $S_{n}$, therefore 
contains only vertical edges at level $n$. 
On the other hand when  $e\in E_{n+1/2}$
the path may contain edges in $S_{n}$, edges in $S_{n+1}$, plus the unique
edge in $\tback \cap E_{n+1/2}$.
Therefore the contribution $h_e$ to the Hamiltonian for $e\in E_{n}$  
depends only on gradient variables associated
to vertical edges at level $n$ (plus the tree $T$).
On the other hand, when $e\in E_{n+1/2}$, $h_e$ depends on
 gradient variables associated to vertical edges at levels $n$ and $n+1$, i.e.\ 
$\omega_{n}$ and $\omega_{n+1}$ plus
  the gradient variables $\omega_{n+1/2}$ 
associated to the unique horizontal  edge $p_{n+1/2}$
in $\tback $ at level $n+1/2$, and 
finally the tree $T$. 
\end{remark}

The problem is now to estimate the normalization constant of the
interpolated measure. We will need the following result 
(in the context of edge-reinforced random walks, such an estimate is
shown as Lemma 5.1 of \cite{merkl-rolles-2d}).
\begin{lemma}\label{measurebound}
The normalization constant $\Zint$ satisfies 
\begin{equation}\label{lnZbound}
\ln \Zint \leq  \mathbb{E}_{\Pi } [\Delta H] +  
\mathbb{E}_{\Pi  }\left[ \ln \left( \frac{d\Pi }{d\Pint  } 
\right)  \right]
\end{equation}
for any probability measure $\Pi$ having a positive density with respect to 
$\Pint $, such that $ \mathbb{E}_{\Pi } [\Delta H]$ and 
$\mathbb{E}_{\Pi}\left[  \ln \frac{d\Pi  }{d\Pint }  \right]   $
are both  finite.
Equality holds at 
$\Pi = \mu^{\grad,\0}_{L}$.
Moreover 
\begin{equation}\label{entropylowerb}
\mathbb{E}_{\Pi  }\left[ \ln \left( \frac{d\Pi }{d\Pint  } 
\right)  \right] \geq 0 
\end{equation}
and takes value zero at $\Pi=\Pint $.
\end{lemma}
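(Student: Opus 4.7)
The plan is to derive \eqref{lnZbound} from the Gibbs variational principle, which itself reduces to non-negativity of relative entropy after a short algebraic manipulation with Radon--Nikodym derivatives.

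First I would exploit the explicit formula \eqref{def-Z}, namely $d\Pint/d\mu_L^{\grad,\0} = e^{\Delta H}/\Zint$, to write the chain-rule identity
\[
\ln\frac{d\Pi}{d\Pint} \;=\; \ln\frac{d\Pi}{d\mu_L^{\grad,\0}} - \Delta H + \ln \Zint,
\]
valid $\Pi$-a.s.\ for any $\Pi$ with positive density with respect to $\Pint$. Taking $\mathbb{E}_\Pi$ on both sides and solving for $\ln \Zint$ yields the exact identity
\[
\ln \Zint \;=\; \mathbb{E}_\Pi[\Delta H] + \mathbb{E}_\Pi\!\left[\ln\frac{d\Pi}{d\Pint}\right] - \mathbb{E}_\Pi\!\left[\ln\frac{d\Pi}{d\mu_L^{\grad,\0}}\right].
\]
The finiteness hypotheses on $\mathbb{E}_\Pi[\Delta H]$ and on the relative entropy of $\Pi$ with respect to $\Pint$ ensure, via the first display, that the remaining expectation $\mathbb{E}_\Pi[\ln(d\Pi/d\mu_L^{\grad,\0})]$ is also well-defined (in $(-\infty,+\infty]$), so all manipulations are legitimate.

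Next I would invoke the standard non-negativity of relative entropy: for any probability measures $\Pi$ and $Q$ with $\Pi \ll Q$, Jensen's inequality applied to the concave function $\ln$ gives
\[
\mathbb{E}_\Pi\!\left[\ln\frac{d\Pi}{dQ}\right] \;=\; -\mathbb{E}_\Pi\!\left[\ln\frac{dQ}{d\Pi}\right] \;\geq\; -\ln \mathbb{E}_\Pi\!\left[\frac{dQ}{d\Pi}\right] \;=\; 0,
\]
with equality if and only if $dQ/d\Pi$ is $\Pi$-a.s.\ constant, hence equal to $1$, i.e.\ $\Pi = Q$. Applied with $Q=\mu_L^{\grad,\0}$, the last term in the previous identity is non-positive, so dropping it yields \eqref{lnZbound}. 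Equality at $\Pi = \mu_L^{\grad,\0}$ is then immediate, because the dropped term vanishes there. Applied with $Q=\Pint$, the same inequality gives \eqref{entropylowerb}, with equality precisely at $\Pi=\Pint$.

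There is no genuine obstacle here: the full statement is a direct instance of the Gibbs variational principle, and the argument consists of one algebraic rearrangement together with one use of Jensen's inequality. The only thing requiring minor care is checking that the finiteness hypotheses in the lemma render all the integrals in the rearrangement well-defined, which is straightforward from the chain-rule identity above.
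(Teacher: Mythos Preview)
Your proposal is correct and follows essentially the same approach as the paper: both establish the exact identity $\ln\Zint = \mathbb{E}_\Pi[\Delta H] + \mathbb{E}_\Pi[\ln(d\Pi/d\Pint)] - \mathbb{E}_\Pi[\ln(d\Pi/d\mu_L^{\grad,\0})]$ via the definition of $\Pint$ and the chain rule for Radon--Nikodym derivatives, then drop the last term using non-negativity of relative entropy. The only cosmetic difference is that the paper writes the key inequality as $\ln x\le x-1$ applied pointwise, whereas you phrase the same step as Jensen's inequality for $\ln$.
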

\begin{proof}
By definition we have
\[
d\Pint  = \frac{ e^{\Delta H}d\mu_L^{\grad,\0 }}{\Zint}\quad 
\Rightarrow\quad  \Zint= e^{\Delta H} \frac{d\mu_L^{\grad,\0 } }{d\Pint }
\quad 
\Rightarrow\quad  \ln  \Zint = \Delta H + \ln \frac{d\mu_L^{\grad,\0 } }{d\Pint }
\]
where $d\mu_L^{\grad,\0 }/d\Pint $ is the Radon-Nikodym derivative.
This equality is true pointwise hence
\[
 \ln  \Zint = \mathbb{E}_{\Pi }\left[  \ln  \Zint\right] =
\mathbb{E}_{\Pi}\left[  \Delta H \right]\ + \ 
\mathbb{E}_{\Pi}\left[  \ln \frac{d\mu_L^{\grad,\0 } }{d\Pint }  \right]
\]
for any probability measure $\Pi$ such that  $ \mathbb{E}_{\Pi }[\Delta H]$  
is finite.  Moreover, given that 
$\mathbb{E}_{\Pi}\left[  \ln \frac{d\Pi  }{d\Pint }  \right]  $ is finite,
we have  
\begin{align}
 \mathbb{E}_{\Pi}\left[  \ln \frac{d\mu_L^{\grad,\0 } }{d\Pint }  \right] 
&= \mathbb{E}_{\Pi}\left[  \ln \frac{d\Pi  }{d\Pint }  \right]  
+  \mathbb{E}_{\Pi}\left[  \ln \frac{ d\mu_L^{\grad,\0 } }{ d\Pi } \right]\cr  
& \leq   \mathbb{E}_{\Pi}\left[  \ln \frac{d\Pi  }{d\Pint }  \right]  
+  \mathbb{E}_{\Pi}\left[   \frac{ d\mu_L^{\grad,\0 } }{ d\Pi }-1 \right]= 
 \mathbb{E}_{\Pi}\left[  \ln \frac{d\Pi  }{d\Pint }  \right] 
\end{align}
where we applied   $\ln x\leq x-1$ $\forall x>0$. 
The inequality becomes sharp for
 $\Pi = \mu_L^{\grad,\0 }$.
Finally 
\begin{equation}\label{logpibound}
 \mathbb{E}_{\Pi}\left[  \ln \frac{d\Pi  }{d\Pint }  \right]= -
\mathbb{E}_{\Pi}\left[  \ln \frac{d \Pint  }{d\Pi}\right] \geq -
\mathbb{E}_{\Pi}\left[   \frac{d \Pint  }{d\Pi}-1\right] = 0
\end{equation}
 This concludes the proof.
\end{proof}

%%%%%%%%%%%%%%%%%%%%%%%%%%
\subsection{The deformed measure} 
\label{sect:defm}

Using Lemma \ref{measurebound} above, the problem of estimating the decay of 
$\mathbb{E}_{\mu_{L}^{\0 }} \left[e^{\frac{t_{\ell}-t_{\0 } }{2}} \right]$ 
can be translated in
bounding the free energy $\ln \Zint$ by \eqref{lnZbound}. To prove exponential decay of
$ \Zint$ we  need to find
a measure $\Pi $ such that 
\begin{itemize}
\item [(a)] the energy term in \eqref{lnZbound} satisfies 
$\mathbb{E}_{\Pi}\left[  \Delta H \right]\leq C_0- C_1 l$
for some positive constants $C_0,C_1$,
\item [(b)]  the entropy term in \eqref{lnZbound}  satisfies 
$\mathbb{E}_{\Pi}\left[  \ln \frac{d\Pi  }{d\Pint }  \right]\leq  C_2 l$
with $0\leq C_2 < C_1$.  
\end{itemize}
Since $\mathbb{E}_{\Pi}[  \ln \frac{d\Pi  }{d\Pint }]\geq 0$
by \eqref{entropylowerb}, 
 there is no hope to get a negative contribution from 
the entropy term.  Ideally we should take $\Pi=\mu_L^{\grad,\0 }$
to optimize the estimate, 
but in practise this is too hard. On the other hand,  
if we take $\Pi =\Pint $ the
entropy term is exactly zero. 
Guided by these facts, we will take a deformation  $\Pi_\alpha$ of 
$\Pint $, where $\alpha\in \R$ is a deformation parameter such that
\begin{itemize}
\item  $\alpha $ is close enough to zero so that the entropy term remains near zero and
\item   the deformed measure $\Pi_\alpha$ 
is close to the minimum $\mu_L^{\grad,\0 }$. 
\end{itemize}
This will be made more precise below.
 
\paragraph{The deformation.}
We introduce a small deformation $\xi_{\alpha }$ acting  only on the gradient 
variables 
$\nabla t^{\bb }_{p_{n+1/2}}$ of the backbone  in the unique path connecting 
$\0 $ to $\ell$:
\begin{equation}\label{deform}
\xi_{\alpha } :  \Bomega_L  \to  \Bomega_L ,\qquad    \vec{\bomega}=(\nabla t_\bb,y_\bb,T) 
\mapsto 
 \vec{\bomega}_\alpha=(\nabla t^\alpha , y_\bb,T),
\end{equation}
where
\begin{align}
 \nabla t^\alpha_e & =  \nabla t^{\bb }_e\qquad \qquad \qquad   \qquad \mbox{if}\ 
e \in \tback \backslash\gamma^{\0 \ell}_{\tback }\cr %
\nabla t^\alpha_{p_{n+1/2}} & =  \nabla t^{\bb }_{p_{n+1/2}}
 + \alpha  \chi_{n+1/2}\qquad  \mbox{if} \ 0\leq n\leq l-1\label{defxi1}
\end{align}
and $\chi_{n+1/2}=\chi(\omega_n,\omega_{n+1/2},\omega_{n+1})$ with
a cutoff function 
$\chi:\Omega_\vertical\times\Omega_\hor\times\Omega_\vertical\to[0,1]$
given by
\begin{align}\label{chidef}
\chi(\omega,\omega_\hor,\omega')=  
\tilde{\chi}(\eta^{-2}\|\omega_\hor\|^2)
\prod_{e\in S}\left[\tilde{\chi}(\eta^{-2}\|\omega_e\|^2)
\tilde{\chi}(\eta^{-2}\|\omega'_e\|^2)\right]
\end{align}
with $\tilde\chi: \mathbb{R}\to[0,1]$ being a smooth 
decreasing non negative function 
such that $\tilde\chi(x)=1$ for $x\leq 1/2$ and $\tilde\chi(x)=0$ 
for $x\geq 1$. The variables $\omega_n$, $\omega_{n+1/2}$ and $\omega_e$ 
were defined in equation \eqref{leveli}, and $\|{\cdot}\|$ denotes the Euclidean norm.
Finally $\eta$ is any fixed positive constant.
Here, its value is irrelevant, but it may become important to optimize
the bounds quantitatively.

\begin{lemma}
\label{lemma-xi-gamma-invertible}
The transformation $\xi_{\alpha }$ defined in 
\eqref{deform} is invertible for all
values of $|\alpha|\leq \eta \cneun$, where $\cneun$ is 
any fixed positive constant satisfying
$\cneun <  (2\|\tilde{\chi}'\|_{\infty})^{-1}$. Furthermore, for all 
$n=0,\ldots,l-1$, one has 
\begin{align}
\label{bound-derivative-chi}
\left| \frac{\partial\chi_{n+1/2}}{\partial\nabla t^\bb_{p_{n+1/2}}}\right| 
\le 2\eta^{-1} \|\tilde{\chi}'\|_\infty.
\end{align}
\end{lemma}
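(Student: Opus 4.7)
The plan is to first establish the derivative bound \eqref{bound-derivative-chi}, then deduce invertibility by observing that $\xi_\alpha$ decouples into $l$ independent one-dimensional maps, one per horizontal backbone edge on $\gamma^{\0\ell}_\tback$.

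For \eqref{bound-derivative-chi}, I would examine which factors of $\chi_{n+1/2}=\chi(\omega_n,\omega_{n+1/2},\omega_{n+1})$ actually depend on $\nabla t^\bb_{p_{n+1/2}}$. By the definition \eqref{chidef}, the only factor that involves this variable is $\tilde\chi(\eta^{-2}\|\omega_{n+1/2}\|^2)$, since $\omega_{n+1/2}=(\nabla t^\bb_{p_{n+1/2}},y^\bb_{p_{n+1/2}})$ and all the $\omega_n,\omega_{n+1}$ factors as well as $y^\bb_{p_{n+1/2}}$ are independent of it. Differentiating by the chain rule yields
\[
\frac{\partial\chi_{n+1/2}}{\partial\nabla t^\bb_{p_{n+1/2}}}
= g\cdot\tilde\chi'(\eta^{-2}\|\omega_{n+1/2}\|^2)\cdot 2\eta^{-2}\nabla t^\bb_{p_{n+1/2}},
\]
where $g\in[0,1]$ collects the remaining $\tilde\chi$-factors. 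On the support of $\tilde\chi'$, one has $\eta^{-2}\|\omega_{n+1/2}\|^2\le 1$, hence $|\nabla t^\bb_{p_{n+1/2}}|\le\eta$. Combining these bounds gives the claimed estimate $2\eta^{-1}\|\tilde\chi'\|_\infty$.

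For invertibility, I would point out that the transformation $\xi_\alpha$ leaves $y_\bb$, $T$, and every $\nabla t^\bb_e$ for $e\in\tback\setminus\gamma^{\0\ell}_\tback$ fixed, so it suffices to invert the action on $(\nabla t^\bb_{p_{n+1/2}})_{0\le n\le l-1}$. Crucially, the $n$-th component $\chi_{n+1/2}$ depends only on $\omega_n$, $\omega_{n+1}$, $y^\bb_{p_{n+1/2}}$ (all frozen) and on $\nabla t^\bb_{p_{n+1/2}}$ itself, but never on $\nabla t^\bb_{p_{m+1/2}}$ for $m\neq n$. Consequently $\xi_\alpha$ is a product of one-dimensional maps $\phi_n(x):=x+\alpha\chi^{(n)}(x)$, where $\chi^{(n)}$ denotes $\chi_{n+1/2}$ viewed as a function of $x=\nabla t^\bb_{p_{n+1/2}}$ with all other arguments frozen.

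Each $\phi_n$ is a $C^1$-diffeomorphism of $\R$: by \eqref{bound-derivative-chi}, $|\phi_n'(x)-1|\le|\alpha|\cdot 2\eta^{-1}\|\tilde\chi'\|_\infty\le 2\cneun\|\tilde\chi'\|_\infty<1$ for $|\alpha|\le\eta\cneun$, so $\phi_n'>0$ strictly and $\phi_n$ is strictly increasing. Moreover $\chi^{(n)}(x)=0$ whenever $\eta^{-2}x^2\ge 1$, so $\phi_n(x)=x$ for $|x|\ge\eta$; thus $\phi_n$ is proper and surjective. Hence $\phi_n$ is a bijection, and so is the product map $\xi_\alpha$. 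There is no real obstacle here, since the cutoff $\tilde\chi$ was built precisely to confine the perturbation to a small compact region and to give uniform control on the derivative; the only point requiring a little care is the observation that distinct horizontal edges perturb independently, which is why the argument reduces to a one-dimensional statement.
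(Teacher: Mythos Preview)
Your proposal is correct and follows essentially the same approach as the paper: reduce $\xi_\alpha$ to $l$ independent one-dimensional maps by freezing all other coordinates, bound the derivative of $\chi_{n+1/2}$ with respect to $\nabla t^\bb_{p_{n+1/2}}$ using the support condition $|\nabla t^\bb_{p_{n+1/2}}|\le\eta$, and conclude that each one-dimensional map is a bijection since its derivative stays strictly positive and it agrees with the identity outside $[-\eta,\eta]$. The only cosmetic difference is that you prove the derivative bound first and then invertibility, whereas the paper packages both inside the same computation of $f_n'(u)$.
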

\begin{proof}
By \eqref{defxi1}, only variables $\nabla t^\alpha_{p_{n+1/2}}$
with $n=0,\ldots ,l-1$ are modified and each $\nabla t^\alpha_{p_{n+1/2}}$
does not depend on the other $\nabla t^\alpha_{p_{m+1/2}}$, $m\neq n$. 
Let us freeze all the unchanged variables.
The transformation $\xi_{\alpha }$
then reduces to a set of $l$ one dimensional functions
$\xi_{n,\alpha }:\R \to \R $, $n=0,\ldots,l-1$, defined by 
\[
\xi_{n,\alpha } (u)=u+\alpha f_{n} (u), \mbox{ where} \ 
f_{n} (u)= \tilde{\chi}(\eta^{-2} (u^{2}+y_{p_{n+1/2}}^{2} ))
\prod_{e\in S}\left[\tilde{\chi}(\eta^{-2}\|\omega_{e_n}\|^2)
\tilde{\chi}(\eta^{-2}\|\omega_{e_{n+1}}\|^2)\right].
\]
These functions $\xi_{n,\alpha}$
coincide with the identity for any $|u|>\eta$ and are injective for all
$\alpha $ satisfying 
$2\eta^{-1} |\alpha| \|\tilde{\chi}'\|_\infty <1$ because of 
\begin{align}
\label{upper-bound-f-prime}
\sup_{u\in\R} |f_n' (u)|=\sup_{|u|\le \eta} |f_n' (u)|
\leq  2\eta^{-1}  \|\tilde{\chi}'\|_\infty .
\end{align} 
Taking a constant  $0<\cneun <(2\|\tilde{\chi}'\|_{\infty})^{-1}$,
we conclude that
for all $|\alpha | \leq \eta \cneun$ each
function $\xi_{n,\alpha }$ is a bijection.
The bound \eqref{bound-derivative-chi} follows also
from \eqref{upper-bound-f-prime}.
\end{proof}\vspace{0.2cm}

With these definitions we introduce the {\it deformed measure $\Pi_{\alpha }$} 
\begin{equation}\label{pigamma}
\Pi_{\alpha } (A)= \xi_{\alpha }[ \Pint ] (A):= 
\Pint  \left( \xi_{\alpha }^{-1}  A \right) \qquad \forall 
A\subseteq \Bomega_L  \mbox{ measurable.}
\end{equation}

\begin{lemma}
\label{upper-bound-Z-Pi-gamma}
Using the deformed measure $\Pi_{\alpha }$, we have
\begin{align}
\ln \Zint \leq  \Egamma + \Sgamma{\alpha} 
\label{boundfree}\end{align}
where 
\begin{align}\label{Egamma} 
 \Egamma &= \mathbb{E}_{\Pi_{\alpha } } [\Delta H]
=
\frac12\sum_{n=0}^{l-1}\mathbb{E}_{ \Pint  } 
\left[\nabla t^\bb_{p_{n+1/2}} +
\alpha\chi_{n+1/2}
\right]
   \\
 \Sgamma{\alpha} & = \mathbb{E}_{\Pi_{\alpha }  }
\left[ \ln  \frac{d\Pi_{\alpha } }{d\Pint  } \right] =
\mathbb{E}_{ \Pint  } \left[\Hint\circ \xi_{\alpha }- 
 \Hint-  \ln |\det D\xi_{\alpha }|\  \right]\label{Sgamma}
\end{align}
and $ D\xi_{\alpha }$ is the Jacobian matrix for the deformation.
\end{lemma}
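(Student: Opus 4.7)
The strategy is to invoke Lemma \ref{measurebound} with the concrete choice $\Pi=\Pi_\alpha$ from \eqref{pigamma} and then evaluate the two resulting terms directly via the change-of-variables formula associated with the diffeomorphism $\xi_\alpha$.

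First I would verify admissibility of $\Pi_\alpha$ in Lemma \ref{measurebound}. Lemma \ref{lemma-xi-gamma-invertible} guarantees that $\xi_\alpha$ is a bijection of $\Bomega_L$ for $|\alpha|\le\eta\cneun$; by construction it fixes the tree $T$, fixes all $y_\bb$ coordinates and all $\nabla t^\bb_e$ with $e\notin\gamma^{\0\ell}_\tback$, and is a compactly supported smooth perturbation of the identity in the $l$ remaining coordinates $\nabla t^\bb_{p_{n+1/2}}$. Consequently, $\Pi_\alpha$ is mutually absolutely continuous with respect to $\Pint$ with a bounded smooth positive density, and the compactness of the support of $\chi$ ensures that both $\mathbb{E}_{\Pi_\alpha}[\Delta H]$ and the relative entropy are finite. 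Lemma \ref{measurebound} then directly yields $\ln\Zint\le\Egamma+\Sgamma{\alpha}$.

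For the energy formula \eqref{Egamma}, the push-forward identity $\mathbb{E}_{\Pi_\alpha}[f]=\mathbb{E}_{\Pint}[f\circ\xi_\alpha]$ reduces everything to computing $\Delta H\circ\xi_\alpha$. Applying \eqref{repr-nabla-t} from Lemma \ref{lemma3.1} with $i=\0$ and $j=\ell$, the path $\gamma^{\0\ell}_\tback$ consists precisely of the backbone edges $p_{n+1/2}$, $n=0,\dots,l-1$, each contributing with coefficient $+1$ (since $\ell$ lies farther from $\uroot$ than $\0$ along $B$), so $t_\ell-t_\0=\sum_{n=0}^{l-1}\nabla t^\bb_{p_{n+1/2}}$. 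Substituting the definition \eqref{defxi1} of $\xi_\alpha$ replaces each summand by $\nabla t^\bb_{p_{n+1/2}}+\alpha\chi_{n+1/2}$, which is exactly \eqref{Egamma}.

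For the entropy formula \eqref{Sgamma}, I would combine the standard change-of-variables formula for pushforwards with the explicit Lebesgue density $(\Zint)^{-1}e^{-\Hint}$ of $\Pint$ on each tree fiber (which is preserved by $\xi_\alpha$, so the Jacobian determinant is well-defined). This yields
$$\ln\frac{d\Pi_\alpha}{d\Pint}\circ\xi_\alpha(\vec\bomega)=\Hint(\xi_\alpha(\vec\bomega))-\Hint(\vec\bomega)-\ln|\det D\xi_\alpha(\vec\bomega)|,$$
and integrating against $\Pint$ while invoking the pushforward identity once more produces \eqref{Sgamma}. There is no conceptual obstacle here; the only care needed is bookkeeping the Jacobian. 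Since $\chi_{m+1/2}$ depends on the backbone coordinate $\nabla t^\bb_{p_{n+1/2}}$ only when $n=m$, the matrix $D\xi_\alpha$ is in fact diagonal in the $l$ active directions with entries $1+\alpha\,\partial\chi_{n+1/2}/\partial\nabla t^\bb_{p_{n+1/2}}$, strictly positive by \eqref{bound-derivative-chi} and the choice of $\cneun$; all manipulations are therefore fully justified.
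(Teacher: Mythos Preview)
Your proposal is correct and follows essentially the same route as the paper: invoke Lemma \ref{measurebound} with $\Pi=\Pi_\alpha$, use the push-forward identity $\mathbb{E}_{\Pi_\alpha}[f]=\mathbb{E}_{\Pint}[f\circ\xi_\alpha]$ together with the telescoping sum for $t_\ell-t_\0$ to obtain \eqref{Egamma}, and compute the Radon--Nikodym derivative via the change-of-variables formula and the explicit density $e^{-\Hint}/\Zint$ to obtain \eqref{Sgamma}. One minor imprecision: the full Jacobian $D\xi_\alpha$ is not diagonal but only block-triangular (since $\chi_{n+1/2}$ also depends on the vertical variables $\omega_n,\omega_{n+1}$), though this does not affect the determinant and is in any case only needed later in Theorem \ref{entropyth}, not in this lemma.
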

\begin{proof}
The inequality \eqref{boundfree} follows from Lemma \ref{measurebound}. 
To obtain \eqref{Egamma} we use the representation
\[
\Delta H=\frac{t_{\ell}-t_{\0 }}{2}=\frac12 \sum_{n=0}^{l-1}\nabla t^\bb_{p_{n+1/2}}
\]
from \eqref{def-Z}  and the deformation \eqref{defxi1}
to see
\begin{align}
 \Egamma &= \mathbb{E}_{\Pi_{\alpha } } [\Delta H]=
\mathbb{E}_{ \Pint  } \left[\ \Delta H \circ \xi_{\alpha }\   \right]
=
\frac12\sum_{n=0}^{l-1}\mathbb{E}_{ \Pint  } 
\left[\nabla t^\bb_{p_{n+1/2}} +
\alpha\chi_{n+1/2}
\right].
\end{align}
To obtain \eqref{Sgamma}, we notice that 
\begin{align*}
 \frac{d\Pi_{\alpha } }{d\Pint  }&=  
\frac{d \xi_{\alpha }[\Pint ] }{d\Pint  }=
\frac{d \xi_{\alpha }[\Pint ] }{d \xi_{\alpha }[\lambda ] }
\frac{d \xi_{\alpha }[\lambda ] }{d\lambda }   
\left( \frac{d\Pint  }{d\lambda  } \right)^{-1}\cr
&\quad =
\left( \frac{d\Pint  }{d\lambda  }\circ \xi_{\alpha }^{-1}  \right) 
\frac{1}{\left|\det D\xi_{\alpha }  \right|\circ \xi_{\alpha }^{-1}} 
\left( \frac{d\Pint  }{d\lambda  } \right)^{-1}
\end{align*}
where $d\lambda(\vec\bomega)=d\vec\bomega$ is the Lebesgue measure times
the counting measure. Then 
\begin{align*}
 \Sgamma{\alpha} &=  \mathbb{E}_{\xi_{\alpha }[\Pint ] }
\left[ \ln  \frac{d\Pi_\alpha}{d\Pint  } \right]
=  \mathbb{E}_{\Pint  } 
\left[ \ln  
\frac{d\Pi_\alpha}{d\Pint  } \circ \xi_{\alpha } \right]
\cr
& =  \mathbb{E}_{\Pint  } 
\left[
 \ln  \frac{d\Pint  }{d\lambda  } \ - \ 
\ln \left(\frac{d\Pint  }{d\lambda  } \circ \xi_{\alpha } \right)
- \ln \left|\det D\xi_{\alpha }  \right|\right]
\end{align*}
Using $d\Pint /d\lambda = e^{-\Hint}/\Zint$ 
we conclude the proof.
\end{proof}

In the next two sections we prove separately the bounds on the entropy and 
energy term. The techniques for the two bounds are quite different: for the entropy 
we use a Taylor expansion while for the energy term we need to set up a transfer
operator approach.

%%%%%%%%%%%%%%%%%%%%%%%%%%%%%%%%%%%%%%%%%%%%%%%
%%%%%%%%%%%%%%%%%%%%%%%%%%%%%%%%%%%%%%%%%%%%%%%
\section{The entropy contribution}\label{sect:entropy}
%%%%%%%%%%%%%%%%%%%%%%%%%%%%%%%%%%%%%%%%%%%%%%%
%%%%%%%%%%%%%%%%%%%%%%%%%%%%%%%%%%%%%%%%%%%%%%%

\begin{theorem}\label{entropyth}
For any given $\eta>0$ and $\vec\beta$ the entropy contribution satisfies
\begin{equation}\label{eq-entr}
\Sgamma{\alpha}\ = \  \mathbb{E}_{\Pi_{\alpha }}
\left[  \ln \frac{d\Pi_{\alpha }  }{d\Pint }  \right]\ 
\leq \  \cfuenf \alpha^2l
\end{equation}
for all $\alpha\in\R$ with $|\alpha|\le\cneun\eta$ and some constant
$\cfuenf(\betamax,G_0,\eta)>0$. 
Here, $\cneun>0$ is the constant from Lemma \ref{lemma-xi-gamma-invertible},
$\betamax = \max_{e\in E_{1/2} } \{\beta_e \}$ and 
$\eta$ is the parameter appearing in the definition of $\xi_{\alpha }$. 
This bound holds uniformly in $L$.
\end{theorem}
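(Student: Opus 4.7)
The plan is to use that $\Sgamma{\alpha}\ge 0$ for every $\alpha$ (by \eqref{entropylowerb}) together with $\Sgamma{0}=0$ (since $\xi_0$ is the identity, $\Hint\circ\xi_0=\Hint$, and $\det D\xi_0=1$). Thus $\alpha\mapsto\Sgamma{\alpha}$ attains its minimum at $\alpha=0$, so $\frac{d}{d\alpha}\Sgamma{\alpha}\big|_{\alpha=0}=0$, and Taylor's theorem with integral remainder yields
\begin{equation*}
\Sgamma{\alpha}=\int_0^\alpha(\alpha-s)\,\frac{d^2}{ds^2}\Sgamma{s}\,ds\ \le\ \frac{\alpha^2}{2}\sup_{|s|\le|\alpha|}\Big|\frac{d^2}{ds^2}\Sgamma{s}\Big|.
\end{equation*}
It therefore suffices to bound the second derivative by a constant times $l$, uniformly in $s\in[-\cneun\eta,\cneun\eta]$ and in $L$.

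Next I would write $\Sgamma{s}=\mathbb{E}_{\Pint}[\phi(s,\cdot)]$ with the pointwise integrand
\begin{equation*}
\phi(s,\vec\bomega):=\Hint(\xi_s(\vec\bomega))-\Hint(\vec\bomega)-\ln|\det D\xi_s(\vec\bomega)|,
\end{equation*}
differentiate twice in $s$, and interchange derivative and expectation (justified because the cutoff $\chi$ has compact support and the integrands are dominated by constants). For the Jacobian, the matrix $D\xi_s$ is block-triangular with identity on rows of unshifted variables; since $\chi_{n+1/2}$ does not depend on $\nabla t^\bb_{p_{m+1/2}}$ for $m\ne n$, the block on the $l$ shifted coordinates is diagonal, so
\begin{equation*}
\det D\xi_s=\prod_{n=0}^{l-1}\Big(1+s\,\tfrac{\partial\chi_{n+1/2}}{\partial\nabla t^\bb_{p_{n+1/2}}}\Big).
\end{equation*}
Its second logarithmic derivative in $s$ is a sum of $l$ terms, each bounded by a constant via \eqref{bound-derivative-chi} together with the bound $1+s\,\partial\chi_{n+1/2}/\partial\nabla t^\bb_{p_{n+1/2}}\ge 1-2\cneun\|\tilde\chi'\|_\infty>0$ coming from Lemma \ref{lemma-xi-gamma-invertible}.

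For the Hamiltonian piece, decompose $\Hint\circ\xi_s-\Hint$ via \eqref{decomp HLlo}. The boundary sums $\pm\sum\nabla t^\bb_{p_{n+1/2}}/2$ involve only indices outside $[0,l-1]$ and are thus untouched by $\xi_s$; by Remark \ref{rem-path-gamma} the terms $h_e$ for vertical edges depend only on $\omega_n$ (and $T$), so they too are unaffected. Only the $|V_0|\cdot l$ horizontal contributions $h_e$ with $e\in E_{n+1/2}$, $0\le n\le l-1$, are modified, and each is modified only on $\{\chi_{n+1/2}>0\}$. On that support the triple $(\omega_n,\omega_{n+1/2},\omega_{n+1})$ lies in a ball of radius $\eta$, and after the shift in one of radius $(1+\cneun)\eta$. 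Lemma \ref{lemma3.1} together with Remark \ref{rem-path-gamma} then show that $\nabla t_e$, $y_e$, and the exponential prefactors appearing in \eqref{nablat1} stay uniformly bounded there, so the Hessian of $h_e$ in $\nabla t^\bb_{p_{n+1/2}}$ is uniformly controlled by a constant depending only on $G_0,\vec\beta,\eta,\cneun$. Summing over the affected edges gives $|\partial_s^2\phi(s,\vec\bomega)|\le Cl$ pointwise, which combined with the Jacobian bound and the Taylor step furnishes $\Sgamma{\alpha}\le\cfuenf\alpha^2 l$ with $\cfuenf$ proportional to $C$.

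The main obstacle is the uniform Hessian bound on the horizontal $h_e$: the $\cosh$ and exponential factors in $h_e$ are a priori unbounded, so one must carefully unfold the path representations of Lemma \ref{lemma3.1} to verify that inside $\{\chi_{n+1/2}>0\}$ every ingredient of $h_e$—including the auxiliary exponentials $e^{(t_i+t_j-t_{i_{e'}}-t_{j_{e'}})/2}$ that enter $y_e$—is a function of the finitely many bounded backbone-tree variables in $\omega_n,\omega_{n+1/2},\omega_{n+1}$ alone, and hence remains bounded. Once this uniformity is secured, differentiability under the expectation and the two-term Taylor argument combine to yield the $L$-uniform estimate $\Sgamma{\alpha}\le\cfuenf\alpha^2 l$ claimed in the theorem.
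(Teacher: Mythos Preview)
Your proposal is correct and follows essentially the same route as the paper: both use $\Sgamma{0}=0$ and $\Sgamma{\alpha}\ge 0$ to kill the first-order term, reduce to a uniform bound on the second $\alpha$-derivative of $\mathbb{E}_{\Pint}[\Hint\circ\xi_\alpha-\Hint-\ln|\det D\xi_\alpha|]$, obtain the Jacobian as a product of $l$ diagonal factors controlled via Lemma~\ref{lemma-xi-gamma-invertible}, and observe that only the $|V_0|\cdot l$ horizontal $h_e$ with $e\in E_{n+1/2}$, $0\le n\le l-1$, are affected, each having bounded second derivative because the cutoff $\chi_{n+1/2}$ confines the relevant backbone-tree variables (and their shift by at most $c_6\eta$) to a compact box. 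The paper merely carries out the derivative computations for $y_e\circ\xi_\alpha$ more explicitly via the path formula~\eqref{nablat1}, whereas you invoke smoothness on a compact set; the substance is the same.
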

\begin{proof}{}\label{}
The derivatives of $\Sgamma{\alpha}$ can be calculated by differentiating 
the argument of the expectation in \eqref{Sgamma}. This is possible
because the cutoff function $\chi$, defined in 
\eqref{chidef}, is compactly supported.
By relation  \eqref{entropylowerb}, the entropy is always positive
or zero: 
$\Sgamma{\alpha}\geq 0$ for all $\alpha$.
Moreover 
\[
\Sgamma{0}=   \mathbb{E}_{\Pint}
\left[  \ln \frac{d\Pint}{d\Pint}  \right] = 0.
\]
Therefore $[\partial_{\alpha }\Sgamma{\alpha}]_{\alpha=0}=0$ 
and the first non zero
term in the  Taylor expansion for $\alpha $ is the second derivative.
Hence
\begin{align}
\label{S-alpha-derivative}
\Sgamma{\alpha} =  
\frac{\alpha^{2}}{2}\  \frac{\partial^{2}}{\partial\tilde{\alpha }^{2}} 
\Sgamma{\tilde\alpha}  
\  = \ \frac{\alpha^{2}}{2}\ \mathbb{E}_{\Pint   } \left[  
 \frac{\partial^{2}}{\partial\tilde{\alpha }^{2}} 
\Hint\circ \xi_{\tilde\alpha }- 
\frac{\partial^{2}}{\partial\tilde{\alpha }^{2}}  \ln |\det D\xi_{\tilde{\alpha }}| \right]
\end{align}
for some $\tilde{\alpha }\in  [0,\alpha ]$.
In  the last equality we used \eqref{Sgamma}.
In the following, we prove a bound for the argument of the expectation in 
\eqref{S-alpha-derivative} for any $0\le \tilde\alpha\le\cneun\eta$. 
Below, we write $\alpha$ instead of $\tilde\alpha$ for simplicity. 

\paragraph{Bound on the energy: $ \frac{\partial^{2}}{\partial\alpha^{2}} 
(\Hint\circ \xi_{\alpha })$.} 
The deformation $\xi_{\alpha}$ acts only on the horizontal variables 
$\nabla t^{\bb }_{p_{n+1/2}}$ belonging
to the backbone segment connecting $\0 $ to $\ell$, hence 
for $0\leq n\leq l-1$. 
Using the decomposition \eqref{decomp HLlo} and \eqref{Hn12} 
for $\Hint$, each variable $\nabla t^{\bb }_{p_{n+1/2}}$ 
appears only inside the terms $h_e$ for edges $e\in E_{n+1/2}$, therefore  
\begin{equation}\label{entr1}
\frac{\partial^{2}}{\partial\alpha^{2}} 
(\Hint\circ \xi_{\alpha }) = 
 \sum_{n=0}^{l-1} \sum_{e\in E_{n+1/2}}
\frac{\partial^{2}}{\partial \alpha^{2} }  (h_e\circ \xi_\alpha).
\end{equation}
Applying \eqref{Hn12},  for each  $e\in E_{n+1/2}$ 
with $0\leq n\leq l-1$ we have 
\begin{align*}
 h_e\circ \xi_{\alpha  } (\vec{\bomega})
&=  \beta_e 
\left[ \cosh(\nabla t_e+ \alpha \chi_{n+1/2})-1+  
\frac{ (y_e\circ \xi_{\alpha  })^2 }{2} \right]-
\ln \frac{\beta_e}{2\pi }\, \one_{\{e\in T \}}\cr
&\qquad 
  + [\nabla t^{T}_e + \alpha \chi_{n+1/2}] \one_{\{e\in B^{c} (T) \}}
\end{align*}
where all terms from the backbone tree present in \eqref{Hn12} disappear
since  $ B^{c}(\tback)\cap E_{n+1/2}=\emptyset$ for all $n$.
Taking the second derivative in $\alpha $, we have
\begin{equation}\label{entr2}
\frac{\partial^{2}}{\partial  \alpha^{2} } h_e\circ \xi_{\alpha }=
\beta_e \ 
\Big [   \chi_{n+1/2}^{2} \cosh(\nabla t_e+ \alpha \chi_{n+1/2})\  + \  
\left[\partial_{\alpha } ( y_e\circ \xi_{\alpha  }) \right]^{2}\  + \ 
(y_e\circ \xi_{\alpha  }) [\partial_{\alpha }^{2} (y_e\circ \xi_{\alpha  })]  \Big ].
\end{equation}
First, we study the terms involving $y_e\circ\xi_\alpha$ for 
a horizontal edge $e= (i\sim j)\in E_{n+1/2}$ with $i\in V_{n}$ and 
$j\in V_{n+1}$. Using \eqref{nablat1} we can write 
\begin{align*}
& \pm y_e =  \sum_{e'\in \gamma^{ij}_{\tback }}  Y_{e'}
\end{align*}
with plus sign if $e$ is directed from $i$ to $j$ with respect to the
bookkeeping orientation and minus 
sign otherwise; the sign is irrelevant below.  
Note that $\gamma^{ri}_{\tback }\cap  \gamma^{ij}_{\tback }\subset E_{n}$ and
$ \gamma^{rj}_{\tback }\cap  \gamma^{ij}_{\tback }\subset E_{n+1/2}\cup E_{n+1}$.
The only term which changes when we apply $\xi_\alpha$ to $Y_{e'}$ is 
$\nabla t^\bb_{e''}$ where $e''=p_{n+1/2}$. Consequently, for $e'=p_{n+1/2}$, 
one has $Y_{e'}\circ\xi_\alpha=Y_{e'}$ and for all other 
$e'\in \gamma^{ij}_{\tback}$
one has $Y_{e'}\circ\xi_\alpha=Y_{e'}e^{\pm\frac12\alpha\chi_{n+1/2}}$. 
More precisely, we get  
\begin{align}
\pm y_e\circ \xi_{\alpha } & =\  \sum_{e'\in \gamma^{ij}_{\tback }} 
Y_{e'} \left[ e^{\frac12\alpha \chi_{n+1/2}}
\one_{\{ e'\in E_{n} \}} + \one_{\{e'=p_{n+1/2} \}}+
 e^{-\frac12\alpha \chi_{n+1/2}}
\one_{\{ e'\in E_{n+1} \}}
  \right],\cr
\pm \partial_{\alpha }(y_e\circ \xi_{\alpha  }) & =  \frac12\chi_{n+1/2} 
\Big [
\sum_{e'\in \gamma^{ij}_{\tback }\cap E_{n}} 
Y_{e'}\,   e^{\frac12\alpha \chi_{n+1/2}}  - 
\sum_{e'\in \gamma^{ij}_{\tback }\cap E_{n+1}} 
Y_{e'}\, e^{-\frac12\alpha \chi_{n+1/2}}
  \Big ],\cr
\pm \partial_{\alpha }^{2} (y_e\circ \xi_{\alpha  }) & =  
\frac14\chi^{2}_{n+1/2} \Big [
\sum_{e'\in \gamma^{ij}_{\tback }\cap E_{n}} 
Y_{e'}\,   e^{\frac12\alpha \chi_{n+1/2}}   + 
\sum_{e'\in \gamma^{ij}_{\tback }\cap E_{n+1}} 
Y_{e'}\, e^{-\frac12\alpha \chi_{n+1/2}}
  \Big ].
\end{align}
Let us assume the constraint $\chi_{n+1/2}\neq 0$ holds. It
ensures that  $|\nabla t^{\bb}_{e'}|<\eta$ and $|y_{e'}^{\bb }|<\eta$  
for all $e'\in E_{n}\cup E_{n+1/2}\cup E_{n+1}$. Furthermore, 
$|\gamma^{ij}_{\tback }|\leq 2|S|+1$ holds for any horizontal edge with endpoints
$i,j$. Hence, using \eqref{nablat1},
for each $e= (i\sim j)\in  E_{n+1/2}$, we have $|Y_{e'}| \leq  \eta e^{|S| \eta}$
for all $e'\in \gamma^{ij}_{\tback }$.
Thus,
\begin{align}
| \partial_\alpha (y_e\circ\xi_\alpha)| \le |S|\eta e^{\eta|S|+\frac{\alpha }{2}},\  
|\partial^{2}_\alpha (y_e\circ\xi_\alpha)| \le \frac{|S|\eta}{2} 
e^{\eta|S|+\frac{\alpha }{2}},\ 
|(y_e\circ\xi_\alpha)| \le (2|S|+1)\eta e^{\eta|S|+\frac{\alpha }{2}}.
\nonumber\end{align}
Since $|\alpha|\le\cneun\eta$ by the assumption of the theorem, we have 
%Hence, the last two summands in \eqref{entr2} satisfy the bound 
\begin{align}
\left| 
\left[\partial_{\alpha } ( y_e\circ \xi_{\alpha  }) \right]^{2}\  + 
(y_e\circ \xi_{\alpha  }) [\partial_{\alpha }^{2} (y_e\circ \xi_{\alpha  })] 
\right| 
\le    3 |S|^{2} \eta^2 e^{\eta(2|S|+\cneun)}.
\end{align}
Moreover, since $|\nabla t_e|= |\nabla t_{i,j}|\leq  \eta (2|S|+1)$ by 
\eqref{repr-nabla-t}, we have
\begin{align} 
\chi_{n+1/2}^{2} \cosh(\nabla t_e+ \alpha \chi_{n+1/2})
\le \cosh( \eta (2|S|+1+\cneun)).
\end{align} 
Inserting all these bounds in \eqref{entr1}-\eqref{entr2} above, 
we have
\begin{equation}\label{entropy-est}
 \frac{\partial^{2}}{\partial\alpha^{2}} 
\left(H^{\0 \ell }\circ\xi_{\alpha }\right) \leq  
l |E_{1/2}| [\max_{e\in E_{1/2}}\beta_e] \left( \cosh [\eta(2|S|+1+\cneun)]+ 
 3|S|^2 \eta^2 e^{\eta(2|S|+\cneun)} \right)
=:l \czwoelf
\end{equation}

\paragraph{Bound on the determinant 
$\frac{\partial^{2}}{\partial{\alpha }^{2}}  \ln |\det D\xi_{\alpha}|$.}
The Jacobi matrix of the deformation 
$(\nabla t_\bb,y_\bb)\mapsto(\nabla t^\alpha,y_\bb)$ has a block
structure with the block $\partial y_\bb/\partial\nabla t_\bb=0$ 
and $\partial y_\bb/\partial y_\bb=\id$. Thus,
the Jacobi determinant for the deformation $\xi_{\alpha }$ 
is given by
\[
\det D\xi_{\alpha } = 
\det \left( \frac{\partial \nabla t^\alpha_e}{\partial \nabla t_{e'}^\bb }
\right)_{e,e'\in\tback}.
\]
Recall from \eqref{defxi1} that only the variables $\nabla t_{p_{n+1/2}}^\bb$ for 
edges $p_{n+1/2}$ on the backbone between levels $0$ and $l$ are deformed. 
We get 
\begin{align}
\label{derivative-nabla-t-alpha}
 \left( \frac{\partial \nabla t^\alpha_e}{\partial \nabla t_{e'}^\bb }
\right)_{e e'}= \delta_{e e'}+ \alpha  X_{e e'}, \quad 
\mbox{where} \quad 
\left\{ \begin{array}{lll}
X_{ee'} &= 0 & \mbox{if} \ e\not\in \gamma^{\0 \ell}_{\tback }\\
X_{p_{n+1/2}e'} &=  \frac{\partial\chi_{n+1/2}}{\partial \nabla t_{e'}^\bb }  
& \ n=0,\dotsc l-1\\
\end{array}\right.
\end{align}
where $\chi_{n+1/2}$ depends on $\nabla t_{p_{n+1/2}}^\bb$ and 
$\nabla t_{e}^\bb$  for $e\in E_{n}\cup E_{n+1} $. 
We order the variables $\nabla t^\bb$ so that
 the horizontal variables $\nabla t^\bb_{p_{n+1/2}}$, $n=0,\ldots,l-1$, 
come first and then all the others. With  this ordering and 
using  \eqref{derivative-nabla-t-alpha}, $\partial\nabla t^\alpha/\partial\nabla t^\bb$ 
becomes a triangular matrix.
%formula \eqref{derivative-nabla-t-alpha} 
%implies that the Jacobi matrix $\partial\nabla t^\alpha/\partial\nabla t^\bb$ 
%is a triangular matrix. 
The only diagonal entries that may be unequal to $1$  are $1+ \alpha  X_{p_{n+1/2}p_{n+1/2}}$,
$n=0,\ldots,l-1$.
Therefore,
\begin{align}
\label{det-D-xi-alpha}
\det D\xi_{\alpha } =\prod_{n=0}^{l-1}(1+\alpha  X_{p_{n+1/2}p_{n+1/2}}).
\end{align}
By Lemma \ref{lemma-xi-gamma-invertible}, we have the bound 
$|X_{p_{n+1/2}p_{n+1/2}}|\le 2\eta^{-1}\|\tilde\chi'\|_\infty$. 
The assumption $|\alpha|\le\cneun\eta< \eta(2\|\tilde\chi'\|_\infty)^{-1}$ 
of the theorem implies that every factor in the product \eqref{det-D-xi-alpha} is strictly positive. 
Therefore,
\[
\ln \det D\xi_\alpha 
= \sum_{n=0}^{l-1} \ln (1+\alpha  X_{p_{n+1/2}p_{n+1/2}} ).
\]
Taking the second derivative in $\alpha $ we have
\begin{equation}\label{det-est}
 0\leq  -\frac{\partial^{2}}{\partial {\alpha }^{2}}  \ln \det D\xi_{{\alpha }}
=  \sum_{n=0}^{l-1} \frac{ X_{p_{n+1/2}p_{n+1/2}}^{2} }{(1+\alpha  X_{p_{n+1/2}p_{n+1/2}})^{2}}
\leq  l\, \frac{  4\|\tilde{\chi}'\|_\infty^{2} }{\eta^{2} 
(1-\cneun 2\|\tilde{\chi}'\|_\infty  )^{2}}=:\, l\,  \cdreizehn 
\end{equation}
Inserting  \eqref{entropy-est} and \eqref{det-est} in \eqref{S-alpha-derivative} above and setting 
$\cfuenf = (\czwoelf +\cdreizehn)/2$ the result follows.
\end{proof}

%%%%%%%%%%%%%%%%%%%%%%%%%%%%%%%%%%%%%%%%%%%%%%%%%%%%%
%%%%%%%%%%%%%%%%%%%%%%%%%%%%%%%%%%%%%%%%%%%%%%%%%%%%%%
\section{Local variables}\label{sect:locv}
%%%%%%%%%%%%%%%%%%%%%%%%%%%%%%%%%%%%%%%%%%%%%%%%%%%%%%

\subsection{Local tree variables}
%%%%%%%%%%%%%%%%%%%%%%%%%%
Our next goal is to describe each spanning tree $T$ of $\G_L$ by a 
sequence of local tree variables. 
A similar but different local representation of spanning trees was done 
in the case that the finite graph $G_0$ is a tree in \cite{rolles2005}. 

\subsubsection{Definitions and properties}

\paragraph{Preliminary definitions.}
In the following it will be more convenient to work on the infinite graph 
$\G$ instead of $\G_L$. We define the backbone tree $\tback_{\infty}$ 
as the unique spanning tree on $\G$ that coincides 
with $\tback$ on every finite piece $\G_L$.
Similarly, we extend every spanning tree $T\in\T_L$ to a spanning tree $T_\infty$
of the two-sided infinite graph $\G$ by attaching copies of $S$ to the backbone:
\begin{align}
T_\infty:=
T\cup \Big(\tback_\infty\cap\bigcup_{n\in\frac12\Z:\, n<-\ul \text{ or }n>\ol}E_n\Big). 
\end{align}
We identify the tree $T$ with its infinite extension $T_\infty$.
Let $\T_\infty:=\bigcup_L\T_L$ denote the set of 
all the possible spanning trees of $\G$ which agree far outside on both 
sides with $\tback_\infty$. 
For $T\in\T_\infty$, there is a 
unique two-sided infinite simple path in $T$ which goes 
from levels near $-\infty$ to levels near $\infty$; 
we call it the backbone in $T$ and denote the set of its edges by $B(T)$.
Since every tree $T\in\T_\infty$ is the infinite extension of a finite tree
$T_{L}\in \T_L$,    this definition coincides inside $T_{L}$ with the definition 
of $B(T)$ we introduced in Sect.\  \ref{sect:grad-tree}.

\paragraph{Translation.}
For any $m\in\Z$, we define translation operations  on vertices
$\theta^m:V\to V$, by $\theta^m(n,v)=(n+m,v)$ and on edges
$\theta^m:E\to E$, by $\theta^m e_n=e_{n+m}$ for $e_n\in E_n$, $n\in\Z$, and
$\theta^mv_{n+1/2}=v_{n+m+1/2}$ for $v_{n+1/2}\in E_{n+1/2}$.
Furthermore, we define a translation operation on trees
$\theta^m:\T_\infty\to \T_\infty$,
$T\mapsto \theta^mT$, by $\theta^mT=\{\theta^m e:\;e\in T\}$.

\paragraph{The tree structure near level 0.}
Looking at the expression for the interpolated measure \eqref{decomp HLlo}, and
more precisely, 
the contribution $h_{e} (\vec{\bomega})$ in \eqref{Hn12} 
of each edge $e\in E$, we see that 
the only information we need on the tree $T$ to compute 
$h_{e} (\vec{\bomega})$ are:
$(a)$ whether $e\in T$, $(b)$ if yes its orientation
in the tree  and $(c)$  whether $e\in B(T)$ or $B^{c} (T)$.
We consider the level $n=0$ first. To describe a tree $T\in\T_\infty$ 
locally near level $0$, we introduce an auxiliary tree which is a
simplification of $T$ with the same connectedness properties 
near level $0$. For the description of the auxiliary tree, we do not
need the {\it full} tree $T$, but only five ingredients 
$(A^\links,b^\links,F,b^\rechts,A^\rechts)$ coming
out of a {\it fixed} finite set: $F$ is the set of tree lines near level $0$,
$A^\links/A^\rechts$ encode which vertices in $V_{0}$ are connected by $T$ on
the left/right and finally $b^\links/b^\rechts$ identify the beginning/end of the
backbone segment at level $0$. The same definitions hold for the local tree structure at level $n$.
These local informations are enough to reconstruct all
properties of the tree $T$ we need for the energy at level 0. 
We will see, that if a compatibility condition is satisfied,
they are also sufficient to reconstruct the full tree. The precise definitions are given below.

\begin{definition}[Local tree structure]
\label{def:local tree vars}
Consider a tree $T\in\T_\infty$.

\noindent\textbf{The tree lines.} We denote by 
\begin{align}
F=F_{0,T}:=T\cap(E_{-1/2}\cup E_0 \cup E_{1/2})
\end{align}
the set of tree lines at levels $-1/2$, $0$, and $1/2$.
Note that $F$ is a forest. This means it is a set of lines  with no loops.\\[2mm]
\textbf{Connectedness to the left or to the right.}
We introduce a partition $A^\links=A^\links_{0,T}$ of the subset 
$\{v\in V_0: v_{-1/2}\in T\}$ of vertices in $V_0$ that have a horizontal 
edge in $T$ attached to their left. 
Vertices $u,v\in V_0$ belong to the same class in  $A^\links$ if they are 
connected by a path in $T\cap\bigcup_{m\in\frac{\Z}{2}, m < 0} E_m$, i.e.\ 
using only edges on levels $\le -1/2$. For $v\in V_0$ with $v_{-1/2}\in T$, 
its class in $A^\links$ is denoted by $[v]^\links=[v]^\links_{0,T}$ (this class
may in some cases consist of a single point).

The partition $A^\rechts=A^\rechts_{0,T}$ of the set 
$\{v\in V_0: v_{1/2}\in T\}$ is defined similarly, using 
paths in $T\cap\bigcup_{m\in\frac{\Z}{2}, m > 0} E_m$, i.e.\ only edges
on levels $\ge 1/2$. The class of any $v\in V_0$ with $v_{1/2}\in T$ 
is denoted by $[v]^\rechts=[v]_{0,T}^\rechts$. 
\\[2mm]
\textbf{The backbone.}
Finally, when traversing $B(T)$ from $-\infty$ to 
$\infty$, there is a vertex $b^\links=b^\links_{0,T}$ in $V_{0}$
traversed first
among all vertices on level $0$.
Similarly, there is a vertex $b^\rechts=b^\rechts_{0,T}$
on level $0$ traversed last.
\end{definition}
\begin{figure}
\centerline{\psfig{figure=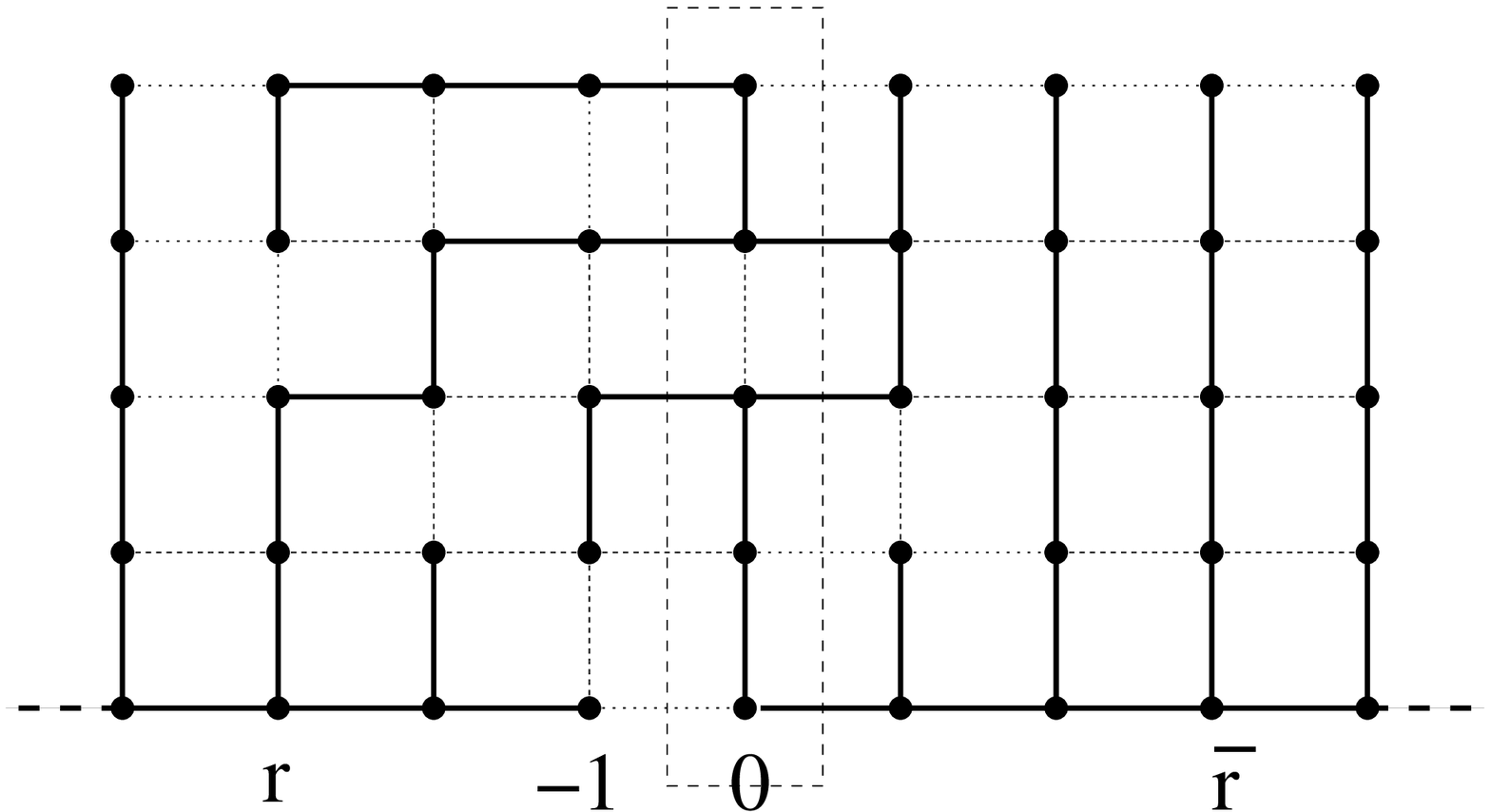,width=7cm}\hspace{0.8cm} 
\psfig{figure=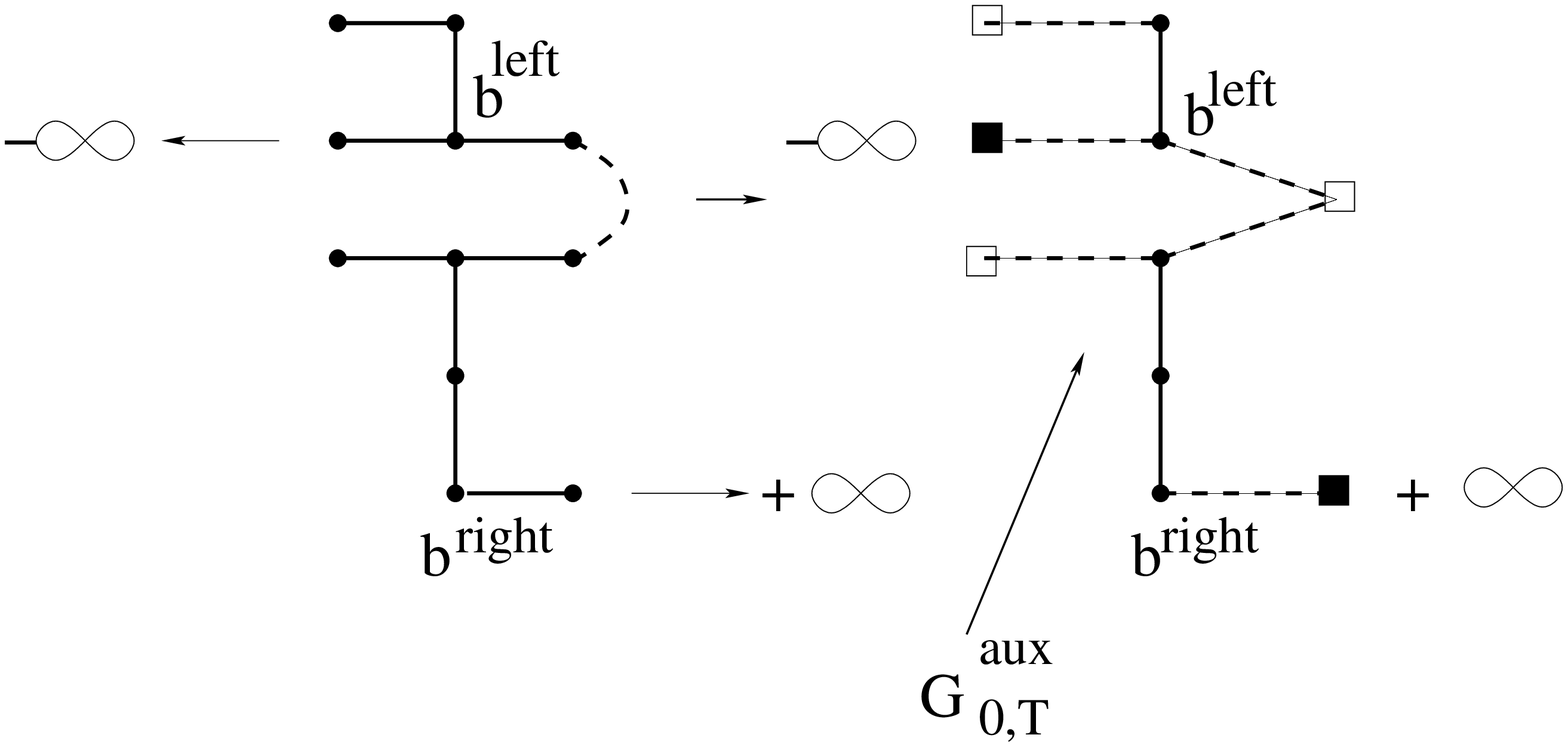 ,width=7cm}}
\caption{(a) an example of spanning tree $T\in \T_{L}$ and its extension in
 $\T_\infty$ 
with the structure at level $0$ put in evidence; (b) extracting the local structure
of $T$  at  level $0$; c) the corresponding auxiliary graph  $\aux_{0,T}$.}
\label{fig1}
\end{figure}
One may visualize the elements of $A^\links$ and $A^\rechts$ as being 
distinct auxiliary 
vertices ``to the left at level $-1$'' and ``to the right at level $1$'', 
respectively. For any vertex
$v\in V_0$ such that there is a horizontal line $v_{-1/2}$ in $T$ connecting it 
to $(-1,v)$, we draw an auxiliary line 
from $v$ to the vertex $[v]^\links$ (a square dot in Figure \ref{fig1}). 
Similarly, we draw
auxiliary lines on the right.
In this way, we get a graph $\aux=\aux_{0,T}$ as follows.  
Its set of vertices is the 
union of the set $V_{0}$ of vertices at level $0$ together
with the new vertices corresponding to the classes in 
$A^\links$ and $A^\rechts$. 
Its set of (undirected) edges consists of the forest $F\cap E_0$ and 
the auxiliary lines introduced above. By construction, 
$\aux_{0,T}$ is a tree; see Figure \ref{fig1} for an example.
We will denote the auxiliary vertex associated to 
$b^\links$ on the left (in $\aux_{0,T}$) by $-\infty:=[b^\links]^\links$. 
Similarly, $\infty:=[b^\rechts]^\rechts$.
With these notions, we can define the local tree variables. 
\begin{definition}[Local tree variables]
For any tree $T\in\T_\infty$, we define its ``local tree variable'' 
at level $0$ by 
\begin{align}
\label{def-treevar}
\tau=\tau_{0,T}:=(A^\links ,b^\links,F,b^\rechts,A^\rechts).
\end{align}
We define also its ``local tree variable'' at level $n\in\Z$ by taking the 
local tree variable at level $0$ of the shifted tree $\theta^{-n}T$: 
\begin{align}
\tau_{n,T}=(A_{n,T}^\links ,b_{n,T}^\links,F_{n,T},
b_{n,T}^\rechts,A_{n,T}^\rechts):=\tau_{0,\theta^{-n}T}.
\label{tau-n}\end{align}
All other quantities of the form $\operatorname{``something''}_{0,T}$ can 
be generalized
to $\operatorname{``something''}_{n,T}:=\operatorname{``something''}_{0,\theta^{-n}T}$,
as well.
Both $\tau_{0,T}$ and $\tau_{n,T}$ belong to the set $\treevar$
defined as follows 
\begin{align}
\treevar = \{ \tau_{0,T}:T\in\T_\infty\}. 
\end{align}
\end{definition}
Note that $\treevar$ is a {\it finite} set since there are only finitely
many choices for $A^{\links/\rechts}$, $b^{\links/\rechts}$, and $F$. 

The local tree structure near level $n$ can be completely
recovered from $\tau_{n,T}$. This is proved in the following theorem.
\begin{theorem}
\label{thm:recover information from word} 
% More precisely,
For all $T\in\T_\infty$, $n\in\Z$, for any given, known
edge $e\in E_{-1/2}\cup E_0 \cup E_{1/2}$,
all information we need on its copy $\theta^ne$ at level $n$, that is 
$(a)$ whether it belongs to $T$, $(b)$ if the answer is yes,
which end point of $\theta^ne$ is closer to $-\infty$ in $T$, 
hence its orientation in $T$, and $(c)$
whether $\theta^ne$ belongs to the backbone $B(T)$ of $T$, 
can be recovered from knowing $\tau_{n,T}$ without knowing
$n$, $T$ or $B(T)$ explicitly.
Moreover, the following map is one-to-one.
\begin{align}
\word:\T_\infty\to\treevar^\Z, \quad
\word(T)=(\tau_{n,T})_{n\in\Z}
\end{align}
\end{theorem}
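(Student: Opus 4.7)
I plan to reduce to $n=0$ via the identity $\tau_{n,T}=\tau_{0,\theta^{-n}T}$ and the fact that the three pieces of information (a), (b), (c) are all translation equivariant. Given $\tau = (A^\links, b^\links, F, b^\rechts, A^\rechts)$ alone, I will rebuild the auxiliary tree $\aux_{0,T}$: its vertex set is $V_0$ together with the classes appearing in $A^\links$ and $A^\rechts$, and its edge set is $F \cap E_0$ together with the auxiliary edges $\{v, [v]^\links\}$ for each $v$ with $v_{-1/2} \in F$ and $\{v, [v]^\rechts\}$ for each $v$ with $v_{1/2} \in F$. The distinguished vertices $-\infty := [b^\links]^\links$ and $+\infty := [b^\rechts]^\rechts$ are likewise read off from $\tau$, and $\aux_{0,T}$ is a tree, so paths between vertices are unique.

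Let $\iota$ be the natural bijection from $F$ onto the edge set of $\aux_{0,T}$, sending $e \in F\cap E_0$ to itself and each horizontal edge $v_{\mp 1/2}$ to the corresponding auxiliary edge. Claim (a) is immediate from $e \in T \iff e \in F$. For (c), I will argue that $B(T)$ enters level $0$ via $b^\links_{-1/2}$, leaves it via $b^\rechts_{1/2}$, and in between follows the unique path in $F \cap E_0$ from $b^\links$ to $b^\rechts$; under $\iota$ the slab-portion of $B(T)$ is precisely the unique $(-\infty,+\infty)$-path in the tree $\aux_{0,T}$, so $e \in B(T)$ iff $\iota(e)$ lies on that path. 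For (b), the endpoint of $e \in T$ closer to $-\infty$ is the one whose component of $T \setminus \{e\}$ contains the left end; for $e \in F \cap E_0$ this matches orienting $\iota(e)$ in $\aux_{0,T}$ away from $-\infty$, and for $e = v_{-1/2}$ the vertex $(-1,v)$ sits on the $-\infty$-side iff $[v]^\links = -\infty$ (otherwise the $T$-path from $(-1,v)$ to the left end must pass through $(0,v)$), which again coincides with orienting $\iota(e)$ away from $-\infty$; the case $v_{1/2}$ is symmetric.

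Injectivity of $\word$ then follows quickly: every edge of $\G$ lies in some slab $E_{m-1/2} \cup E_m \cup E_{m+1/2}$, so by (a) applied at each level $m$ the family $(F_{m,T})_{m\in\Z}$ extracted from $(\tau_{m,T})_{m\in\Z}$ determines $T$ uniquely. The main obstacle is the backbone step (c): I must verify that the global object $B(T)$---the unique two-sided infinite simple path in $T$---intersects the slab exactly in the claimed simple form, and that the classes in $A^\links, A^\rechts$ correctly encode which $V_0$-vertices share a $T$-component outside the slab. Both points rely on $b^\links$ (respectively $b^\rechts$) being defined as the first (respectively last) level-$0$ vertex traversed by $B(T)$, which prevents $B(T)$ from re-crossing the slab boundaries once it has exited, and thereby forces its slab-portion to be precisely the simple $(-\infty,+\infty)$-path in $\aux_{0,T}$.
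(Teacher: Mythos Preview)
Your overall architecture is right and matches the paper's: rebuild $\aux_{0,T}$ from $\tau$, identify $-\infty=[b^\links]^\links$ and $+\infty=[b^\rechts]^\rechts$, and read off (a), (b), (c) from the tree structure of $\aux_{0,T}$. Injectivity via $T=\bigcup_n \theta^n F_{n,T}$ is also the paper's argument. However, two of your intermediate claims are false, and without correcting them the proof does not go through.

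\textbf{The backbone need not stay in $E_0$ between $b^\links$ and $b^\rechts$.} Your sentence ``in between follows the unique path in $F\cap E_0$ from $b^\links$ to $b^\rechts$'' is wrong in general: $B(T)$ can dip back below level $-1/2$ (or above $+1/2$) between its first and last visits to level $0$. Likewise, for horizontal edges your criterion ``$(-1,v)$ sits on the $-\infty$-side iff $[v]^\links=-\infty$'' is too restrictive. Here is a single counterexample that breaks both claims. Take $G_0$ the path $p$--$q$--$r$, and let $T$ contain (in the relevant range) the edges $p_{-3/2}$, the two level-$(-2)$ verticals, $p_{-1/2}$, $q_{-1/2}$, $r_{-1/2}$, the level-$(-1)$ vertical $q$--$r$, the level-$0$ vertical $p$--$q$, $r_{1/2}$, and both level-$1$ verticals, glued to $\tback_\infty$ outside. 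Then the backbone runs
\[
\ldots,(-1,p),(0,p),(0,q),(-1,q),(-1,r),(0,r),(1,r),\ldots
\]
so $b^\links=p$, $b^\rechts=r$, but the backbone between them leaves $E_0$ and uses $q_{-1/2}$ and $r_{-1/2}$. Moreover $A^\links$ has two classes, $-\infty=\{p\}$ and $A=\{q,r\}$, and for $e=r_{-1/2}$ we have $[r]^\links=A\neq-\infty$ yet $(-1,r)$ \emph{is} on the $-\infty$-side in $T$ (the $T$-path from $(-1,r)$ to $-\infty$ goes via $(-1,q),(0,q),(0,p),(-1,p)$ and never touches $(0,r)$).

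What does survive, and what the paper actually proves (its Lemma on ``coinciding pieces''), is your \emph{second} formulation in each case: the slab portion $B(T)\cap(E_{-1/2}\cup E_0\cup E_{1/2})$, pushed through $\iota$, is exactly the $(-\infty,+\infty)$-path in $\aux_{0,T}$; and for any $e\in F$ the endpoint closer to $-\infty$ in $T$ corresponds to the endpoint of $\iota(e)$ closer to $-\infty$ in $\aux_{0,T}$. The paper establishes this by the simple observation that every excursion of a $T$-path below level $-1/2$ (respectively above $+1/2$) starts and ends at two level-$0$ vertices in the same class of $A^\links$ (respectively $A^\rechts$), hence collapses to a single auxiliary edge through that class; this is exactly the mechanism your last paragraph alludes to but your earlier assertions contradict. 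If you replace the two faulty intermediate claims by this excursion-collapsing argument, the proof is complete and essentially identical to the paper's.
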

The map $\word$ above is not onto (except in the trivial case 
of the one-dimensional chain $\G=\Z$). 
To describe its range, we need to introduce a matching
condition as follows.
\begin{definition}[Matching relation for tree variables]
Let $\tau,\tau'\in\treevar$. We say that $\tau$ can be followed by $\tau'$, 
in symbols $\tau\fol\tau'$, if there is a tree $T\in\T_\infty$ such 
that $\tau_{0,T}=\tau$ and $\tau_{1,T}=\tau'$. Furthermore, for all 
$L=(-\ul,\ol)$, using the abbreviation $\tau_\bb=\tau_{0,\tback_\infty}$, 
we define 
\begin{align}
\words_L:= 
 \{ (\tau_n)_{n\in\Z}\in\treevar^\Z:\; \forall n\in\Z \; \tau_n\fol \tau_{n+1}
\text{ and } \forall n\in\Z\setminus[-\ul,\ol] \; 
\tau_n=\tau_\bb \}. 
\end{align}
\end{definition}
With this definition we are finally able to reconstruct spanning trees from
sequences of local tree variables, as follows. 
\begin{theorem}
\label{thm:bijection}
For any $L$, the function $\word$ maps $\T_L$ bijectively onto $\words_L$. 
The map $\word:\T_\infty\to\words:= \bigcup_L \words_L$ is a bijection. 
Moreover, there is $N\in\N$ such that for all $\tau,\tau'\in\treevar$,
there are $\tau_0,\ldots,\tau_N\in\treevar$ with
\begin{align}
\tau=\tau_0\fol \tau_1\fol\ldots\fol \tau_N=\tau'.
\end{align}
\end{theorem}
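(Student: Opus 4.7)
The plan is first to show $\word$ lands in $\words_L$, then to address surjectivity, and finally to handle the uniform bound $N$.

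That $\word(\T_L)\subseteq\words_L$ is straightforward: given $T\in\T_L$, each shifted tree $\theta^{-n}T\in\T_\infty$ witnesses $\tau_{n,T}\fol\tau_{n+1,T}$, and since $T$ coincides with $\tback_\infty$ outside $\G_L$, $\tau_{n,T}=\tau_\bb$ for $n\notin[-\ul,\ol]$. Injectivity of $\word$ is already contained in Theorem~\ref{thm:recover information from word}, which lets us recover, from $\tau_{n,T}$ alone, exactly which edges at levels $n-1/2,n,n+1/2$ lie in $T$ (and their tree orientation and backbone status). Thus the map $\word:\T_L\to\words_L$ is automatically injective, and similarly for $\word:\T_\infty\to\words$ once we take unions over $L$.

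For surjectivity, given $(\tau_n)_{n\in\Z}\in\words_L$ with $\tau_n=(A_n^\links,b_n^\links,F_n,b_n^\rechts,A_n^\rechts)$, I would define the candidate edge set
\begin{align*}
E(T):=\bigcup_{n\in\Z}\theta^n F_n.
\end{align*}
The matching condition $\tau_n\fol\tau_{n+1}$ forces $\theta^n F_n$ and $\theta^{n+1}F_{n+1}$ to agree on their common edges at level $n+1/2$, so $E(T)$ is consistently defined; the boundary condition $\tau_n=\tau_\bb$ for $n\notin[-\ul,\ol]$ ensures $E(T)$ agrees with $\tback_\infty$ outside $\G_L$. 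The step that will take some care is proving that $E(T)$ is in fact a spanning tree of $\G$: acyclicity and connectedness follow by inducting on a finite exhaustion and using the fact that the partition data $A_n^\rechts$ and $A_{n+1}^\links$, together with the auxiliary vertices encoded in the graphs $\aux_{n,T}$, record exactly which pairs of vertices on level $n+1/2$ become identified by going ``up'' or ``down'' in the putative tree. The matching relation $\fol$ is precisely the local consistency condition needed to make the global gluing coherent. Once $T$ is shown to be a spanning tree, verifying $\word(T)=(\tau_n)$ reduces to Theorem~\ref{thm:recover information from word}. This surjectivity argument is the main technical obstacle, since it requires unpacking the combinatorial content of the matching relation.

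For the uniform connectivity bound, I would proceed as follows. The set $\treevar$ is finite (the forests $F$, the partitions $A^{\links/\rechts}$, and the vertices $b^{\links/\rechts}$ all come from a finite universe depending only on $G_0$). It suffices to produce a constant $K_0$, depending only on $G_0$, such that for every $\tau\in\treevar$ we can reach $\tau_\bb$ in at most $K_0$ steps of $\fol$; by symmetry the reverse is also true, and $N:=2K_0$ works. Given $\tau\in\treevar$, pick any tree $T\in\T_\infty$ with $\tau_{0,T}=\tau$. By surgery, one can modify $T$ on levels $n\ge 1$ so that at level $K_0$ the tree matches $\tback_\infty$, where $K_0$ is bounded in terms of $|V_0|$: at each level one can rejoin each class in $A^\rechts$ to the backbone copy of $p$ using a bounded number of vertical edges from copies of $S$ and horizontal edges on the backbone $B$. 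The resulting tree gives the chain $\tau_{0,T}=\tau\fol\tau_{1,T}\fol\cdots\fol\tau_{K_0,T}=\tau_\bb$. Since $\tau_\bb\fol\tau_\bb$ is witnessed by $\tback_\infty$ itself, any shorter chain can be padded to a chain of the fixed length $N$.
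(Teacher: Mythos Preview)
Your overall architecture matches the paper's: injectivity from Theorem~\ref{thm:recover information from word}, then build a candidate tree from the forest data, then verify the uniform $N$ by routing through $\tau_\bb$. But two steps are not yet proofs.

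\textbf{Surjectivity.} Defining $E(T)=\bigcup_n\theta^nF_n$ is exactly what the paper does, but the sentence ``verifying $\word(T)=(\tau_n)$ reduces to Theorem~\ref{thm:recover information from word}'' is wrong. That theorem only tells you $\word$ is injective and that $\tau_{n,T}$ determines the local edge/orientation/backbone data; it does \emph{not} tell you that the local tree variable $\tau_{n,T}$ of your constructed $T$ equals the prescribed $\tau_n$. By construction you get $F_{n,T}=F_n$, but $\tau_n$ also carries the partition data $A_n^{\links/\rechts}$ and the backbone markers $b_n^{\links/\rechts}$, which are \emph{global} features of $T$ (they depend on how vertices are connected through all levels to the left or right). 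Checking that $A_{n,T}^{\links}=A_n^{\links}$, etc., is precisely where the definition of $\fol$ (via an actual witnessing tree) must be unpacked. The paper does this by an induction that glues one level at a time (Lemma~\ref{lemma: glueing trees}), and the verification that the $A$ and $b$ data propagate correctly is isolated in Lemma~\ref{lemma: shift}. Your ``inducting on a finite exhaustion'' gesture is in the right direction, but the substance of the argument---that a cycle in $E(T)$ would force a cycle in a witnessing tree $T'$ for $\tau_n\fol\tau_{n+1}$, and that the $A$-partitions of the glued object coincide with the prescribed ones---is exactly the content of those two lemmas and is not supplied here.

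\textbf{Uniform $N$.} Going through $\tau_\bb$ is correct, but the surgery you describe (``rejoin each class in $A^\rechts$ to the backbone copy of $p$ using a bounded number of vertical edges\ldots'') is neither specified nor shown to preserve $\tau_{0,T}=\tau$. The paper avoids this entirely: for each $\tau\in\treevar$ pick any witness $T_\tau\in\T_\infty$; since $\treevar$ is finite there is a single $M$ with $\tau_{n,T_\tau}=\tau_\bb$ for all $|n|\ge M$ and all $\tau$. Then for any $\tau,\tau'$ glue $T_\tau$ with $\theta^{2M+1}T_{\tau'}$ at level $M+\tfrac12$ (again via Lemma~\ref{lemma: glueing trees}) to obtain a single tree whose word gives the desired chain of length $N=2M+1$. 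No explicit surgery is needed, and the gluing lemma---which you need anyway for surjectivity---does the work.
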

We will also need to use some \textbf{reflection properties of trees}.
%\textbf{Reflection properties of the tree.}
We define a reflection operation $\refl:E\to E$, $e\mapsto e^\refl$
on edges,
by $e_n^\refl=e_{-n}$ for any vertical edge $e_n\in E_n$, $n\in\Z$, and
$v_{n+1/2}^\refl=v_{-n-1/2}$ for any horizontal edge $v_{n+1/2}\in E_{n+1/2}$.
In the same way, we define a reflection $\refl:\T_\infty\to \T_\infty$,
$T\mapsto T^\refl$,
on trees, by $T^\refl=\{e^\refl:\;e\in T\}$. Finally, for any local tree variable
$\tau=(b^\links, A^\links, F,  A^\rechts , b^\rechts )\in
\treevar$, we set 
$\tau^\refl=(b^\rechts,  A^\rechts, F^\refl,A^\links, b^\links )\in
\treevar$, where $F^\refl=\{e^\refl:\;e\in F\}$.
Note that $\refl^2=\id$ holds for these reflection operations.

\begin{lemma}\label{lemma:reflection}
The reflection operation satisfies
\begin{equation}
\tau_{0,T^{\refl}}= [\tau_{0,T}]^{\refl} \qquad \mbox{and} \qquad 
\tau \fol \tau'\  \Leftrightarrow \ (\tau')^\refl \fol \tau^\refl.
\end{equation}
\end{lemma}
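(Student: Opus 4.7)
The plan is to verify both statements by direct inspection of the defining pieces of $\tau_{0,T}$, exploiting how reflection and translation commute on edges. Before tackling either claim, I would first establish the identity $\theta^{-n} T^\refl = [\theta^{n} T]^\refl$ on trees, which is immediate from the edge definitions $e^\refl_n = e_{-n}$, $v^\refl_{n+1/2}=v_{-n-1/2}$ and $\theta^m e_n = e_{n+m}$, $\theta^m v_{n+1/2}=v_{n+m+1/2}$. This identity is the only non-trivial bookkeeping needed.

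For the first equality, I would compute $\tau_{0,T^\refl}$ component by component and match it against the definition of $[\tau_{0,T}]^\refl = (b^\rechts,A^\rechts,F^\refl,A^\links,b^\links)$.
\begin{itemize}
\item Tree lines: since $E_{-1/2}^\refl = E_{1/2}$, $E_0^\refl=E_0$, $E_{1/2}^\refl = E_{-1/2}$, one has $F_{0,T^\refl} = T^\refl \cap (E_{-1/2}\cup E_0\cup E_{1/2}) = F^\refl$.
\item Connectedness: a vertex $v\in V_0$ satisfies $v_{-1/2}\in T^\refl$ iff $v_{1/2}\in T$, and two such vertices are connected via edges in $T^\refl$ at levels $\le -1/2$ iff they are connected via edges in $T$ at levels $\ge 1/2$. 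Hence $A^\links_{0,T^\refl}=A^\rechts_{0,T}$, and similarly $A^\rechts_{0,T^\refl}=A^\links_{0,T}$.
\item Backbone endpoints: traversing $B(T^\refl)$ from $-\infty$ to $+\infty$ is the same as traversing $B(T)$ from $+\infty$ to $-\infty$, so the first level-$0$ vertex hit becomes the last one, giving $b^\links_{0,T^\refl}=b^\rechts_{0,T}$ and symmetrically.
\end{itemize}
Combining these four observations yields $\tau_{0,T^\refl}=[\tau_{0,T}]^\refl$. Then, applying this to $\theta^n T$ in place of $T$ and using $\theta^{-n}T^\refl=[\theta^n T]^\refl$ gives the ``shift'' companion identity
\begin{equation}
\tau_{n,T^\refl}=\tau_{0,\theta^{-n}T^\refl}=\tau_{0,[\theta^n T]^\refl}=[\tau_{0,\theta^n T}]^\refl=[\tau_{-n,T}]^\refl.
\end{equation}

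For the second claim, suppose $\tau\fol\tau'$ witnessed by $T\in\T_\infty$ with $\tau_{0,T}=\tau$ and $\tau_{1,T}=\tau'$. Consider the reflected and once-shifted tree $T'':=\theta T^\refl$. Using the shift rule $\tau_{n,\theta^m T}=\tau_{n-m,T}$ together with the identity just derived,
\begin{equation}
\tau_{0,T''}=\tau_{-1,T^\refl}=[\tau_{1,T}]^\refl=(\tau')^\refl,\qquad \tau_{1,T''}=\tau_{0,T^\refl}=[\tau_{0,T}]^\refl=\tau^\refl,
\end{equation}
so $T''$ witnesses $(\tau')^\refl\fol\tau^\refl$. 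The reverse implication follows by applying the forward direction to $(\tau')^\refl\fol\tau^\refl$ and using $\refl^2=\id$ on $\treevar$. No serious obstacle is expected; the only subtle point is carefully handling the interplay between $\theta$ and $\refl$ in the indexing, which the identity $\theta^{-n}T^\refl=[\theta^n T]^\refl$ resolves once and for all.
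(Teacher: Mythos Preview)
Your proof is correct and follows essentially the same approach as the paper's. The paper's argument is terser---it simply notes that ``left'' and ``right'' are exchanged in both $T^\refl$ and $\tau^\refl$ for the first claim, and for the second claim it observes $\tau_{-1,T^\refl}=(\tau')^\refl$ and $\tau_{0,T^\refl}=\tau^\refl$ without explicitly shifting---whereas you spell out the component-by-component verification and make the translation $T''=\theta T^\refl$ explicit, but the substance is the same.
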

The rest of this subsection is devoted to the proofs of the above statements.
However, in the next section, only the claims of the theorems
are used, but no details from the proofs. 

\subsubsection{Proofs}

\noindent
\begin{proof}[Proof of Lemma \ref{lemma:reflection}]
Since ``left'' and ``right'' are exchanged in the definition of both,
 $T^\refl$ and $\tau^\refl$, the first claim 
$\tau_{0,T^{\refl}}= [\tau_{0,T}]^{\refl}$ follows immediately. For the second
claim, let $\tau,\tau'\in\treevar$ with $\tau\fol\tau'$. 
Then, there exists 
$T\in\T_\infty$ with $\tau_{0,T}=\tau $ and $\tau_{1,T}=\tau'$. Then
the reflected tree satisfies $\tau_{0,T^\refl}=\tau_{0,T}^\refl=\tau^\refl$ 
and $\tau_{-1,T^\refl}=\tau_{1,T}^\refl=(\tau')^\refl$. 
This shows that $(\tau')^\refl\fol\tau^\refl$. 
\end{proof}

\paragraph{Auxiliary finite trees.}
Given a tree $T\in\T_\infty$ and two levels  
$m,n\in\Z$ with $m\le n$, we define an auxiliary graph
$\aux_{[m,n],T}$ as follows. Its set of vertices consists
of the union of $\bigcup_{k=m}^n V_k$ together with a copy 
$\theta^{m}A^\links_{m,T}$ of $A^\links_{m,T}$ 
formally
associated to level $m-1$ and a copy $\theta^nA^\rechts_{n,T}$ 
of $A^\rechts_{n,T}$ formally 
associated to level $n+1$.
The set of edges in $\aux_{[m,n],T}$ consists of all edges in
$T\cap \bigcup_{k\in\frac{\Z}{2}\cap[m,n]}E_k$ together with an auxiliary line
connecting every $(m,v)\in V_m$ with $v_{m-1/2}\in T$
to the copy $\theta^m[v]^\links_{m,T}$
of its class $[v]^\links_{m,T}$, and an auxiliary line
connecting every $(n,v)\in V_n$ with $v_{n+1/2}\in T$
to the copy $\theta^n[v]^\rechts_{n,T}$ of its class $[v]^\rechts_{n,T}$.
Note that $\aux_{[m,n],T}$ is again a spanning tree, 
and that  $\aux_{[n,n],T}=\theta^{n}\aux_{n,T}$ where $\aux_{n,T}=\aux_{0,\theta^{-n}T}$.
In particular 
$\aux_{[0,0],T}=\aux_{0,T}$. See Fig. \ref{fig2} for an example.

In the following, the word ``path'' means ``simple path''.
\begin{figure}
\centerline{\psfig{figure=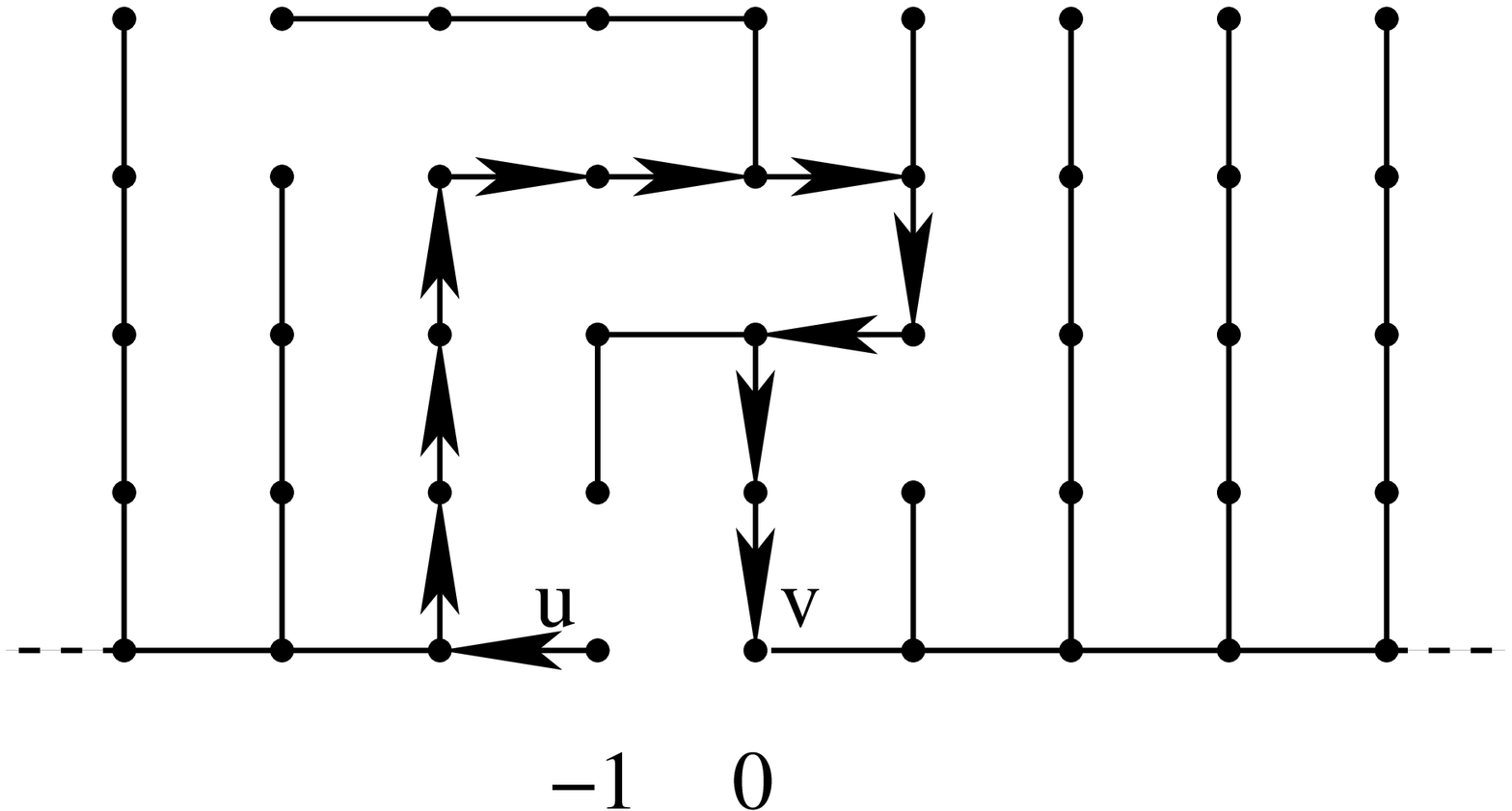 ,width=7cm}\hspace{0.8cm} 
\psfig{figure=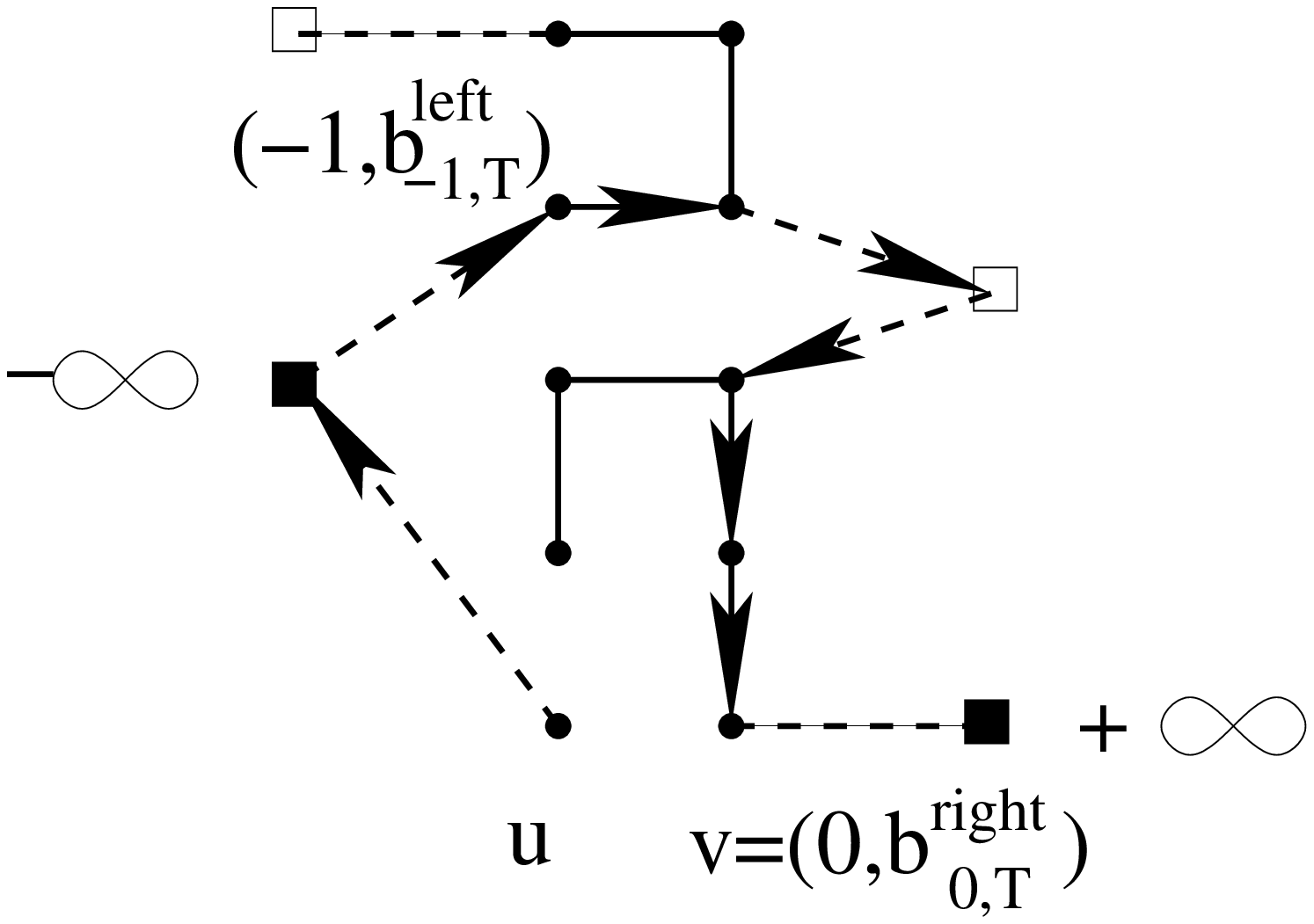 ,width=7cm}}
\caption{(a) another example of spanning tree $T\in\T_\infty$ with the 
path $\gamma_{T}^{uv}$ in evidence; b) 
the corresponding auxiliary graph  $\aux_{[-1,0],T}$,
with  $\gamma_{{\rm aux},T}^{uv}$, $ (-1, b_{-1,T}^\links)$
and  $ (0, b_{0,T}^\rechts)$ in evidence.}
\label{fig2}
\end{figure}

\begin{lemma}
\label{lemma: coinciding pieces}
Consider two vertices $u,v\in \bigcup_{k=m}^n V_k$
and the paths $\gamma^{uv}_T$ and $\gamma^{uv}_{{\rm aux},T}$ connecting them in
$T$ and in $\aux_{[m,n],T}$, respectively.
Let $e_1,\ldots,e_j$ be the edges in 
$\gamma^{uv}_T\cap\bigcup_{k\in\frac{\Z}{2}\cap[m,n]}E_k$, arranged
in the order that they are traversed when walking along $\gamma^{uv}_T$ from
$u$ to $v$. 
Let $e'_1,\ldots,e'_{j'}$ be defined similarly, using 
$\aux_{[m,n],T}$ instead of $T$.
Then $(e_1,\ldots,e_j)=(e'_1,\ldots,e'_{j'})$. 
Moreover, the edges
$e_1,\ldots,e_j$ are traversed in the same direction by
$\gamma^{uv}_T$ and $\gamma^{uv}_{{\rm aux},T}$.

Of particular interest is the case 
$u=\theta^{m} (b_{m,T}^\links)= (m,b_{m,T}^\links)$ and
$v= (n, b_{n,T}^\rechts)$. In this case, 
$\gamma^{uv}_T$ is the piece of $B(T)$ between $u$ and $v$.
\end{lemma}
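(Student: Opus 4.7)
The plan is as follows. Since both $T$ and $\aux_{[m,n],T}$ are trees, the paths $\gamma^{uv}_T$ and $\gamma^{uv}_{\rm aux,T}$ are each uniquely determined by their endpoints. My strategy is to construct an explicit walk $\tilde\gamma$ in $\aux_{[m,n],T}$ from $u$ to $v$ that traverses the in-strip edges $e_1,\dots,e_j$ of $\gamma^{uv}_T$ in the same order and direction, verify that $\tilde\gamma$ is a simple path, and then invoke uniqueness of paths in the tree $\aux_{[m,n],T}$ to conclude $\tilde\gamma=\gamma^{uv}_{\rm aux,T}$.

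To build $\tilde\gamma$, I would first decompose $\gamma^{uv}_T$ into alternating maximal \emph{in-strip pieces}, whose edges lie in $\bigcup_{k\in\frac{\Z}{2}\cap[m,n]}E_k$, and maximal \emph{excursions}, whose edges lie at levels $<m$ (a left excursion) or at levels $>n$ (a right excursion). Because $u,v$ both sit inside the strip, the first and last pieces are in-strip. A left excursion must enter the strip at some vertex $(m,v')$ through an interface edge $v'_{m-1/2}\in T$ and exit at $(m,v'')$ through $v''_{m-1/2}\in T$; by the very definition of the class $A^\links_{m,T}$ one has $[v']^\links_{m,T}=[v'']^\links_{m,T}$, so in $\aux_{[m,n],T}$ there is a two-edge detour from $(m,v')$ to $(m,v'')$ via this common class vertex. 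Right excursions are handled by the symmetric construction. Replacing every excursion of $\gamma^{uv}_T$ by the corresponding auxiliary detour and keeping the in-strip pieces unchanged produces a walk $\tilde\gamma$ in $\aux_{[m,n],T}$ from $u$ to $v$ whose sequence of in-strip edges, together with traversal directions, is exactly $(e_1,\dots,e_j)$.

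The main obstacle is to show that $\tilde\gamma$ is actually a simple path. The in-strip vertices it visits are exactly those visited by $\gamma^{uv}_T$ inside the strip, hence pairwise distinct because $\gamma^{uv}_T$ is simple. So the only possible failure is that some auxiliary class vertex is visited twice. Suppose for contradiction that two left excursions share the same class vertex, with endpoints $(m,v_1),(m,v_2)$ for the first and $(m,v_3),(m,v_4)$ for the second (the four $V_0$-labels being pairwise distinct, since the excursions are vertex-disjoint). Then $(m-1,v_1),\dots,(m-1,v_4)$ all lie in a single connected subtree of $T\cap\bigcup_{k<m}E_k$, which supplies a simple path in $T$ from $(m,v_1)$ to $(m,v_3)$ whose interior stays at levels $\le m-1$. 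On the other hand, following $\gamma^{uv}_T$ from $(m,v_1)$ through the first excursion up to $(m,v_2)$ and then along its in-strip portion until it reenters the half-space at $(m,v_3)$ produces a different simple path in $T$ between the same endpoints, since it revisits level $m$ at $(m,v_2)$. Two distinct simple paths between the same pair of vertices contradict the tree property of $T$. The symmetric argument rules out repetitions for right excursions.

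Once simplicity is in hand, uniqueness of paths in the tree $\aux_{[m,n],T}$ yields $\tilde\gamma=\gamma^{uv}_{\rm aux,T}$, which gives both the equality of the in-strip edge sequences and the matching traversal directions. For the distinguished case $u=(m,b^\links_{m,T})$, $v=(n,b^\rechts_{n,T})$, the defining property of $b^\links_{m,T}$ and $b^\rechts_{n,T}$ as the first and last vertices of $B(T)$ met at the boundary levels of the strip implies that the portion of $B(T)$ between $u$ and $v$ is a simple path in $T$ from $u$ to $v$; uniqueness of paths in $T$ then identifies $\gamma^{uv}_T$ with this backbone segment.
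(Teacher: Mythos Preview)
Your proof is correct and follows exactly the approach the paper sketches in one line: replace each excursion of $\gamma^{uv}_T$ outside the strip by the corresponding pair of auxiliary lines through the appropriate class vertex, then identify the resulting path with $\gamma^{uv}_{\rm aux,T}$ by uniqueness of paths in a tree. You have simply filled in the details the paper omits---in particular the verification that the resulting walk is simple, via the nice observation that two left (or right) excursions landing in the same class would force two distinct simple paths in $T$ between the same pair of vertices.
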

\begin{proof}
The proof is straightforward by replacing pieces in $\gamma^{uv}_T$
not in $\bigcup_{k\in\frac{\Z}{2}\cap[m,n]}E_k$ by auxiliary lines. See 
Fig. \ref{fig2} for an example.
\end{proof}

\medskip\noindent\begin{proof}[Proof 
of Theorem \ref{thm:recover information from word}]
Recall from \eqref{def-treevar} and \eqref{tau-n} that 
$\tau_{n,T}$ contains the information $A_{n,T}^\links$, $b_{n,T}^\links$,
$F_{n,T}$, $b_{n,T}^\rechts$,  $A_{n,T}^\rechts$ and  is equivalent to
the auxiliary graph $\aux_{n,T}$ with the additional markings ``$\pm\infty$''. 

(a)
For any edge $e\in E_{-1/2} \cup E_{0}\cup E_{1/2}$ 
we have $\theta^{n} e \in T $ if and only
if $e\in F_{n,T}$. 

(b) Now assume $\theta^ne\in T$.
If $e=(b^\links_{n,T})_{-1/2}$ then $e\in B(\theta^{-n}T)\cap F_{n,T}$,
 by definition of $b^{\links}$, which is equivalent to 
$\theta^ne\in B(T)\cap \theta^nF_{n,T}$. 
 Then the endpoint % is oriented in $T$ from  
$(n-1,b^\links_{n,T})$ is the one closer to $-\infty$ in $T$.
% towards $(n, b^\links_{n,T})$.
For any other edge $e\in  \theta^{-n}T\cap F_{n,T}$, 
let $u\in V_{0}$  be one endpoint.  For a horizontal
edge we have $e=u_{\pm 1/2}\in E_{\pm 1/2}$ and 
the edge corresponds to the auxiliary line connecting 
$u_{\pm 1/2}$
to $[u]^{\rechts/\links }_{n,T}$ in $\aux_{n,T}$.
Consider the path  $\gamma$ from $(0, b^\links_{n,T})$ to $(0,u)$ 
in $\aux_{n,T}$. Using Lemma \ref{lemma: coinciding pieces}, it follows 
that $(n,u)$ is the endpoint of $\theta^{n}e$ closer to 
 $-\infty$ in $T$
 if and only if the path $\gamma$ does not contain $e$ (or the corresponding
auxiliary line).

$(c)$
Using again Lemma \ref{lemma: coinciding pieces}, the edge 
$e\in E_{-1/2}\cup E_0\cup E_{1/2}$
belongs to $B(\theta^{-n}T)\cap F_{n,T}$ if and only if one of the 
following three cases holds:
$e=(b^\links_{n,T})_{-1/2}$, or $e=(b^\rechts_{n,T})_{1/2}$,
or $e$ belongs to the path from $b^\links_{n,T}$ to $b^\rechts_{n,T}$
in $\aux_{n,T}$. 

Finally, the map $\word$ is one-to-one, because
$T=\bigcup_{n\in\Z}\theta^n F_{n,T}$
holds for all $T\in\T_\infty$. This concludes the proof.
\end{proof}

When two different trees coincide somewhere, their corresponding
local tree variables  satisfy the following matching properties.
\begin{lemma}
\label{lemma: shift}
Let $T,T'\in\T_\infty$ be two trees and $m\in\Z$.
\begin{enumerate}
\item
\label{item: shift 1}
If $F_{m,T}=F_{m,T'}$ and $A_{m,T}^\links=A_{m,T'}^\links$ hold, then
$A_{m+1,T}^\links=A_{m+1,T'}^\links$ holds as well.
Similarly, the assumptions $F_{m+1,T}=F_{m+1,T'}$ and 
$A_{m+1,T}^\rechts=A_{m+1,T'}^\rechts$ imply
$A_{m,T}^\rechts=A_{m,T'}^\rechts$.
\item
\label{item: shift 2}
Assume that  
$\aux_{[m,m+1],T}=\aux_{[m,m+1],T'}$,
$b_{m,T}^\links=b_{m,T'}^\links$, and
$b_{m+1,T}^\rechts=b_{m+1,T'}^\rechts$ hold.
Then 
$b_{m+1,T}^\links=b_{m+1,T'}^\links$ and
$b_{m,T}^\rechts=b_{m,T'}^\rechts$
hold also.
\item
\label{item: shift 3}
Assume that $F_{n,T}=F_{n,T'}$ for all $n\le m$ and
$\tau_{m,T}=\tau_{m,T'}$ hold.
Then $\tau_{n,T}=\tau_{n,T'}$ holds for all $n\le m$.
The same holds when ``$n\le m$'' is replaced by ``$n\ge m$''
in the assumption and in the claim.
\end{enumerate}
\end{lemma}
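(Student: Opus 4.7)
The plan is to prove the three claims in turn, drawing on the local structure definitions and on Lemma~\ref{lemma: coinciding pieces}.

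For claim (1), I would build a small auxiliary graph $H$ whose vertex set is $V_m\cup V_{m+1}$ together with one formal ``ghost'' vertex for each class in $A^\links_{m,T}$, and whose edges are the edges of $F_{m,T}$ lying in $E_m\cup E_{m+1/2}$ together with one auxiliary line from every $v\in V_m$ with $v_{m-1/2}\in F_{m,T}$ to the ghost labeled $[v]^\links_{m,T}$. The key assertion is that for $u,v\in V_0$ with $u_{m+1/2},v_{m+1/2}\in T$, the vertices $(m+1,u),(m+1,v)$ lie in a common class of $A^\links_{m+1,T}$ iff they lie in the same component of $H$. Indeed, any witnessing $T$-path on levels $\le m+1/2$ that dips strictly below $V_m$ can be replaced by two auxiliary edges through a common ghost, since its two crossings of level $m-1/2$ land in $T$-connected vertices below and hence share a class in $A^\links_m$; conversely, two $V_m$-vertices attached to the same ghost are, by the very definition of $A^\links_m$, $T$-connected at levels $\le m-1/2$. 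Since $H$ depends only on the pair $(F_{m,T},A^\links_{m,T})$, its component partition, and hence $A^\links_{m+1,T}$, is determined by the hypotheses. The mirror statement is proved symmetrically.

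For claim (2), I would apply Lemma~\ref{lemma: coinciding pieces} to the window $[m,m+1]$ with endpoints $u=(m,b^\links_{m,T})$ and $v=(m+1,b^\rechts_{m+1,T})$. The relevant piece of $B(T)$ and the unique $u\to v$ path $\gamma^{uv}_{{\rm aux},T}$ in $\aux_{[m,m+1],T}$ carry the same edges in $E_m\cup E_{m+1/2}\cup E_{m+1}$ in the same order. Since by hypothesis $\aux_{[m,m+1],T}=\aux_{[m,m+1],T'}$ and the endpoints $u,v$ coincide for $T$ and $T'$, this auxiliary path is literally the same for both trees; its last vertex in $V_m$, which is $b^\rechts_m$, and its first vertex in $V_{m+1}$, which is $b^\links_{m+1}$, therefore agree.

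For claim (3), instead of a step-by-step induction I would verify the five components of $\tau_{n,T}$ for each $n\le m$ directly. The set $F_{n,T}$ agrees by hypothesis. The partition $A^\links_{n,T}$ is determined by the edges of $T$ at levels $\le n-1/2$, all of which sit inside $\bigcup_{k\le n}F_{k,T}$ and hence agree with their $T'$-counterparts (since $n\le m$). The partition $A^\rechts_{n,T}$ is obtained by iterating the mirror of claim~(1) downward, starting from $A^\rechts_{m,T}=A^\rechts_{m,T'}$ and using the agreement of $F_{n+1,T},\ldots,F_{m,T}$ at each step. For the two backbone markers I would first observe the structural identity $b^\rechts_k=b^\links_{k+1}$ as elements of $V_0$: because $B(T)$ uses exactly one horizontal edge $v_{k+1/2}\in E_{k+1/2}$ (any other behaviour would contradict $b^\rechts_k$ being the last $V_k$-vertex or $b^\links_{k+1}$ being the first $V_{k+1}$-vertex), the common label $v$ is simultaneously both markers. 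This reduces the task to showing $b^\links_{n,T}=b^\links_{n,T'}$ for every $n\le m$, which I would obtain by following the unique ray of $T$ from $(m,b^\links_{m,T})$ toward the $-\infty$-end of $B(T)$: its first edge must be $(b^\links_m)_{m-1/2}$ (else the backbone would revisit $V_m$, contradicting the definition of $b^\links_m$), after which it is confined to levels $\le m-1$ and uses only edges in $\bigcup_{k\le m}F_{k,T}$. These edges agree with $T'$, so the rays coincide, and the last $V_n$-vertex visited on this ray is precisely $b^\links_{n,T}$.

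The main obstacle is claim (1): one must argue carefully that $A^\links_{m,T}$ summarizes, without loss or spurious additions, the $T$-connectivity among those $V_m$-vertices attached on the left, so that the abstract graph $H$ faithfully captures the component structure of $T$ below level $m+1/2$. Once this is in place, claims (2) and (3) follow by essentially combinatorial manipulations using Lemma~\ref{lemma: coinciding pieces} and the backbone identity $b^\rechts_k=b^\links_{k+1}$.
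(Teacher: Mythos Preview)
Your arguments for parts (1) and (2) are correct and essentially match the paper's: your graph $H$ is a variant of $\aux_{[m,m+1],T}$ stripped of the right-hand ghosts, and your use of Lemma~\ref{lemma: coinciding pieces} for part (2) is exactly the paper's route.

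There is, however, a genuine gap in your treatment of part (3). The ``structural identity'' $b^\rechts_k=b^\links_{k+1}$ that you rely on is false in general: nothing forces the backbone $B(T)$ to use exactly one horizontal edge at level $k+1/2$. As soon as $|V_0|\ge 3$ there are at least three horizontal edges in $E_{k+1/2}$, and one can build trees $T\in\T_\infty$ whose backbone crosses that level three times, going up at some $u_{k+1/2}$, back down at some $v_{k+1/2}$, and up again at some $w_{k+1/2}$. In such a tree $b^\links_{k+1}=u$ while $b^\rechts_k=w$. Your justification (``any other behaviour would contradict $b^\rechts_k$ being the last $V_k$-vertex or $b^\links_{k+1}$ being the first $V_{k+1}$-vertex'') does not hold: those definitions merely select the first and last visits, they do not limit the number of crossings. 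Since your reduction of $b^\rechts_n$ to $b^\links_{n+1}$ rests entirely on this identity, you currently have no argument for $b^\rechts_{n,T}=b^\rechts_{n,T'}$ when $n<m$.

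The fix is to do for $b^\rechts$ exactly what you already did for $A^\rechts$, namely a downward iteration, but using part (2) instead of part (1). Once you know $A^\links_{n-1}$, the tree edges on levels $n-1,n-1/2,n$, and $A^\rechts_n$ all agree for $T$ and $T'$, you get $\aux_{[n-1,n],T}=\aux_{[n-1,n],T'}$; feeding this together with the already-established equalities $b^\links_{n-1,T}=b^\links_{n-1,T'}$ and $b^\rechts_{n,T}=b^\rechts_{n,T'}$ into part (2) yields $b^\rechts_{n-1,T}=b^\rechts_{n-1,T'}$. This is precisely the inductive step the paper uses.
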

\begin{proof}
{\it Part \ref{item: shift 1}:}
Consider a path in $T$ connecting two vertices $u$ and $v$ on level $m+1$
using only edges on levels $\le m+1/2$. Since 
$A_{m,T}^\links=A_{m,T'}^\links$, any excursion in the
path from one vertex on level $m$ to another vertex on level $m$
using only edges on levels $\le m-1/2$ can be replaced by an 
excursion in $T'$ between the same vertices, also using only 
edges on levels $\le m-1/2$. In this way, we get a path in $T'$
from $u$ to $v$ which uses also only edges on levels $\le m+1/2$.
The same holds when $T$ and $T'$ are exchanged. The second statement
follows by the same argument, exchanging ``left'' and ``right''.
As an example, compare the trees in Fig. \ref{fig1} and \ref{fig2}
on levels $-1$ and $0$. 

\noindent {\it Part \ref{item: shift 2}:}
This follows directly from Lemma  \ref{lemma: coinciding pieces},
applied to $u=\theta^{m}b_{m,T}^\links$ and
$v=\theta^{m+1}b_{m+1,T}^\rechts$.

\noindent {\it Part \ref{item: shift 3}:}
The assumption $F_{n,T}=F_{n,T'}$ for $n\le m$
implies that $T$ and $T'$ coincide on all levels $\le m+1/2$.
It follows that $A^\links_{n,T}=A^\links_{n,T'}$ 
and $b^\links_{n,T}=b^\links_{n,T'}$ hold for $n\le m$.
From this, the first claim in 3.\  
follows by induction over $n$, starting with $n=m$ and
using parts 1.\ and 2.\ of the lemma in the induction step.
The second claim in 3.\ follows similarly, exchanging the roles of 
``left'' and ``right''.
\end{proof}

The next lemma gives criteria to paste pieces of different trees together.
\begin{lemma}[Glueing trees]
\label{lemma: glueing trees}
Let $m\in\Z$ and let $T_\links,T_\rechts\in\T_\infty$ be spanning trees with 
$\tau_{m,T_\links}\fol\tau_{m+1,T_\rechts}$. 
Then there is a unique tree $T\in\T_\infty$, called $\glue_m(T_\links,T_\rechts)$,
with $\tau_{n,T}=\tau_{n,T_\links}$ for $n\le m$
and $\tau_{n,T}=\tau_{n,T_\rechts}$ for $n\ge m+1$.
\end{lemma}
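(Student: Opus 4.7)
The plan is to reduce the Glueing Lemma to an application of Theorem 4.6 (bijectivity of $\word:\T_\infty\to\words$) by directly constructing the sequence of local tree variables of the putative glued tree and then letting $\word^{-1}$ produce the tree.

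First I would define the candidate sequence $(\tau_n)_{n\in\Z}$ by setting $\tau_n:=\tau_{n,T_\links}$ for $n\le m$ and $\tau_n:=\tau_{n,T_\rechts}$ for $n\ge m+1$, and then verify $(\tau_n)_{n\in\Z}\in\words_L$ for some $L=(-\ul,\ol)$. The matching condition $\tau_n\fol\tau_{n+1}$ splits into three sub-cases. For $n\neq m$ I would use the observation that for any $T\in\T_\infty$ and any $k\in\Z$ one has $\tau_{k,T}\fol\tau_{k+1,T}$: indeed, the definition $\tau_{k,T}=\tau_{0,\theta^{-k}T}$ together with the $\theta$-invariance of $\tback_\infty$ ensures $\theta^{-k}T\in\T_\infty$, and this translated tree witnesses the relation. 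Applied to $T_\links$ this handles $n<m$, and applied to $T_\rechts$ it handles $n>m$. For the interface case $n=m$, the relation $\tau_{m,T_\links}\fol\tau_{m+1,T_\rechts}$ is exactly the hypothesis of the lemma. Finally, the eventual agreement $\tau_n=\tau_\bb$ for $|n|$ sufficiently large is automatic, since $T_\links,T_\rechts\in\T_\infty$ agree with $\tback_\infty$ outside finite windows; choosing $L$ to contain both windows completes the verification.

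Once $(\tau_n)_{n\in\Z}\in\words_L$ has been checked, Theorem 4.6 supplies a unique tree $T\in\T_L\subseteq\T_\infty$ with $\word(T)=(\tau_n)_{n\in\Z}$. By construction this $T$ satisfies $\tau_{n,T}=\tau_{n,T_\links}$ for all $n\le m$ and $\tau_{n,T}=\tau_{n,T_\rechts}$ for all $n\ge m+1$, which is the existence part of the claim, and we define $\glue_m(T_\links,T_\rechts):=T$. Uniqueness in $\T_\infty$ follows immediately from the injectivity of $\word$ established in Theorem 4.5, since any two candidate glued trees would be mapped to the same sequence $(\tau_n)_{n\in\Z}$. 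The only point requiring genuine care is the verification of the matching condition, but as outlined above no real obstacle arises: two of its three sub-cases reduce to translation invariance of $\fol$ combined with $T_\links,T_\rechts\in\T_\infty$, and the remaining sub-case is precisely the hypothesis. The structural heavy lifting has already been carried out in Theorems 4.5 and 4.6, so this lemma is in effect a direct corollary of the coding theorem.
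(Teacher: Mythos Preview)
Your uniqueness argument via the injectivity of $\word$ (Theorem~\ref{thm:recover information from word}) is fine, and the paper proves uniqueness the same way. The existence argument, however, is circular: the surjectivity half of Theorem~\ref{thm:bijection} is established in the paper precisely by iterated application of the present Glueing Lemma. Given a word $(\tau_n)_n\in\words$, the proof of Theorem~\ref{thm:bijection} inductively builds trees $T^m_\links$ with the correct local variables on $(-\infty,m]$ by setting $T^{m+1}_\links:=\glue_m(T^m_\links,T^m_\rechts)$ for a suitable $T^m_\rechts$. You therefore cannot invoke the bijection to manufacture the glued tree.

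The step you delegate to the coding theorem is exactly the substance of the lemma: one must show that the edge set
\[
T:=\bigcup_{n\le m}\theta^n F_{n,T_\links}\cup\bigcup_{n\ge m+1}\theta^n F_{n,T_\rechts}
\]
is acyclic, connected, and has local tree variables agreeing with $T_\links$ on the left and $T_\rechts$ on the right of the cut at level $m+\tfrac12$. The paper carries this out directly, using the witness tree $T'$ supplied by the hypothesis $\tau_{m,T_\links}\fol\tau_{m+1,T_\rechts}$ together with Lemma~\ref{lemma: shift} to transport the partitions $A^{\links/\rechts}$ across the cut; this is what excludes cycles and broken components in $T$. Without such a direct verification (or an independent proof of surjectivity in Theorem~\ref{thm:bijection}), the existence part of your argument is empty.
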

\begin{proof}
{\it Uniqueness} follows from the fact that $F_{n,T}$ is a component of 
$\tau_{n,T}$, and thus
\begin{align}
\label{align-uniqueness}
T=\bigcup_{n\in\Z} \theta^nF_{n,T}
= \bigcup_{n\le m} \theta^nF_{n,T_\links} \cup \bigcup_{n\ge m+1} 
\theta^nF_{n,T_\rechts}=:\glue_m(T_\links,T_\rechts).
\end{align}
{\it Existence.} We take as definition the unique possible choice
$T=\glue_m(T_\links,T_\rechts)$ from above. 
We will prove below this is a spanning tree and 
$\tau_{n,T}=\tau_{n,T_\links}$ for all $n\le m$ and 
$\tau_{n,T}=\tau_{n,T_\rechts}$ for all $n\ge m+1$.

By definition, $T$ agrees with $T_\links$ on all levels $\le m$ and with 
$T_\rechts$ on all levels $\ge m+1$, i.e.\ 
$T\cap E_k=T_\links\cap E_k$ for all $k\in\frac{\Z}{2}$ with 
$k\le m$ and $T\cap E_k=T_\rechts\cap E_k$ for all $k\in\frac{\Z}{2}$ with 
$k\ge m+1$. Furthermore, from the definition of $T$, 
$T\cap E_{m+1/2}=(T_\links\cup T_\rechts)\cap E_{m+1/2}$. 
Since $\tau_{m,T_\links}\fol\tau_{m+1,T_\rechts}$, there is
a tree $T'\in\T_\infty$ with $\tau_{m,T_\links}=\tau_{m,T'}$ and 
$\tau_{m+1,T_\rechts}=\tau_{m+1,T'}$. The tree $T'$ plays an important role in 
the remainder of the proof. 
We have $T_\links\cap E_{m+1/2}= T'\cap E_{m+1/2}
= T_\rechts\cap E_{m+1/2}$, hence
\begin{align}
\label{T-m-plus-halb}
T'\cap E_{m+1/2}= T_\links\cap E_{m+1/2}=T_\rechts\cap E_{m+1/2}=T\cap E_{m+1/2}. 
\end{align}
In the same way we have
$A_{m,T_{\links }}^{\links }= A_{m,T'}^{\links }$ and
$A_{m+1,T_{ \rechts}}^{\rechts }= A_{m+1,T'}^{\rechts }$. 

{\it $T$ is acyclic.} We  prove this by contradiction. Assume that $T$ contains
a cycle $C$. If all edges in $C$ are on levels $\le m$ (resp.  $\ge m+1$), 
then it is already a cycle in $T_\links$ (resp.  $T_\rechts$), a contradiction. 
Now suppose that $C$ crosses level $m+1/2$ a non-zero even number of times.
This path consists of alternating pieces in $T_\links$ 
made of edges at levels $\le m$
and pieces in $T_\rechts$ made of edges at levels $\ge m+1$, 
connected by horizontal lines at level $m+1/2$.
Any piece in $T_\links$ together with the two horizontal lines
at level $m+1/2$ attached to it can be replaced by  
a path in $T'$, made of edges on levels $\le m$ plus the same two 
horizontal lines. This is true because $A^\links_{m+1,T_\links}=A^\links_{m+1,T'}$
by Part \ref{item: shift 1} of Lemma \ref{lemma: shift} applied to $T_\links$ 
and $T'$.
Similarly any piece in $T_\rechts$ together with the two horizontal lines
at level $m+1/2$ attached to it can be replaced by  
a path in $T'$, made of edges on levels $\ge m+1$ plus the same two 
horizontal lines,
since $A^\rechts_{m,T_\rechts}=A^\rechts_{m,T'}$.
Joining these pieces in $T'$, we obtain a cycle in $T'$, which is impossible.

{\it $T$ connects any two vertices in $\Z\times G_0$ to each other.}
First, from the argument just described above we know that any two horizontal edges 
in $T$ on level $m+1/2$ are connected by a path in $T$ if and 
only if they are connected in $T'$. Since $T'$ is 
a spanning tree, this implies that any two horizontal edges in $T$ 
on level $m+1/2$ are connected in $T$. 
Second, 
for any vertex $(n,u)$ on any level $n\le m$ there exists at least one
horizontal edge on level $m+1/2$ connected to it in $T_\links$ 
by a path using only edges on levels 
$\le m+1/2$. This path is also a path in $T$. Third, similarly, 
for any vertex $(n,v)$ on any level $n\ge m+1$ there is a path 
in $T_\rechts$ and hence in $T$ to some horizontal edge on level $m+1/2$. 
Combining these three arguments the claim follows. 

We have shown that $T$ is a spanning tree, therefore $A_{n,T}^\links,A_{n,T}^\rechts$ 
are well defined for all $n\in \mathbb{Z}$. Obviously  
$A_{m,T}^\links=A_{m,T_\links}^\links=A_{m,T'}^\links$
and  $A_{m+1,T}^\rechts=A_{m+1,T_\rechts}^\rechts=A_{m+1,T'}^\rechts$.
Similarly $b_{m,T}^\links=b_{m,T_\links}^\links=b_{m,T'}^\links$, and 
$b_{m+1,T}^\rechts=b_{m+1,T_\rechts}^\rechts=b_{m+1,T'}^\rechts$. 
 Then  $\aux_{[m,m+1],T}=\aux_{[m,m+1],T'}$ and  by a straightforward application 
of parts \ref{item: shift 1} and \ref{item: shift 2}
of Lemma \ref{lemma: shift} we get $\tau_{m,T}=\tau_{m,T_\links}$ and
$\tau_{m+1,T}=\tau_{m+1,T_\rechts}$.
Finally, the claims $\tau_{n,T}=\tau_{n,T_\links}$ 
for $n\le m$
and $\tau_{n,T}=\tau_{n,T_\rechts}$ for $n\ge m+1$
follow from part \ref{item: shift 3} of Lemma \ref{lemma: shift}.
\end{proof}

Note that  if a spanning tree $T\in\T_\infty$ satisfies $F_{0,T}=F_{0,\tback_\infty}$, then 
$\tau_{0,T}=\tau_{0,\tback_\infty}=\tau_\bb$. 
In other words, $\tau_\bb$ is the only tree variable representing
a tree locally at $0$ which looks like $\tback_\infty$ near $0$.
This comes from the fact that $\tback_\infty$ has only one horizontal
line on level $1/2$ and only one on level $-1/2$.

\smallskip\noindent
\begin{proof}[Proof of Theorem \ref{thm:bijection}]
The map $\word$ is one-to-one by Theorem 
\ref{thm:recover information from word}.
To prove that the map $\word$ is onto, let $\tau=(\tau_n)_{n\in\Z}\in\words$. 
We prove by induction
that for any $m\in\Z$, there is a tree $T^m_\links\in\T_\infty$ 
such that $\tau_{n,T^m_\links}=\tau_n$ for all $n\in\Z$ with $n\le m$.
First, we see that this claim is true for all $m$ sufficiently close to 
$-\infty$. 
Indeed, for $m$ so small that $\tau_n=\tau_\bb$ for all $n\le m$,
we can just take $T^m_\links=\tback_\infty$.
For the induction step, assume that the claim holds for a given $m$.
Since $\tau\in\words$, it follows $\tau_m\fol \tau_{m+1}$. Hence, 
there is a spanning tree
$T_\rechts^m$ with $\tau_{m,T_\rechts^m}=\tau_m$ and 
$\tau_{m+1,T_\rechts^m}=\tau_{m+1}$. 
Taking $T^{m+1}_\links:=\glue_m(T^m_\links,T^m_\rechts)$  
from Lemma \ref{lemma: glueing trees}, we obtain 
$\tau_{n,T^{m+1}_\links}=\tau_{n,T^m_\links}
=\tau_n$ for all $n\in\Z$ with $n\le m$,
and $\tau_{m+1,T^{m+1}_\links}=\tau_{m+1,T_\rechts^m}=\tau_{m+1}$.
This finishes the inductive proof.
Now take $m\in\N$ so large that $\tau_n=\tau_\bb$ holds for all
$n\ge m$. Using $\tau_\bb\fol\tau_\bb$, we can take 
$T:=\glue_m(T_\links^m,\tback_\infty)$. Then $\word(T)=\tau$.

Next, we prove that for any $L=(-\ul,\ol)$, the function
$\word$ maps $\T_L$ onto $\words_L$.
For $T\in\T_L$, note that $\tau_{n,T}=\tau_\bb$ for all 
$n\in\Z\setminus [-\ul,\ol]$. 
Consequently, $\word(\T_L)\subseteq\words_L$. 
To prove that $\word(\T_L)\supseteq\words_L$, take $\tau\in\words_L$. 
By the above, there exists $T\in\T_\infty$ with $\word(T)=\tau $. 
Since $T=\bigcup_{n\in\Z} \theta^n F_{n,T}$, 
the tree $T$ agrees with $\tback_\infty$
on $(\Z\setminus [-\ul,\ol])\times G_0$. Consequently, $T\in\T_L$. 

It remains to prove that
there exists $N\in\N$ such that for any tree variables $\tau,\tau'\in\treevar$, 
there is a tree $T\in\T_\infty$ with $\tau_{0,T}=\tau$ and $\tau_{N,T}=\tau'$.

For any $\tau\in\treevar$, choose a tree $T_\tau$
with $\tau=\tau_{0,T_\tau}$. Since the set $\treevar$ is finite,
there is $M\in\N$ such that for all $n\in \Z$ with $|n|\ge M$ and 
all $\tau\in\treevar$, one has
$\tau_{n,T_\tau}= \tau_\bb$.
Take $N=2M+1$. 
Given $\tau,\tau'\in\treevar$,
the tree $T_\tau$ equals $\tback_\infty$ on all levels $\ge M$,
while $\theta^N T_{\tau'}$ equals $\tback_\infty$ on all levels $\le M+1$.
Gluing $T_\tau$ and $\theta^N T_{\tau'}$ together at level $M+1/2$,
we obtain a tree $T=\glue_M(T_\tau,\theta^N T_{\tau'})$ which 
satisfies the claim.
\end{proof}

\subsection{Joining gradient and local tree variables}

Recall that $\Bomega_L=\Omega_L\times\T_L$ 
denotes the set of all possible values of $\vec{\bomega}=
(\nabla t_\bb, y_\bb, T)$. 
In the following we identify $\vec{\bomega}\in\Bomega_L$ with 
the set of local gradient and tree variables
\begin{equation}
\vec{\bomega} \equiv\ 
((\bomega_{n} )_{n\in \Z \cap [-\ul,\ol]},
(\omega_{n+1/2} )_{n+1/2\in (\Z+1/2) \cap [-\ul,\ol]})
\end{equation}
where for  $n\in \Z \cap [-\ul,\ol] $
\begin{equation}
\bomega_n=(\omega_n,\tau_n)=
((\nabla t^\bb_e)_{e\in S_{n}},(y^\bb_{e})_{e\in S_{n}},\tau_n (T))
\in\Bomega_\vertical:=\R^S\times\R^S\times\treevar,
\end{equation}
and for  $n+1/2\in (\Z+{1}/{2}) \cap [-\ul,\ol] $
\begin{equation}
 \omega_{n+1/2} =(\nabla t^\bb_{p_{n+1/2}},y^\bb_{p_{n+1/2}})\in\Omega_\hor. 
\end{equation}
The set $\Bomega_\vertical$ is the domain of definition for 
the gradient variables associated to
vertical edges in $S$ plus the local tree variables.
Using \eqref{repr-nabla-t} and \eqref{nablat1}, we view $\nabla t_e$ and $y_e$ for any $e\in E_L$ 
in the following as functions of $\vec{\bomega}$. 
By Theorem \ref{thm:bijection}, 
there is a bijection between the set of spanning trees $\T_L$ and 
the set $\words_L$ consisting of words of local
tree variables  $(\tau_{n})_{n=-\ul,\dots ,\ol}$ with suitable matching
conditions. Thus, the set $\Bomega_L$ is identified
with the subset $\Bomega_L\subseteq\hat\Bomega_L$ of the set 
\begin{align}
\hat\Bomega_L:=
\Bomega_\vertical^{\Z\cap[-\ul,\ol]}\times(\Omega_\hor)^{(\Z+1/2)\cap[-\ul,\ol]}
\end{align}
consisting of all $\vec\bomega$ with
\begin{align}
\label{compatibility-tau}
\tau_\bb\fol\tau_{-\ul}\fol\tau_{-\ul+1}\fol \ldots \fol\tau_\ol\fol\tau_\bb. 
\end{align}
With these definitions we can reorganize the interpolated measure
in order to set up a transfer operator approach.
Recall the definition of $h_e$ from \eqref{Hn12}.
Using the results of Lemma \ref{lemma3.1}, Remark \ref{rem-path-gamma} 
and Theorem \ref{thm:recover information from word}, 
the value $h_e(\vec{\bomega})$ for any edge $e$ 
(vertical or horizontal) can be written in terms of
local variables. 
More precisely,
for a vertical edge $e_n$
on an integer level
$n\in\Z\cap[-\ul,\ol]$, the value $h_{e_n}(\vec\bomega)$
depends only on $e\in E_0$ and $\bomega_n$, but not {\it explicitly} on
$n$ or any other component of $\vec\bomega$ (recall that  
$\beta_{e_{n}}=\beta_{e_{0}}$ for all $n\in \mathbb{Z}$). 
Thus we can write for $e\in E_0$
\[
h_{e_n}(\vec{\bomega})=h_e^\vertical(\bomega_n)
\]
for some function $h_e^\vertical:\Bomega_\vertical\to\R$.
Similarly, 
for a horizontal edge $v_{n+1/2}$ on a half-integer level
$n+1/2\in\Z\cap(-\ul,\ol)$, the value $h_{v_{n+1/2}}(\vec\bomega)$
depends only on $v\in V_0$, $\bomega_n$, $\omega_{n+1/2}$ and $\bomega_{n+1}$.
Thus we write in this case 
\[
h_{v_{n+1/2}}(\vec{\bomega})=h_v^\hor(\bomega_n,\omega_{n+1/2},\bomega_{n+1}).
\]
We set $\Bomega_\mitte:=\Bomega_\vertical\times\Omega_\hor\times\Bomega_\vertical$.
For arguments $(\bomega,\omega_\hor,\bomega')\in\Bomega_\mitte$
that cannot be written in the form $(\bomega_n,\omega_{n+1/2},\bomega_{n+1})$,
we set 
\[
h_v^\hor(\bomega,\omega_\hor,\bomega')=+\infty.
\]
Note that this is precisely the
case when the tree variable $\tau$ in $\bomega$ and the tree variable 
$\tau'$ in $\bomega'$ do not fulfill $\tau\fol\tau'$.
Using this extension, we get a well-defined function
\[
h_v^\hor:\Bomega_\mitte\to\R\cup\{\infty\}
\]
for any $v\in V_0$.
We define $H_\vertical:\Bomega_\vertical\to\R$
and $H_\hor:\Bomega_\mitte\to\R\cup\{\infty\}$ by
\[
H_{\vertical} = \sum_{e\in E_0}  h_e^\vertical,\quad 
H_\hor=
 \sum_{v\in V_0}  h_v^\hor.
\]
With the above abbreviations, we can write for $\vec\bomega\in\hat\Bomega_L$
the interpolated Hamiltonian defined in \eqref{decomp HLlo} as 
\begin{align}
H_L^{\0\ell}(\vec{\bomega})= & \sum_{n\in \Z\cap [-\ul,\ol]} H_\vertical(\bomega_n)
+ \sum_{n+1/2\in (\Z+1/2)\cap [-\ul,\ol]} H_\hor(\bomega_n,\omega_{n+1/2},\bomega_{n+1})
\cr
& - \frac12 \sum_{n=-\ul}^{-1}\nabla t^{\bb }_{p_{n+1/2}} 
+ \frac12 \sum_{n=l}^{\ol-1}\nabla t^{\bb }_{p_{n+1/2}},
\end{align}
where $H_L^{\0\ell}(\vec{\bomega})=\infty$ holds if and only if 
$\vec\bomega\notin\Bomega_L$.
Furthermore, we define 
$H_\mitte,H_\mitte^\pm:\Bomega_\mitte\to\R\cup\{\infty\}$ as follows. 
For $(\bomega,\omega_\hor,\bomega')\in\Bomega_\mitte$ 
with $\omega_\hor=(\nabla t_\hor,y_\hor)$ we set
\begin{align}
\label{def Hmitte}
H_{\mitte} (\bomega,\omega_\hor,\bomega') 
&= \frac12 H_\vertical(\bomega)+ H_\hor(\bomega,\omega_\hor,\bomega')
+\frac12 H_\vertical(\bomega'),\\
\label{def Hmittepm}
H_{\mitte}^\pm (\bomega,\omega_\hor,\bomega')
&=H_{\mitte} (\bomega,\omega_\hor,\bomega')
\pm \frac12 \nabla t_\hor. 
\end{align}
Finally, we set for $\bomega=(\omega,\tau)\in\Bomega_\vertical$, 
\begin{align}
\label{def Hlinksrechts}
H_\links(\bomega):= \frac12 H_\vertical(\bomega) + \infty 
\one_{\{\tau_\bb\not\fol\tau\}}, \quad
H_\rechts(\bomega):=  \frac12 H_\vertical(\bomega) + \infty 
\one_{\{\tau\not\fol\tau_\bb\}}. 
\end{align}
With these definitions we have the following result.
\begin{lemma}
\label{representation-H-interpolated}
The interpolated Hamiltonian 
$H_L^{\0\ell}:\hat\Bomega_L\to\R\cup\{\infty\}$ can be written as
\begin{align}
&H_L^{\0\ell}(\vec{\bomega})=  H_\links(\bomega_{-\ul}) + 
\sum_{n=-\ul}^{-1} H_\mitte^-(\bomega_{n},\omega_{n+1/2},\bomega_{n+1}) 
\\ 
&+ \sum_{n=0}^{l-1} H_\mitte(\bomega_{n},\omega_{n+1/2},\bomega_{n+1}) 
 + \sum_{n=l}^{\ol-1} H_\mitte^+(\bomega_{n},\omega_{n+1/2},\bomega_{n+1}) 
+ H_\rechts(\bomega_\ol).
\nonumber
\end{align}
It takes finite values precisely on $\Bomega_L$, represented by the constraint 
(\ref{compatibility-tau}). 
\end{lemma}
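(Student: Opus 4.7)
The plan is to start from the explicit formula \eqref{decomp HLlo} for $H_L^{\0\ell}$ and rearrange the edge sum into local blocks, using Remark \ref{rem-path-gamma} to justify that each $h_e$ depends only on the local variables attached to neighboring levels. Concretely, I would split $\sum_{e\in E_L} h_e(\vec\bomega)$ into vertical edges $e\in E_n$ (which by Remark \ref{rem-path-gamma} contribute a function of $\bomega_n$ only, giving $H_\vertical(\bomega_n)$) and horizontal edges $v_{n+1/2}\in E_{n+1/2}$ (which contribute $H_\hor(\bomega_n,\omega_{n+1/2},\bomega_{n+1})$, with the $\infty$ convention absorbed into $h_v^\hor$ whenever $\tau_n\not\fol\tau_{n+1}$). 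Theorem \ref{thm:recover information from word} guarantees that the quantities needed to evaluate $h_e$ — membership in $T$, orientation in $T$, membership in $B(T)$ — are all recoverable from the local tree variables $\tau_n, \tau_{n+1}$, which is the substantive content ensuring that the functions $h_e^\vertical$ and $h_v^\hor$ are really well defined on $\Bomega_\vertical$ and $\Bomega_\mitte$ respectively.

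Next, I would redistribute the vertical contributions symmetrically, writing
\begin{align*}
\sum_{n=-\ul}^{\ol} H_\vertical(\bomega_n)
=\tfrac12 H_\vertical(\bomega_{-\ul})+\tfrac12 H_\vertical(\bomega_{\ol})
+\sum_{n=-\ul}^{\ol-1}\tfrac12\bigl(H_\vertical(\bomega_n)+H_\vertical(\bomega_{n+1})\bigr),
\end{align*}
so that each interior level contributes a half to both of its adjacent horizontal bonds, exactly matching the definition \eqref{def Hmitte} of $H_\mitte$. The leftover halves at the boundary levels combine with the indicator penalties from \eqref{def Hlinksrechts} to give $H_\links(\bomega_{-\ul})$ and $H_\rechts(\bomega_{\ol})$. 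The remaining two sums $-\sum_{n=-\ul}^{-1}\tfrac12\nabla t^\bb_{p_{n+1/2}}$ and $+\sum_{n=l}^{\ol-1}\tfrac12 \nabla t^\bb_{p_{n+1/2}}$ from \eqref{decomp HLlo} are then absorbed into $H_\mitte^-$ and $H_\mitte^+$ via \eqref{def Hmittepm}, while the interior range $n=0,\dots,l-1$ receives no correction and keeps $H_\mitte$. Putting all pieces together yields precisely the claimed decomposition.

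For the finiteness statement, I would note that the only sources of $+\infty$ in the decomposition are the indicators $\one_{\{\tau_\bb\not\fol\tau_{-\ul}\}}$, $\one_{\{\tau_{\ol}\not\fol\tau_\bb\}}$ in $H_\links,H_\rechts$, and the infinity built into $h_v^\hor$ whenever $\tau_n\not\fol\tau_{n+1}$. A configuration $\vec\bomega\in\hat\Bomega_L$ therefore yields a finite value of $H_L^{\0\ell}$ if and only if the chain of matching conditions $\tau_\bb\fol\tau_{-\ul}\fol\cdots\fol\tau_{\ol}\fol\tau_\bb$ from \eqref{compatibility-tau} holds, which by Theorem \ref{thm:bijection} is exactly the identification $\Bomega_L\subseteq\hat\Bomega_L$ via $\words_L$.

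The main obstacle is purely bookkeeping: keeping track of which edges sit at which half-integer level and making sure the symmetric splitting of $H_\vertical$ yields exactly the $\frac12+\frac12$ allocation at interior levels and the unmatched $\frac12$ at the two boundaries. Conceptually, the only nontrivial input is the locality of $h_e$ in the local tree variables, which is guaranteed by Remark \ref{rem-path-gamma} and Theorem \ref{thm:recover information from word}; everything else is a direct rewriting.
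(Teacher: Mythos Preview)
Your proposal is correct and matches the paper's approach exactly: the paper's proof simply reads ``This is an immediate consequence of the definitions above,'' and what you have written is precisely the unpacking of that immediate consequence---splitting $\sum_{e\in E_L}h_e$ into vertical and horizontal pieces, symmetrically redistributing $H_\vertical$ to form $H_\mitte$, absorbing the two backbone gradient sums into $H_\mitte^\pm$, and reading off the finiteness condition from the $\infty$-indicators. There is no gap and no alternative route here; you have just spelled out the bookkeeping the paper leaves implicit.
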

\begin{proof}
This is an immediate consequence of the definitions above.
\end{proof}
%%%%%%%%%%%%%%%%%%%%%%%%%%%%%%%%
%%%%%%%%%%%%%%%%%%%%%%%%%%%%%%%%%
%%%%%%%%%%%%%%%%%%%%%%%%%%%%%%%%
\paragraph{Reflection Symmetry.}
For later use, we define 
\begin{align}
\label{bomega-refl}
\bomega^\refl=(\omega,\tau^\refl)\in\Bomega_\vertical \quad\text{ for }\quad
\bomega=(\omega,\tau)\in\Bomega_\vertical.
\end{align}
Note that the reflection operation changes the orientation in $T$ for
the edges along the backbone $B (T)$, but not for the ones along $B^{c} (T)$.
Moreover $\beta_{e^{\refl }}=\beta_{e}$ for all $e\in E$. Then 
for $e\in E_0$, $v\in V_0$, 
$\bomega,\bomega'\in\Bomega_\vertical$
and $\omega_\hor\in\Omega_\hor$ we see from \eqref{Hn12}
\begin{align}
\label{symmetry-h}
h_e^\vertical(\bomega^\refl)=h_e^\vertical(\bomega),\quad
h_v^\hor(\bomega'^\refl,-\omega_\hor,\bomega^\refl)=
h_v^\hor(\bomega,\omega_\hor,\bomega') .
\end{align}
In particular, by Lemma \ref{lemma:reflection} we have
\begin{align}
h_v^\hor(\bomega,\omega_\hor,\bomega')< \infty \ \Leftrightarrow  \ 
\tau \fol \tau'   \ \Leftrightarrow  \  {\tau'}^{\refl}  \fol
 \tau^{\refl}\ \Leftrightarrow  \
h_v^\hor({\bomega'}^\refl,-\omega_\hor,\bomega^\refl)< \infty.
\end{align}
The symmetry properties (\ref{symmetry-h}) imply 
for $(\bomega,\omega_\hor,\bomega')\in\Bomega_\mitte$
\begin{align}
\label{symmetry-Hmitte}
H_\mitte(\bomega'^\refl,-\omega_\hor,\bomega^\refl)=
H_\mitte(\bomega,\omega_\hor,\bomega').
\end{align}

%%%%%%%%%%%%%%%%%%%%%%%%%%%%%%%%%%%%%%%%%%%%%%%%%%%%%%
%%%%%%%%%%%%%%%%%%%%%%%%%%%%%%%%%%%%%%%%%%%%%%%%%%%%%%
\section{The energy contribution}\label{sect:energy}
%%%%%%%%%%%%%%%%%%%%%%%%%%%%%%%%%%%%%%%%%%%%%%%%%%%%%%
%%%%%%%%%%%%%%%%%%%%%%%%%%%%%%%%%%%%%%%%%%%%%%%%%%%%%%

\begin{theorem}
\label{thm: energy contribution}
Take a fixed $G_0$ and $\vec\beta$. For any $\alpha\in\R$, 
the energy contribution \eqref{Egamma} satisfies
\begin{equation}\label{energyb}
\Egamma=\
\alpha\, l\, \cvier\ +\ \cdrei(L,l,\alpha)\  \le\  \alpha \, l\,  \cvier\ +\ 
\cdrei^{\max}(\alpha )
\end{equation}
where $\cvier>0$ and  $\cdrei (L,l,\alpha )\in\R$ are constants depending 
also on $G_0$ and $\vec\beta$, and 
\[
\cdrei^{\max}(\alpha):= 
\sup_{L,l}  |\cdrei(L,l,\alpha )| < \infty.
\]
\end{theorem}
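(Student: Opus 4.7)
The plan is to start from \eqref{Egamma} and split
\begin{align*}
\Egamma = \frac{1}{2}\sum_{n=0}^{l-1}\mathbb{E}_{\Pint}[\nabla t^\bb_{p_{n+1/2}}] + \frac{\alpha}{2}\sum_{n=0}^{l-1}\mathbb{E}_{\Pint}[\chi_{n+1/2}],
\end{align*}
noting that the expectations are with respect to $\Pint$ and hence $\alpha$-independent. I will show that the first sum is $O(1)$ uniformly in $L,l$, while the second equals $l\, c^{*}+O(1)$ for a strictly positive constant $c^{*}$ depending only on $G_{0}$ and $\vec\beta$. Setting $\cvier:=c^{*}/2>0$ and absorbing all $O(1)$ terms into $\cdrei(L,l,\alpha)$ then yields \eqref{energyb}; since $\cdrei$ depends only linearly on $\alpha$, its supremum $\cdrei^{\max}(\alpha)$ over $L,l$ is finite for every $\alpha$.

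Both bounds will come from a transfer-operator representation. Using Lemma~\ref{representation-H-interpolated} and integrating out the horizontal variables $\omega_{n+1/2}$, $\Zint$ becomes a matrix element of a product of three integral operators acting on functions $\Bomega_\vertical\to\R$: a bulk operator
\begin{align*}
K_{0}(\bomega,\bomega'):=\int_{\Omega_\hor}e^{-H_\mitte(\bomega,\omega_\hor,\bomega')}\,d\omega_\hor
\end{align*}
sandwiched between boundary versions $K_{\pm}$ obtained by replacing $H_\mitte$ by $H_\mitte^{\pm}$, and vectors $e^{-H_\links}, e^{-H_\rechts}$ encoding the extreme levels. The symmetry \eqref{symmetry-Hmitte} together with the change of variables $\omega_\hor\to-\omega_\hor$ shows that $K_{0}$ is self-adjoint for the involution $\bomega\mapsto\bomega^\refl$ on $\Bomega_\vertical$, and that $K_{-}$ and $K_{+}$ are related by this involution. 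Invoking the compactness and Perron--Frobenius-type statements gathered in the appendix (the $\cosh$ and quadratic terms in $H_\vertical$ confine the continuous variables, while $\treevar$ is finite and the matching relation $\fol$ is irreducible by Theorem~\ref{thm:bijection}), one gets a simple real dominant eigenvalue $\lambda_{0}>0$ with a strictly positive dominant eigenfunction $\psi_{0}$ that, by symmetry of $K_{0}$, satisfies $\psi_{0}(\bomega^\refl)=\psi_{0}(\bomega)$, and a spectral gap.

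Inserting $\chi_{n+1/2}$ or $\nabla t^\bb_{p_{n+1/2}}$ at level $n+\frac12$ amounts to replacing the $n$-th factor of $K_{0}$ in the middle block by
\begin{align*}
K_{0}^{\chi}(\bomega,\bomega'):=\int \chi(\bomega,\omega_\hor,\bomega')\,e^{-H_\mitte}\,d\omega_\hor\quad\text{or}\quad
K_{0}^{\nabla}(\bomega,\bomega'):=\int \nabla t_\hor\,e^{-H_\mitte}\,d\omega_\hor.
\end{align*}
The same substitution $\omega_\hor\to-\omega_\hor$ shows that $K_{0}^{\chi}$ is symmetric under the involution, while $K_{0}^{\nabla}$ is \emph{antisymmetric}. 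Using the spectral gap to contract the boundary factors onto $\psi_{0}$, one obtains, uniformly in $n$ with $n,l-n\gg 1$,
\begin{align*}
\mathbb{E}_{\Pint}[\chi_{n+1/2}]=\frac{\langle\psi_{0},K_{0}^{\chi}\psi_{0}\rangle}{\lambda_{0}\langle\psi_{0},\psi_{0}\rangle}+O(e^{-c\min(n,l-n)}),\qquad
\mathbb{E}_{\Pint}[\nabla t^\bb_{p_{n+1/2}}]=O(e^{-c\min(n,l-n)}),
\end{align*}
the second bulk value vanishing because the antisymmetric $K_{0}^{\nabla}$ has zero diagonal in the symmetric state $\psi_{0}$. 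The leading constant $c^{*}:=\langle\psi_{0},K_{0}^{\chi}\psi_{0}\rangle/(\lambda_{0}\langle\psi_{0},\psi_{0}\rangle)$ is strictly positive since $\chi\ge 0$ is positive near the origin and $\psi_{0}>0$. Summing the geometric error over $n\in[0,l-1]$ produces constants independent of $L,l$, giving the two announced estimates and \eqref{energyb}.

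The main obstacle is setting up the transfer-operator framework rigorously: verifying that $K_{0}$ defines a compact operator on a suitable weighted $L^{2}$-space despite the non-compact range of $(\nabla t_\bb,y_\bb)$, establishing the Perron--Frobenius gap (the irreducibility on the discrete $\treevar$ factor via Theorem~\ref{thm:bijection} is the non-trivial input here), and quantifying the exponential decay of boundary perturbations coming from $K_{\pm}, H_\links, H_\rechts$. Once those analytic ingredients are available, the reflection-symmetry argument itself is essentially algebraic.
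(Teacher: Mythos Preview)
Your overall strategy coincides with the paper's: write the expectations via transfer operators, apply Perron--Frobenius--Jentzsch to the middle kernel to extract a bulk term plus a geometrically summable remainder, and kill the $\nabla t$ bulk contribution by reflection symmetry while keeping the $\chi$ bulk contribution strictly positive. The identification of the two ``obstacles'' (Hilbert--Schmidt property via Gaussian domination of $e^{-H_\mitte}$, and irreducibility through Theorem~\ref{thm:bijection}) is also exactly what the paper does.

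There is, however, one genuine slip. The symmetry \eqref{symmetry-Hmitte} does \emph{not} make $K_0$ self-adjoint, and the Perron eigenfunction is in general \emph{not} reflection-invariant. What \eqref{symmetry-Hmitte} gives, after the change $\omega_\hor\to-\omega_\hor$, is
\[
k(\bomega',\bomega)=k(\bomega^\refl,\bomega'^\refl),
\]
i.e.\ conjugation by the reflection $U:f\mapsto f\circ\refl$ sends $K_0$ to its adjoint: $UK_0U^{-1}=K_0^*$. Hence the reflected right Perron eigenfunction is a (positive multiple of the) \emph{left} Perron eigenfunction, $\Phi_\rechts^\refl=c\,\Phi_\links$, not $\Phi_\rechts$ itself. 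Your formula with a single symmetric $\psi_0$ is therefore not available. The fix is immediate and is what the paper does: work with the pair $(\Phi_\links,\Phi_\rechts)$ and the rank-one projector $P=\Phi_\rechts\langle\Phi_\links,\cdot\rangle$. The antisymmetry $\tilde k_0(\bomega',\bomega)=-\tilde k_0(\bomega^\refl,\bomega'^\refl)$ then yields
\[
\langle\Phi_\links,K_0^\nabla\Phi_\rechts\rangle
=\langle\Phi_\links^\refl,-K_0^\nabla\Phi_\rechts^\refl\rangle
=-\langle\Phi_\links,K_0^\nabla\Phi_\rechts\rangle=0,
\]
which is the vanishing you want; the positivity of $\langle\Phi_\links,K_0^\chi\Phi_\rechts\rangle$ follows from $\Phi_\links,\Phi_\rechts>0$ and $\chi\ge 0$.

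Two smaller remarks. First, no weighted $L^2$ is needed: the Gaussian lower bound on $H_\mitte$ makes all kernels Hilbert--Schmidt on plain $L^2(\Bomega_\vertical)$. Second, the uniformity in $L$ of your $O(e^{-c\min(n,l-n)})$ requires that the normalized boundary vectors $\hat\Psi_\links^{\ul}$, $\hat\Psi_\rechts^{\ol-l}$ have inner products with $\Phi_\rechts,\Phi_\links$ bounded away from zero uniformly in $\ul,\ol$; the paper secures this via the Perron--Frobenius theory for $K_\pm$ (so that $\hat\Psi_\links^m\to\hat\Phi_\links^-$, $\hat\Psi_\rechts^m\to\hat\Phi_\rechts^+$), a step you should make explicit.
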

The rest of the section is devoted to the proof of this result.
In Sect.\ \ref{sect:locv} we introduced  local tree variables,
replacing the global variable $T$. With these new variables
we set up a transfer operator method in Sect.\ \ref{sect:transf} below.
Finally  Sect.\ \ref{sect:energyproof} contains the proof of the theorem.

%%%%%%%%%%%%%%%%%%%%%%%%%%%%%%%%%%%%%%%%%%%%%%%%%%%%%%
\subsection{Setting up the transfer operator}
\label{sect:transf}
%%%%%%%%%%%%%%%%%%%%%%%%%%%%%%%%%%%%%%%%%%%%%%%%%%%%%%

We endow $\Bomega_\vertical$ with the reference measure 
$d\bomega = \prod_{e\in S} d\nabla t_e^\bb\, dy_e^\bb\, d\tau$, where 
$d\nabla t_e^\bb$ and $dy_e^\bb$ denote the Lebesgue measure on $\R$ and 
$d\tau$ denotes the counting measure on $\treevar$. The scalar product on 
$L^2(\Bomega_\vertical,d\bomega)$ is defined by 
\[
 \left \langle F,G \right \rangle:= 
\int_{\Bomega_\vertical} \overline{F (\bomega )} G (\bomega ) d\bomega.
\]
However, here we are using mostly real functions.

\begin{definition}
We define the integral kernels 
$k, k^\pm,\tilde{k}_{\alpha }: \Bomega_\vertical\times \Bomega_\vertical \to [0,\infty)$
by
\begin{align}
& k (\bomega ,\bomega') =  \int_{\Omega_\hor} 
e^{-H_{\mitte} (\bomega,\omega_\hor,\bomega')} d\omega_\hor, \quad  
k^\pm (\bomega ,\bomega') =  \int_{\Omega_\hor} 
e^{-H_{\mitte}^\pm (\bomega,\omega_\hor,\bomega')} d\omega_\hor,\\
&\mbox{and} \quad  \tilde{k}_{\alpha }(\bomega ,\bomega')=  \int_{\Omega_\hor} 
\left[ \nabla t_\hor+ \alpha \chi(\omega,\omega_\hor,\omega')\right]
e^{-H_{\mitte} (\bomega,\omega_\hor,\bomega')} d\omega_\hor,\nonumber
\end{align}
where $\omega_\hor=(\nabla t_\hor,y_\hor)$, $\bomega=(\omega,\tau)$,
$\bomega'=(\omega',\tau')$, the function $\chi$ is given by (\ref{chidef}) 
and $\alpha\in\R$. We also define two functions
$\Psi_\links, \Psi_\rechts : \Bomega_\vertical\to [0,\infty)$ by
\begin{equation}
\Psi_\links(\bomega)=e^{-H_\links(\bomega)} \qquad  \mbox{and} \qquad    
 \Psi_\rechts(\bomega)=e^{-H_\rechts(\bomega)}.
\end{equation}
\end{definition}

\begin{lemma}\label{lemmaHS}
$\Psi_\links$ and $\Psi_\rechts$ belong to 
$L^2(\Bomega_\vertical,d\bomega)\setminus\{0\}$. 
The integral kernels $k$, $k^\pm$, and $\tilde{k}_\alpha$ belong all to 
$L^2(\Bomega_\vertical\times\Bomega_\vertical,d\bomega\, d\bomega')$. 
\end{lemma}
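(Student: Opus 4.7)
The proof reduces to checking finiteness of (and, for $\Psi_{\links/\rechts}$, positivity of) integrals of the form $\int e^{-H}$ over finite-dimensional Euclidean spaces, with a harmless summation over the finite set $\treevar$ for the tree-variable component. The plan is organized around extracting, from each Hamiltonian, a dominant sum of $\cosh$ and Gaussian terms and absorbing every remaining linear correction into these.

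First, for $\Psi_\links,\Psi_\rechts$: note $\Psi_\links(\bomega)^2=e^{-H_\vertical(\bomega)}\mathbf{1}_{\{\tau_\bb\fol\tau\}}$. The Hamiltonian $H_\vertical=\sum_{e\in E_0}h_e^\vertical$ contains, for each $e\in S\subset E_0$, the non-negative contribution $\beta_e[\cosh\nabla t_e^\bb-1+(y_e^\bb)^2/2]$, because for $e\in S$ Lemma~\ref{lemma3.1} gives $\nabla t_e=\pm\nabla t_e^\bb$ and $y_e=\pm y_e^\bb$. This already yields exponential decay in every coordinate $\nabla t_e^\bb$ and Gaussian decay in every $y_e^\bb$. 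The remaining terms of $h_e^\vertical$ are either bounded (the $\ln(\beta_e/2\pi)$ pieces) or linear in the $\nabla t_{e'}^\bb$'s, and are absorbed using the elementary inequality $|u|\le C_\beta+\beta(\cosh u-1)$ valid for any $\beta>0$. Non-vanishing is immediate: the set $\{\tau=\tau_\bb\}$ has positive measure and satisfies $\tau_\bb\fol\tau_\bb$ via $T=\tback_\infty$, so $\Psi_{\links/\rechts}>0$ there.

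Second, for $k$ and $k^\pm$ I would write
\begin{equation*}
k(\bomega,\bomega')=e^{-H_\vertical(\bomega)/2-H_\vertical(\bomega')/2}\int e^{-H_\hor(\bomega,\omega_\hor,\bomega')}\,d\omega_\hor
\end{equation*}
and prove $\int e^{-H_\hor}\,d\omega_\hor\le C_1 e^{C_2(\|\omega\|+\|\omega'\|)}$ uniformly in $\tau,\tau'$. The key structural input is that $p_{n+1/2}\in B(\tback)$, so $h_p^\hor$ contributes the full decay $\beta_{p_{1/2}}[\cosh\nabla t_\hor-1+y_\hor^2/2]$ directly in the variables $\omega_\hor$. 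For $v\ne p$, the $\cosh$ and $y^2$ pieces of $h_v^\hor$ are non-negative, while the linear corrections $\nabla t_{v_{n+1/2}}^T\mathbf 1_{\{\cdot\}}$ are sums of $\pm\nabla t_{e'}^\bb$ along the backbone-tree path given by Lemma~\ref{lemma3.1}, hence bounded by a constant times $(1+|\nabla t_\hor|+\|\omega\|+\|\omega'\|)$; the $|\nabla t_\hor|$ part is dominated by $\cosh\nabla t_\hor$ as above. Combining this horizontal bound with the squared prefactor and using once more that $\int e^{-H_\vertical(\bomega)+2C_2\|\omega\|}\,d\bomega<\infty$ (exactly as in the first step), one obtains $\iint|k|^2\,d\bomega\,d\bomega'<\infty$. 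The kernels $k^\pm$ differ only by the linear term $\pm\nabla t_\hor/2$ in $H_\mitte^\pm$, absorbed by the same $\cosh\nabla t_\hor$. For $\tilde k_\alpha$ the same factorization applies, with an extra factor $|\nabla t_\hor+\alpha\chi|\le|\nabla t_\hor|+|\alpha|$; since $\int|\nabla t_\hor|e^{-\beta_{p_{1/2}}\cosh\nabla t_\hor+C|\nabla t_\hor|}\,d\nabla t_\hor<\infty$, the horizontal integral inherits a bound of the form $C_1'(1+|\alpha|)e^{C_2(\|\omega\|+\|\omega'\|)}$ and $\tilde k_\alpha\in L^2$ follows.

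The main technical obstacle is the second step: one must verify in detail that every linear term entering $H_\hor$—in particular the $\nabla t_e^T$-gradients, which depend on an arbitrary spanning tree $T$ encoded by $(\tau,\tau')$—is indeed bounded by a constant multiple of $1+|\nabla t_\hor|+\|\omega\|+\|\omega'\|$, uniformly over the finite set of tree variables. This requires unpacking the representations of Lemma~\ref{lemma3.1} both for vertical and horizontal edges and tracking orientations carefully. The argument is not conceptually deep but is the most delicate bookkeeping in the proof; once this is settled, the $L^2$ bounds follow from the standard Gaussian/exponential integrability template sketched above.
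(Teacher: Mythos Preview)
Your approach is correct and follows essentially the same strategy as the paper: isolate the dominant $\cosh$/quadratic decay coming from the backbone-tree edges and absorb the remaining linear corrections into it, uniformly over the finite set of tree variables. The paper streamlines the execution by immediately applying $\cosh x-1\ge x^2/2$, which bounds $e^{-H_\vertical}$ and $e^{-H_\mitte}$ by $(\tau,\tau')$-dependent Gaussians in $(\omega,\omega_\hor,\omega')$ in one stroke and thereby avoids the separate horizontal/vertical factorization you carry out; this also makes the $\tilde k_\alpha$ case immediate, since a linear prefactor times a Gaussian is still square-integrable.
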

\begin{proof}
Consider any edge $e$ in $\G_L$ and any 
$\vec\bomega=(\nabla t_\bb,y_\bb,T)\equiv 
((\omega_{n},\tau_n)_{n\in \Z \cap [-\ul,\ol]}$,
$(\omega_{n+1/2} )_{n+1/2\in (\Z+1/2) \cap [-\ul,\ol]})$.
We bound the contribution $h_e(\vec\bomega)$
to the Hamiltonian
from \eqref{Hn12} as follows from below:
\begin{align}
h_e(\vec\bomega)= &
\beta_e\left[ \cosh \nabla t_e-1 +  \frac{y_{e}^2 }{2}  \right]
+f_{e,T}(\nabla t_\bb)-\log\frac{\beta_e}{2\pi} \one_{\{e\in T \}}
\cr
\ge & 
\frac{\beta_e}{2}\left[ (\nabla t_e)^2 +  y_{e}^2 \right]
+f_{e,T}(\nabla t_\bb)-\log\frac{\beta_e}{2\pi}\one_{\{e\in T \}}
\label{lower-bound-h-e}
%\ge f_{e,T}(\nabla t_\bb)-\log\frac{\beta_e}{2\pi}\one_{\{e\in T \}}
\end{align}
with the linear function $f_{e,T}:\nabla t_\bb\mapsto 
\nabla t_e^{T} \one_{\{e\in B^{c} (T) \}} - \nabla t_e^{\bb } 
\one_{\{e\in B^{c}(\tback) \}}/2$. Given $e$ on an integer level $n$,
note that $f_{e,T}(\nabla t_\bb)$ depends only on $\tau_{n,T}$ and linearly on the
$\nabla t_\bb$-components in $\omega_n$.
Similarly, given $e$ on level $n+1/2$,
the value $f_{e,T}(\nabla t_\bb)$ depends
only on $\tau_{n,T}$ and linearly
on the $\nabla t_\bb$-components in $(\omega_n, \omega_{n+1/2},\omega_{n+1})$.
Summing over edges and dropping the terms $(\nabla t_e)^2 +  y_{e}^2$
in \eqref{lower-bound-h-e}
for edges $e\notin\tback$, 
we conclude the following 
for $\bomega=(\omega,\tau)$, $\bomega'=(\omega',\tau') \in \Bomega_\vertical$
and $\omega_\hor\in\Omega_\hor$
with some $\beta$-dependent constants 
$\ceins^\vertical,\ceins^\hor>0$, 
$\czwei^\vertical,\czwei^\hor\in\R$
and some linear functions $f^\vertical_\tau$ and  $f^\hor_\tau$:
\begin{align}
H_\vertical(\bomega)&\ge \ceins^\vertical\|\omega\|^2 
+f^\vertical_\tau(\omega)+\czwei^\vertical,\\
H_\hor(\bomega,\omega_\hor,\bomega')&\ge 
\ceins^\hor\|(\omega,\omega_\hor,\omega')\|^2 
+f^\hor_\tau(\omega,\omega_\hor,\omega')+\czwei^\hor.
\end{align}
Using the definitions \eqref{def Hmitte} and \eqref{def Hmittepm} 
of $H_\mitte$ and $H_\mitte^\pm$, 
we get that $e^{-H_\mitte(\bomega,\omega_\hor,\bomega')}$
and $e^{-H_\mitte^\pm(\bomega,\omega_\hor,\bomega')}$
are bounded by a $(\tau,\tau')$-dependent Gaussian in 
the arguments $(\omega,\omega_\hor,\omega')$.
Integrating over $\omega_\hor$, square integrability of 
$k$ and $k^\pm$ follows.
Similarly, using the definition \eqref{def Hlinksrechts}
of $H_\links$ and $H_\rechts$, it follows that
$\Psi_\links(\omega)=e^{-H_\links(\bomega)}$ and 
$\Psi_\rechts(\omega)=e^{-H_\rechts(\bomega)}$
are bounded by $\tau$-dependent Gaussians in $\omega$
and hence square integrable.
Since $\chi$ is bounded and $\nabla t_\hor$ depends linearly on $\omega_\hor$,
square integrability of $\tilde{k}_\alpha$ follows by the same argument.
\end{proof}
%%%%%%%%%%%%%%%%%
%%%%%%%%%%%%%%%%
%%%%%%%%%%%%%%%%%%
\begin{definition}\label{def-tr}
We define the transfer operators $\K$, $\K^\pm$, and $\tilde\K_\alpha$  by 
\begin{align}
 \K F (\bomega )= \int_{\Bomega_\vertical} k (\bomega ,\bomega') 
F (\bomega') \, d\bomega'
\end{align}
and similarly for $\K^\pm$ and $\tilde\K_\alpha$ using the integral kernels 
$k^\pm$ and $\tilde k_\alpha$ instead of $k$. 
\end{definition}

By Lemma \ref{lemmaHS} above, these transfer operators are 
Hilbert-Schmidt operators from $L^2(\Bomega_\vertical,d\bomega)$ 
to $L^2(\Bomega_\vertical,d\bomega)$. They satisfy the following properties.
\begin{lemma}
\label{lemma:spectral}
The spectral radii $\lambda$, $\lambda^\pm$ of the integral operators 
$\K$, $\K^\pm$, and their adjoints $\K^*$, $(\K^\pm)^*$ are strictly positive 
eigenvalues of the corresponding operator and its adjoint. The 
corresponding eigenspaces are one-dimensional and spanned by strictly
positive functions, denoted by $\Phi_\rechts$, $\Phi^\pm_\rechts$, 
$\Phi_\links$, $\Phi^\pm_\links$, respectively. 
We normalize these functions such that 
$\sk{\Phi_\links,\Phi_\rechts}=1$ and $\sk{\Phi^\pm_\links,\Phi^\pm_\rechts}=1$.
Projections 
to the eigenspaces of $\K$, $\K^\pm$ are given
by $P\Psi=\sk{\Phi_\links,\Psi}\Phi_\rechts$ and
$P^\pm\Psi=\sk{\Phi_\links^\pm,\Psi}\Phi^\pm_\rechts$, respectively.
They fulfill $\K P=P\K=\lambda P$ and
\begin{align}
\label{perron-frobenius}
&\|\K^m(\id-P)\|=
\|\K^m-\lambda^m P\|=O(a^m\lambda^m)\text{ as } m\to\infty,
\\
\label{perron-frobenius2}
&\|(\K^\pm)^m-(\lambda^\pm)^m P^\pm\|=O((a^\pm)^m(\lambda^\pm)^m)\text{ as } m\to\infty
\end{align}
with some constants $a,a^\pm\in[0,1)$.
\end{lemma}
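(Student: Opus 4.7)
The plan is to verify the hypotheses of a Jentzsch/Krein--Rutman-type theorem for positive compact integral operators (the version collected in the appendix) and read off all the claims. Three facts must be established: (a) compactness of $\K$ and $\K^\pm$; (b) non-negativity of the kernels; (c) strict positivity of some iterate of each kernel. Compactness is already at hand: Lemma~\ref{lemmaHS} shows $k, k^\pm \in L^2(\Bomega_\vertical\times\Bomega_\vertical, d\bomega\,d\bomega')$, so $\K$ and $\K^\pm$ are Hilbert--Schmidt on $L^2(\Bomega_\vertical,d\bomega)$. Non-negativity is immediate from the definition, since the integrands $e^{-H_\mitte}$ and $e^{-H_\mitte^\pm}$ are non-negative extended reals.

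The heart of the argument is (c). First I would observe that for $\bomega=(\omega,\tau),\bomega'=(\omega',\tau')\in\Bomega_\vertical$, the value $k(\bomega,\bomega')$ is strictly positive if and only if $\tau\fol\tau'$: when $\tau\fol\tau'$, Lemma~\ref{representation-H-interpolated} and the Gaussian-type lower bounds from the proof of Lemma~\ref{lemmaHS} show that $H_\mitte(\bomega,\omega_\hor,\bomega')$ is finite and smooth in $\omega_\hor$, producing a strictly positive integral; when $\tau\not\fol\tau'$ the term $h_v^\hor$ equals $+\infty$ and the integrand vanishes. Iterating, the kernel $k^{(N)}$ of $\K^N$ is the $N$-fold convolution of $k$ over $\Bomega_\vertical$, and it is strictly positive at $(\bomega,\bomega')$ precisely when there exists a chain $\tau=\tau_0\fol\tau_1\fol\cdots\fol\tau_N=\tau'$ in $\treevar$. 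By the last statement of Theorem~\ref{thm:bijection}, there is a uniform $N\in\N$ for which every such chain exists, so $\K^N$ has an everywhere strictly positive kernel. The identical argument applies to $\K^\pm$, using that $H_\mitte^\pm$ differs from $H_\mitte$ only by the linear term $\pm\nabla t_\hor/2$.

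With (a)--(c) verified, the Jentzsch-type theorem yields: the spectral radius $\lambda=r(\K)$ is strictly positive, is an algebraically simple eigenvalue of both $\K$ and $\K^*$, and its eigenspaces are one-dimensional and spanned by strictly positive functions $\Phi_\rechts$ and $\Phi_\links$, respectively. After normalizing by $\sk{\Phi_\links,\Phi_\rechts}=1$, the rank-one operator $P\Psi=\sk{\Phi_\links,\Psi}\Phi_\rechts$ is the spectral projection onto the top eigenspace, so $\K P=P\K=\lambda P$. Algebraic simplicity of $\lambda$ together with compactness implies that the rest of $\operatorname{spec}(\K)$ lies in a closed disk of radius $a\lambda$ for some $a\in[0,1)$, and standard holomorphic functional calculus (or Gelfand's spectral radius formula applied to the compact operator $\K(\id-P)$) gives $\|\K^m(\id-P)\|=\|\K^m-\lambda^m P\|=O((a\lambda)^m)$, which is~\eqref{perron-frobenius}. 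Applying the same argument to $\K^\pm$ produces $\lambda^\pm,\Phi_\links^\pm,\Phi_\rechts^\pm,P^\pm$ and~\eqref{perron-frobenius2}.

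The main obstacle is the uniform irreducibility step (c): the spectral gap, hence the exponential error in \eqref{perron-frobenius}--\eqref{perron-frobenius2}, hinges entirely on the existence of a single $N$ working for all pairs $(\tau,\tau')\in\treevar\times\treevar$. Without this uniformity the relation $\fol$ could decompose into non-communicating classes, $\lambda$ could be degenerate, and the spectral gap could close. Everything else---compactness, positivity, the Jentzsch conclusion---is routine once this combinatorial input from Theorem~\ref{thm:bijection} is recorded.
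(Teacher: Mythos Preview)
Your proposal is correct and follows essentially the same route as the paper's own proof: both verify the hypotheses of the Perron--Frobenius--Jentzsch theorem collected in the appendix, with Lemma~\ref{lemmaHS} supplying the Hilbert--Schmidt property and the last clause of Theorem~\ref{thm:bijection} supplying the uniform $N$ that makes $\K^N$ (and $(\K^\pm)^N$) have strictly positive kernel. The paper's proof adds one small remark you leave implicit: it singles out that $k(\bomega,\bomega')>0$ whenever both tree variables equal $\tau_\bb$, which (together with continuity of $k$ in the $\omega$-variables) yields hypothesis~(c) of the appendix---a set $S$ of positive measure with $k\ge\epsilon$ on $S\times S$---needed there to conclude $r(\K)>0$; this is contained in your observation that $k(\bomega,\bomega')>0$ iff $\tau\fol\tau'$ once one notes $\tau_\bb\fol\tau_\bb$.
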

\begin{proof}
From Theorem \ref{thm:bijection}, it follows that 
some power $\K^N$ of $\K$ has a strictly positive integral kernel.
The same holds for some power of $\K^\pm$. Furthermore, the values 
of the integral kernels $k(\bomega,\bomega')$ and 
$k^\pm(\bomega,\bomega')$ 
are strictly positive  whenever the tree variables $\tau$ in $\bomega$ 
and $\tau'$ in $\bomega'$ both equal $\tau_\bb$. 
Hence, the lemma follows 
by the Perron-Frobenius-Jentzsch theory; see appendix. 
\end{proof}

We remark that $P$ and $P^\pm$ need not be self-adjoint.\\

%%%%%%%%%%%%%%%%%%%%%%%%%%
\subsection{Bound on the energy}
\label{sect:energyproof}
Using the transfer operator representation, we can now prove the estimate on the
energy term. In the following we abbreviate for $m\in\N$: 
\begin{align}
\Psi_\links^m:=  ((\K^-)^m)^* \Psi_\links , \quad
\Psi_\rechts^m:=(\K^+)^m\Psi_\rechts.
 \end{align}
We have the following result.
\begin{lemma}
\label{lemma-energy-in-terms-of-operator}
The energy term $\Egamma$ defined in (\ref{Egamma}) can be written as
\begin{align}
\label{eq:represent Egamma}
\Egamma= 
\frac12\sum_{n=0}^{l-1}
\frac{\sk{ \Psi^\ul_{\links},  \mathcal{K}^n\  \tilde{\mathcal{K}}_\alpha 
\  \mathcal{K}^{l-1-n}   \Psi^{\ol-l}_{\rechts}}}
{\sk{\Psi^{\ul}_{\links},  
\mathcal{K}^l   \Psi^{\ol-l}_{\rechts}}}.
\end{align}
\end{lemma}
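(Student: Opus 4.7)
The plan is to start from the formula
\[
\Egamma = \frac12 \sum_{n=0}^{l-1} \mathbb{E}_{\Pint}\big[\nabla t^\bb_{p_{n+1/2}} + \alpha\chi_{n+1/2}\big]
\]
already derived in Lemma \ref{upper-bound-Z-Pi-gamma}, and to rewrite each expectation as a ratio of inner products using the factorization of $e^{-H_L^{\0\ell}}$ from Lemma \ref{representation-H-interpolated}. Since the Hamiltonian splits additively into $H_\links(\bomega_{-\ul})$, a chain of middle terms $H_\mitte^-, H_\mitte, H_\mitte^+$ coupling consecutive vertical blocks through one horizontal variable, and $H_\rechts(\bomega_\ol)$, the Boltzmann weight $e^{-H_L^{\0\ell}(\vec\bomega)}$ is the product of $\Psi_\links(\bomega_{-\ul})$, $\Psi_\rechts(\bomega_\ol)$, and the exponentials $e^{-H_\mitte^\pm}$, $e^{-H_\mitte}$, one per half-integer level. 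In particular $\vec\bomega$ lies in $\Bomega_L$ (equivalently, the compatibility \eqref{compatibility-tau} holds) exactly when this product is nonzero, so the sum over trees enforced by (\ref{compatibility-tau}) is automatic inside the integrals.

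I would then integrate out the horizontal variables $\omega_{n+1/2}$ one by one, using Fubini. For $-\ul\le n\le -1$ this produces the kernel $k^-(\bomega_n,\bomega_{n+1})$, for $l\le n\le \ol-1$ it produces $k^+(\bomega_n,\bomega_{n+1})$, and for $0\le n\le l-1$ it produces $k(\bomega_n,\bomega_{n+1})$. Carrying out all these integrations in $\Zint$ yields
\[
\Zint = \sk{\Psi_\links, (\K^-)^{\ul}\, \K^l\, (\K^+)^{\ol-l}\, \Psi_\rechts},
\]
and moving the left block across via the adjoint gives $\Zint = \sk{\Psi_\links^\ul, \K^l \Psi_\rechts^{\ol-l}}$ by the definition $\Psi_\links^\ul = ((\K^-)^\ul)^*\Psi_\links$ and $\Psi_\rechts^{\ol-l}=(\K^+)^{\ol-l}\Psi_\rechts$. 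For the numerator of $\mathbb{E}_{\Pint}[\nabla t^\bb_{p_{n+1/2}} + \alpha\chi_{n+1/2}]$ with a fixed $n\in\{0,\ldots,l-1\}$, the only change is that at the half-integer level $n+1/2$ the integrand carries the extra factor $\nabla t_\hor + \alpha\chi(\omega_n,\omega_{n+1/2},\omega_{n+1})$; integrating over $\omega_{n+1/2}$ at that single position therefore produces $\tilde k_\alpha(\bomega_n,\bomega_{n+1})$ instead of $k$, while all other positions are unchanged. Thus the numerator equals
\[
\sk{\Psi_\links, (\K^-)^{\ul}\, \K^n\, \tilde\K_\alpha\, \K^{l-1-n}\, (\K^+)^{\ol-l}\, \Psi_\rechts}
= \sk{\Psi_\links^\ul, \K^n \tilde\K_\alpha \K^{l-1-n} \Psi_\rechts^{\ol-l}}.
\]
Dividing by $\Zint$, summing over $n=0,\ldots,l-1$, and inserting the prefactor $\frac12$ produces the claimed representation \eqref{eq:represent Egamma}.

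The only subtleties are bookkeeping: verifying that the integrals factor cleanly (which uses the product structure from Lemma \ref{representation-H-interpolated} and the fact that each horizontal Hamiltonian $H_\mitte^\pm, H_\mitte$ depends only on the two adjacent vertical blocks), and confirming the Fubini manipulations are legitimate, which follows from the Gaussian tail bounds established in the proof of Lemma \ref{lemmaHS} together with the boundedness of $\chi$ and the linearity of $\nabla t_\hor$ in $\omega_\hor$. There is no conceptual obstacle; the computation is essentially the standard transfer-matrix identity adapted to our boundary functions $\Psi_\links$, $\Psi_\rechts$ and to the insertion of the modified kernel $\tilde\K_\alpha$ at the position of the observable.
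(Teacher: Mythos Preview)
Your proposal is correct and follows precisely the approach indicated in the paper, which simply states that ``using Lemma \ref{representation-H-interpolated}, this is just a rewriting of the integral in \eqref{Egamma} in terms of transfer operators.'' You have spelled out the details the paper leaves implicit---the factorization of $e^{-H_L^{\0\ell}}$, the identification of $\Zint$ with the denominator via integration of the horizontal variables, and the appearance of $\tilde k_\alpha$ at the single level where the observable sits---and your remarks on Fubini and the role of \eqref{compatibility-tau} are appropriate.
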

\begin{proof}{}
Using Lemma \ref{representation-H-interpolated}, 
this is just a rewriting of the integral 
in \eqref{Egamma} in terms of transfer operators. 
\end{proof}

We will prove below that  each term in this sum can be written as a leading term 
independent of $n,$ $l$ and $L$ plus a rest that is summable over $n$ and 
uniformly bounded in $l$ and $L$. The key estimate is proved in the following lemma.

\begin{lemma}
\label{lemma: fin volume to inf volume}
For any $m,n,m',n'\in\N$, we have 
\begin{align}
\label{claim transfer}
\frac{ \sk{\Psi^m_{\links}, \K^n\tilde{\K}_\alpha\K^{n'}\Psi^{m'}_{\rechts}}}{
\sk{\Psi^m_{\links}, \K^{n+n'+1}\Psi^{m'}_{\rechts}}}= 
\frac{ \sk{\Phi_\links,  \tilde{\K}_\alpha\Phi_\rechts}}
{ \sk{\Phi_\links, \K\Phi_\rechts}}
+ R_{m,n,m',n'}(\alpha) 
\end{align}
with a rest term $R_{m,n,m',n'}(\alpha)$ that fulfills
\begin{align}
\label{remainder-transfer}
\sup_{m,n,m',n'\in\N}\frac{|R_{m,n,m',n'}(\alpha)|}{a^{\min\{n,n'\}}}
<\infty,
\end{align}
where $a\in(0,1)$ is taken from Lemma \ref{lemma:spectral},
and $\sk{\Phi_\links, \K\Phi_\rechts}=\lambda$ by construction.
\end{lemma}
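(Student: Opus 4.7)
The plan is to combine the spectral expansion $\K^n=\lambda^n P+E_n$ with $\|E_n\|=O(a^n\lambda^n)$ from Lemma~\ref{lemma:spectral} with the rank-one identity $P\Psi=\sk{\Phi_\links,\Psi}\Phi_\rechts$. Expanding
\begin{equation*}
\K^n\tilde\K_\alpha\K^{n'}=\lambda^{n+n'}P\tilde\K_\alpha P+\lambda^n P\tilde\K_\alpha E_{n'}+\lambda^{n'}E_n\tilde\K_\alpha P+E_n\tilde\K_\alpha E_{n'}
\end{equation*}
and using $P\tilde\K_\alpha P=\sk{\Phi_\links,\tilde\K_\alpha\Phi_\rechts}\,P$, the numerator of the left-hand side of \eqref{claim transfer} equals
\begin{equation*}
\lambda^{n+n'}\sk{\Phi_\links,\tilde\K_\alpha\Phi_\rechts}\,q(m,m')+\mathcal E_N,\qquad q(m,m'):=\sk{\Psi^m_\links,\Phi_\rechts}\sk{\Phi_\links,\Psi^{m'}_\rechts},
\end{equation*}
with $|\mathcal E_N|\le C\,a^{\min\{n,n'\}}\lambda^{n+n'}\|\Psi^m_\links\|\,\|\Psi^{m'}_\rechts\|$ by the triangle inequality (using $a^n+a^{n'}\le 2a^{\min\{n,n'\}}$ and boundedness of $\tilde\K_\alpha$ from Lemma~\ref{lemmaHS}). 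Similarly the denominator equals $\lambda^{n+n'+1}q(m,m')+\mathcal E_D$ with $|\mathcal E_D|\le C'\,a^{n+n'+1}\lambda^{n+n'+1}\|\Psi^m_\links\|\,\|\Psi^{m'}_\rechts\|$. Dividing principal parts yields $\sk{\Phi_\links,\tilde\K_\alpha\Phi_\rechts}/\lambda$, which by $\K\Phi_\rechts=\lambda\Phi_\rechts$ and the normalization $\sk{\Phi_\links,\Phi_\rechts}=1$ coincides with the claimed leading term $\sk{\Phi_\links,\tilde\K_\alpha\Phi_\rechts}/\sk{\Phi_\links,\K\Phi_\rechts}$.

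To turn this into the uniform remainder bound \eqref{remainder-transfer} I need matching upper bounds on $\|\Psi^m_\links\|,\|\Psi^{m'}_\rechts\|$ and a lower bound on $q(m,m')$ of the correct order in $m,m'$. Applying Lemma~\ref{lemma:spectral} to the operators $\K^\pm$ yields $\|\Psi^m_\links\|=O((\lambda^-)^m)$, $\|\Psi^{m'}_\rechts\|=O((\lambda^+)^{m'})$, together with the convergences
\begin{equation*}
\frac{\sk{\Psi^m_\links,\Phi_\rechts}}{(\lambda^-)^m}\longrightarrow\sk{\Phi^-_\rechts,\Psi_\links}\sk{\Phi^-_\links,\Phi_\rechts},\qquad \frac{\sk{\Phi_\links,\Psi^{m'}_\rechts}}{(\lambda^+)^{m'}}\longrightarrow\sk{\Phi_\links,\Phi^+_\rechts}\sk{\Phi^+_\links,\Psi_\rechts}.
\end{equation*}
Both limits are strictly positive because $\Phi_\links,\Phi_\rechts,\Phi^\pm_\links,\Phi^\pm_\rechts$ are a.e.\ strictly positive by Perron--Frobenius--Jentzsch and $\Psi_\links,\Psi_\rechts$ are non-negative and non-zero (Lemma~\ref{lemmaHS}). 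The pre-limit quantities $\sk{\Psi^m_\links,\Phi_\rechts}$ are themselves strictly positive for every finite $m$, since $\Psi^m_\links\ge 0$ follows from non-negativity of the kernel of $(\K^-)^*$ and non-triviality is inherited from $\Psi_\links$ via $\tau_\bb\fol\tau_\bb$. Combined with convergence to a positive limit, this promotes the asymptotics to uniform two-sided bounds $\sk{\Psi^m_\links,\Phi_\rechts}\asymp(\lambda^-)^m$ and $\sk{\Phi_\links,\Psi^{m'}_\rechts}\asymp(\lambda^+)^{m'}$, hence $q(m,m')\asymp(\lambda^-)^m(\lambda^+)^{m'}$ uniformly in $m,m'$.

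With these inputs, the denominator is bounded below by a positive constant times $\lambda^{n+n'+1}(\lambda^-)^m(\lambda^+)^{m'}$ once $n+n'$ is large enough that $\mathcal E_D$ is dominated by its principal part (and by positivity plus compactness for the finitely many smaller values). Writing
\begin{equation*}
R_{m,n,m',n'}(\alpha)=\frac{\lambda\,\mathcal E_N-\sk{\Phi_\links,\tilde\K_\alpha\Phi_\rechts}\,\mathcal E_D}{\lambda\,\sk{\Psi^m_\links,\K^{n+n'+1}\Psi^{m'}_\rechts}},
\end{equation*}
the $(m,m')$-growth factors cancel between the bounds on the two numerator errors and the denominator lower bound, leaving $|R_{m,n,m',n'}(\alpha)|=O(a^{\min\{n,n'\}})$ uniformly in all four indices. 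The main obstacle is exactly this uniformity in $m,m'$: one must prevent $q(m,m')$ from degenerating when $m$ or $m'$ is small, which is secured by the strict positivity of the Perron--Frobenius--Jentzsch eigenfunctions of \emph{both} $\K$ and $\K^\pm$, rather than just the spectral-gap asymptotics themselves.
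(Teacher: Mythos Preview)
Your proof is correct and follows essentially the same route as the paper's: both decompose $\K^n=\lambda^nP+(\text{remainder})$, identify the leading term via the rank-one structure of $P$, and secure uniformity in $m,m'$ by applying Perron--Frobenius--Jentzsch to $\K^\pm$ to control $\Psi^m_\links,\Psi^{m'}_\rechts$. The only cosmetic difference is that the paper normalizes to unit vectors $\hat\Psi^m_{\links/\rechts}$ at the outset, whereas you carry the growth factors $(\lambda^-)^{m},(\lambda^+)^{m'}$ explicitly; your handling of the denominator lower bound for small $n+n'$ (``positivity plus compactness'') is terser than the paper's two-step argument but captures the same idea.
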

\begin{proof}
Since $\Psi_\links(\omega,\tau_\bb)>0$ for any $\omega \in \Omega_{\vertical }$
and $k^-((\omega,\tau_\bb),(\omega',\tau_\bb))>0$ for any $\omega,\omega'$,
we have  $\|\Psi_\links^m\|>0$ for any $m\geq 0$.
Therefore  the normalized quantities
$\hat\Psi^m_\links=\Psi^m_\links/\|\Psi^m_\links\|$ and 
$\hat\Phi^-_\links=\Phi^-_\links/\|\Phi^-_\links\|$
are well defined. Replacing $\links $ by $\rechts $
and $k^{-}$ by $k^{+}$ we also find  $\|\Psi_\rechts^m\|>0$ for any $m\geq 0$,
so 
$\hat\Psi_\rechts^m=\Psi_\rechts^m/\|\Psi_\rechts^m\|$ and
$\hat\Phi^+_\rechts=\Phi^+_\rechts/\|\Phi^+_\rechts\|$ are well defined too. 
We will work with
the normalized operators $\hat\K=\lambda^{-1}\K$ and 
$\hat\K_\alpha=\lambda^{-1}\tilde{\K}_\alpha$. Then
%We rewrite Claim \eqref{claim transfer} as follows:
\begin{align}
\label{claim transfer2}
\frac{ \sk{\Psi^m_{\links}, \K^n\tilde\K_\alpha\K^{n'}
\Psi^{m'}_{\rechts}}}{
\sk{\Psi^m_{\links}, \K^{n+n'+1}\Psi^{m'}_{\rechts}}}=
\frac{ \sk{\hat\Psi^m_{\links}, \hat\K^n\hat\K_\alpha\hat\K^{n'}
\hat\Psi^{m'}_{\rechts}}}{
\sk{\hat\Psi^m_{\links}, \hat\K^{n+n'+1}\hat\Psi^{m'}_{\rechts}}}
%= 
%\sk{\Phi_\links,  \hat\K_\alpha\Phi_\rechts}
%+ R_{m,n,m',n'}(\alpha) 
\end{align}
Abbreviating $P^c=\id-P$,
we split the numerator in (\ref{claim transfer2}) in four pieces:
\begin{align}
\label{split sk}
&\sk{\hat\Psi^m_{\links}, \hat\K^n\hat\K_\alpha\hat\K^{n'}
\hat\Psi^{m'}_{\rechts}}
=
\sk{\hat\Psi^m_{\links}, P\hat\K^n\hat\K_\alpha\hat\K^{n'}P
\hat\Psi^{m'}_{\rechts}}
+\sk{\hat\Psi^m_{\links}, P^c\hat\K^n\hat{\K}_\alpha\hat\K^{n'}P\hat\Psi^{m'}_{\rechts}}
\nonumber\\
&\qquad \qquad 
+\sk{\hat\Psi^m_{\links}, P\hat\K^n\hat\K_\alpha\hat\K^{n'}P^c\hat\Psi^{m'}_{\rechts}}
+\sk{\hat\Psi^m_{\links}, P^c\hat\K^n\hat\K_\alpha\hat\K^{n'}P^c\hat\Psi^{m'}_{\rechts}}.
\end{align}
Using $P\hat\K^n=P$ and $\hat\K^{n'}P=P$,
the first piece in the sum equals
\begin{align}
&\sk{\hat\Psi^m_{\links}, P\hat\K_\alpha P\hat\Psi^{m'}_{\rechts}}
=\sk{\hat\Psi^m_{\links},\Phi_\rechts}
\sk{\Phi_\links,\hat\K_\alpha\Phi_\rechts}\sk{\Phi_{\links},\hat\Psi^{m'}_\rechts}.
\end{align}
In the remainder of this proof, the notation ``$b_n=O(a^n)$'' means that 
there is a constant $c<\infty$ such that $\sup_{n\in\N} |b_n/a^n| \le c$. 
In particular, it implies that $b_n$ is finite for every $n$. 
Using (\ref{perron-frobenius}), the second term in the sum
in (\ref{split sk}) is bounded by
\begin{align}
&\left|\sk{\hat\Psi^m_{\links}, P^c\hat\K^n\hat\K_\alpha\hat\K^{n'}P\hat\Psi^{m'}_{\rechts}}\right|
\le 
\|P^c\hat\K^n\|\|\hat\K_\alpha\|\|P\|
= O(a^n)
\end{align}
where the constant in $O(a^n)$ may depend on $\alpha$, but not on 
$m,m',n,n'$. The third and fourth terms fulfill a similar bound with 
$O(a^{n'})$ and $O(a^{n+n'})$, respectively, instead of $O(a^n)$. Then
the numerator in (\ref{claim transfer2}) can be written as
\begin{align}
\label{bound-numerator}
&\sk{\hat\Psi^m_{\links}, \hat\K^n\hat\K_\alpha\hat\K^{n'}\hat\Psi^{m'}_{\rechts}}
=\sk{\hat\Psi^m_{\links},\Phi_\rechts}
\sk{\Phi_\links,\hat\K_\alpha\Phi_\rechts}\sk{\Phi_{\links},\hat\Psi^{m'}_\rechts}+
O(a^{\min\{n,n'\}}).
\end{align}
Now, we claim
\begin{align}
\label{term-reference}
(a)\ \inf_{l,m,m'\in\N}
\sk{\hat\Psi^m_{\links},\hat\K^l   \hat\Psi^{m'}_{\rechts}}>0, \qquad  (b)\  
\inf_{m,m'\in\N}\sk{\hat\Psi^m_{\links},\Phi_\rechts}
\sk{\Phi_\links,\hat\Psi^{m'}_\rechts}>0.
\end{align}
Assuming this is true and combining the estimate \eqref{bound-numerator} 
for the numerator with the fact \eqref{term-reference}$(a)$
that the denominator is uniformly bounded away from 0, the right-hand 
side of \eqref{claim transfer2} can be written as 
\begin{align}
& \sk{\Phi_\links,\hat\K_\alpha\Phi_\rechts}\frac{\sk{\hat\Psi^m_{\links},\Phi_\rechts}
\sk{\Phi_{\links},\hat\Psi^{m'}_\rechts}}
{\sk{\hat\Psi^m_{\links}, \hat\K^{n+n'+1}\hat\Psi^{m'}_{\rechts}}}
+O(a^{\min\{n,n'\}}).
\label{frac-phi-k-alpha} 
\end{align}
To estimate the denominator in the last expression, 
we use the following bound 
\begin{align}
&\sup_{m,m'}\left|
\sk{\hat\Psi^m_{\links},\hat\K^l   \hat\Psi^{m'}_{\rechts}}
-\sk{\hat\Psi^m_{\links},\Phi_\rechts}\sk{\Phi_\links, \hat\Psi^{m'}_{\rechts}}
\right|
\cr&
\qquad =\sup_{m,m'}
\left|
\sk{\hat\Psi^m_{\links},(\hat\K^l-P)   \hat\Psi^{m'}_{\rechts}}
\right|
\le \|\hat\K^l-P\|   
\le O(a^l)
\label{leading term}
\end{align}
with $l=n+n'+1$. Note that the proof of this estimate does not use \eqref{term-reference}. 
The fraction in \eqref{frac-phi-k-alpha} is bounded 
from above using \eqref{term-reference}$(a)$ and the fact that 
$\|\hat\Psi_\links^m\|=1=\|\hat\Psi_\rechts^{m'}\|$. 
Moreover, this fraction is 
bounded from below by a positive constant using 
\eqref{term-reference}$(b)$ and \eqref{leading term}. 
These estimates hold uniformly in $m,m',n,n'$. In particular
by \eqref{leading term} 
the reciprocal is estimated as follows
\begin{align}
\sup_{\substack{n,n':\\ n+n'+1=l}} \sup_{m,m'}
\left| \frac{\sk{\hat\Psi^m_{\links}, \hat\K^{n+n'+1}\hat\Psi^{m'}_{\rechts}}}
{\sk{\hat\Psi^m_{\links},\Phi_\rechts}
\sk{\Phi_{\links},\hat\Psi^{m'}_\rechts}} -1 \right|
=O(a^{l}). 
\end{align}
These estimates give (\ref{claim transfer}) 
with the bound \eqref{remainder-transfer}.
%%%%%%%%%%%%%%%%%%%%%%%%%%%%%%%%%%%%%%%%%%%%%%%%%
To complete the proof of the lemma we now prove claim \eqref{term-reference}.
Recall that $k((\omega,\tau_\bb),(\omega',\tau_\bb))>0$,
$\hat\Psi_\links^m(\omega,\tau_\bb)>0$, and
$\hat\Psi_\rechts^{m'}(\omega',\tau_\bb)>0$ for all $\omega,\omega'$ and
$m,m'\in\N$.
It follows 
\begin{align}
\label{pointwise-positive}
\sk{\hat\Psi^m_{\links},\hat\K^l   \hat\Psi^{m'}_{\rechts}}>0  \qquad  \text{ and }
\qquad \sk{\hat\Psi^m_{\links},\Phi_\rechts}\sk{\Phi_\links,\hat\Psi^{m'}_\rechts}>0
\end{align}
for all $l,m,m'\in\N$. Similarly, since $ \hat\Phi_\links^{-}>0$ and  $ \hat\Phi_\rechts^{+}>0$,
for all $l,m,m'\in\N$ we have
 \begin{align}
\label{pointwise-positive2}
\sk{\hat\Psi^m_{\links},\hat\K^l   \hat\Phi^{+}_{\rechts}}>0  \qquad  \text{ and }
\qquad \sk{\hat\Phi^{-}_{\links},\hat\K^l   \hat\Psi^{m'}_{\rechts}}>0. 
\end{align}
We apply the Perron-Frobenius-Jentzsch theory to
$\hat\K^\pm$. More specifically, we observe first
$(P^-)^*\hat\Psi_\links=c^-\hat\Phi_\links^-$
and $P^+\hat\Psi_\rechts=c^+\hat\Phi_\rechts^+$ where  $c^-,c^+>0$
since  $\sk{\hat\Psi_\links,\hat\Phi_\rechts^-}>0$ and
 $\sk{\hat\Phi_\links^+,\hat\Psi_\rechts}>0$.
Then, using \eqref{perron-frobenius2} we get 
\begin{align}
\label{limit-psim}
\hat\Psi^m_{\links}\stackrel{m\to\infty}{\longrightarrow}
\hat\Phi^-_\links
\quad\text{and}\quad
\hat\Psi^m_{\rechts}\stackrel{m\to\infty}{\longrightarrow}
\hat\Phi^+_\rechts,
\end{align}
where the limits are taken with respect to $\|{\cdot}\|$.
Thus,
\begin{align}
\sk{\hat\Psi^m_{\links},\Phi_\rechts} 
\stackrel{m\to\infty}{\longrightarrow}
\sk{\hat\Phi_{\links}^-,\Phi_\rechts}>0
\text{ and } 
\sk{\Phi_\links,\hat\Psi^m_{\rechts}} 
\stackrel{m\to\infty}{\longrightarrow}
\sk{\Phi_{\links},\hat\Phi_\rechts^+}>0.
\end{align}
Combining this with \eqref{pointwise-positive}, we get 
\eqref{term-reference}(b).
%$\inf_{m,m'\in\N}\sk{\hat\Psi^m_{\links},\Phi_\rechts}
%\sk{\Phi_\links,\hat\Psi^{m'}_\rechts}>0.$
%\begin{align}
%\label{bound sk away from 0}
%\inf_{m,m'\in\N}\sk{\hat\Psi^m_{\links},\Phi_\rechts}
%\sk{\Phi_\links,\hat\Psi^{m'}_\rechts}>0.
%\end{align}
Using \eqref{leading term}, 
this implies for
$l_0\in\N$ large enough 
\begin{align}
\label{infinf psiKpsi}
\inf_{l\ge\ l_0}\inf_{m,m'\in\N}
\sk{\hat\Psi^m_{\links},\hat\K^l\hat\Psi^{m'}_{\rechts}}
>0.
\end{align}
We consider a given $l<l_0$ next.
From \eqref{pointwise-positive2}, 
\eqref{limit-psim}, and $\sk{\hat\Phi^-_\links,\hat\K^l\hat\Phi_\rechts^+}>0$, 
we have
\begin{align}
\inf_{m\in\N}\sk{\hat\Psi^m_\links,\hat\K^l\hat\Phi_\rechts^+}>0
\text{ and }
\inf_{m'\in\N}\sk{\hat\Phi_\links^-,\hat\K^l\hat\Psi^{m'}_\rechts}>0.
\end{align}
Using this, \eqref{pointwise-positive}, 
and \eqref{limit-psim} again, we find for our given $l$:
$\inf_{m,m'\in\N}\sk{\hat\Psi^m_{\links},\hat\K^l\hat\Psi^{m'}_{\rechts}}>0$.
Combining this with \eqref{infinf psiKpsi}, the 
claim (\ref{term-reference})(a) follows.
\end{proof}

\noindent
\begin{proof}[Proof of Theorem \ref{thm: energy contribution}]
Combining Lemmas \ref{lemma-energy-in-terms-of-operator} 
and \ref{lemma: fin volume to inf volume} above,
 the energy term $\Egamma$ defined in (\ref{Egamma}) can be written as
\begin{align}\label{egamma1}
\Egamma= &
\frac12 l\frac{\sk{\Phi_\links,  \tilde{\K}_\alpha\Phi_\rechts}}
{\sk{\Phi_\links,  \K\Phi_\rechts}}+ 
\frac12
\sum_{n=0}^{l-1}R_{\ul,n,\ol-l,l-1-n}(\alpha)
%\cdrei(G_0,\beta,L,l,\alpha )
\end{align}
where,  for any given $\alpha$, the rest 
\begin{equation}\label{egamma2}
\frac12
\sum_{n=0}^{l-1}R_{\ul,n,\ol-l,l-1-n}(\alpha)=\cdrei(G_0,\beta,L,l,\alpha )
\end{equation}
is bounded uniformly in $L$ and $l$ since  $0\le a<1$. Now we claim 
there is a constant $\cvier=\cvier(G_0,\beta)>0$
such that for all $\alpha\in\R$ one has 
\begin{equation}\label{thm: symmetric and positive}
 \sk{\Phi_\links,  \tilde{\K}_\alpha \Phi_\rechts } = 
2\alpha\lambda  \cvier.
\end{equation}
To prove this
we split $\tilde\K_\alpha=\tilde\K_0 + (\tilde\K_\alpha-\tilde\K_0)$.
We claim
%{\bf Contribution of $\tilde\K_0$. }  
%By a symmetry argument, we  prove   that
\begin{align}
\label{claim symmetry}
\sk{\Phi_\links,  \tilde{\K}_0 \Phi_\rechts } = 0.
\end{align}
This is proved by symmetry. Recall from \eqref{bomega-refl} that
we set $\bomega^\refl=(\nabla t, y,\tau^\refl)\in\Bomega_\vertical$
for $\bomega=(\nabla t, y,\tau)\in\Bomega_\vertical$. 
From (\ref{symmetry-Hmitte}) it follows for 
$\bomega,\bomega'\in\Bomega_\vertical$
\begin{align}
k(\bomega',\bomega)=k(\bomega^\refl,\bomega'^\refl)
\quad\text{ and }\quad  
\tilde{k}_0(\bomega',\bomega)=-\tilde{k}_0(\bomega^\refl,\bomega'^\refl).
\end{align}
Since $k$ is real-valued, the first equation implies that
$(\bomega,\bomega')\mapsto k(\bomega^\refl,\bomega'^\refl)$
is the integral kernel of the adjoint  $\K^*$ of $\K$.
Consider the ``reflected'' eigenfunctions
$\Phi_\rechts^\refl,\Phi_\links^\refl:\Bomega_\vertical\to(0,\infty)$,
$\Phi_\rechts^\refl(\bomega)=\Phi_\rechts(\bomega^\refl)$,
$\Phi_\links^\refl(\bomega)=\Phi_\links(\bomega^\refl)$.
Since the reflection $\refl$
leaves the reference measure $d\bomega$ invariant and $\K\Phi_\rechts=\lambda\Phi_\rechts$,
 we get  $\K^*\Phi_\rechts^\refl=\lambda\Phi_\rechts^\refl$. But
the eigenspace $E_\lambda(\K^*)$ is spanned by $\Phi_\links$, then 
$\Phi_\rechts^\refl=c\Phi_\links$ for some constant $c>0$, and therefore
 $\Phi_\links^\refl=c^{-1}\Phi_\rechts$.
We conclude
\begin{align}
&\sk{\Phi_\links,  \tilde{\K}_0 \Phi_\rechts } =
\int_{\Bomega_\vertical}\int_{\Bomega_\vertical}
\Phi_\links(\bomega^\refl)\tilde{k}_0(\bomega^\refl,\bomega'^\refl)
\Phi_\rechts(\bomega'^\refl)\,d\bomega\,d\bomega'
\cr
&\qquad =
\sk{\Phi_\rechts^\refl,  -\tilde{\K}_0 \Phi_\links^\refl }
=-\sk{\Phi_\links,  \tilde{\K}_0 \Phi_\rechts }.
\end{align}
This proves claim \eqref{claim symmetry}. The 
contribution of $\tilde\K_\alpha-\tilde\K_0$ is given by
\begin{align}
\label{eq: def c1}
\sk{\Phi_\links,(\tilde{\K}_\alpha-\tilde{\K}_0)\Phi_\rechts }
=
\alpha \int_{\Bomega_\mitte}
\chi(\bomega,\omega_\hor,\bomega')
e^{-H_{\mitte} (\bomega,\omega_\hor,\bomega')} 
d\bomega\,d\omega_\hor\,d\bomega'=:2\alpha \lambda \cvier.
\end{align}
Note that $\cvier>0$, because the integrand in
\eqref{eq: def c1} is nonnegative everywhere and positive 
on a set of positive measure.
This proves claim \eqref{thm: symmetric and positive}.
Finally,  combining \eqref{egamma1}, \eqref{egamma2} and \eqref{thm: symmetric and positive} 
the proof of the theorem follows.
\end{proof}

%%%%%%%%%%%%%%%%%%%%%%%%%%%%%%%%%%%%%%%%%%%%%%%%%%%%%%
%%%%%%%%%%%%%%%%%%%%%%%%%%%%%%%%%%%%%%%%%%%%%%%%%%%%%%
\section{Putting pieces together}\label{sect:end}
%%%%%%%%%%%%%%%%%%%%%%%%%%%%%%%%%%%%%%%%%%%%%%%%%%%%%%
%%%%%%%%%%%%%%%%%%%%%%%%%%%%%%%%%%%%%%%%%%%%%%%%%%%%%%

\noindent\begin{proof}[Proof of Theorem \ref{th1}]
We prove the estimate for $0<l\le\ol$. The case 
$-\ul \leq l<0$ follows by reflection symmetry.
Take any $\eta>0$ and $\alpha\in\R$ with $|\alpha|\le\cneun\eta$. 
Using \eqref{expectation-exp-tl-t0}, \eqref{def-Z}, and 
Lemma \ref{upper-bound-Z-Pi-gamma}
\begin{align}%\label{}
\ln \mathbb{E}_{ \mu_L^{\0}}
\left[e^{\frac{t_{\ell}-t_{\0 } }{2}} \right] =
\ln \mathbb{E}_{ \mu_L^{\grad,\0}} 
\left[e^{\frac{t_{\ell}-t_{\0 } }{2}} \right] 
=\ln Z_{\0 \ell}^L 
\leq  \Egamma + \Sgamma{\alpha} .
\end{align}
Inserting the expression for the internal energy $\Egamma$ from 
Theorem \ref{thm: energy contribution} and 
the estimate for the entropy term $\Sgamma{\alpha}$ from Theorem \ref{entropyth}
we conclude
\begin{align}
\ln \mathbb{E}_{ \mu^{\grad,\0}_L}
\left[e^{\frac{t_{\ell}-t_{\0 } }{2}} \right]  \le 
\cfuenf \alpha^2l + \alpha l \cvier+ \cdrei^{\max}(\alpha ) 
= \alpha (\cfuenf\alpha + \cvier )\,l + \cdrei^{\max}(\alpha ) .
\end{align}
Taking $\alpha<0$ such that $|\alpha|<\min\{\cvier/\cfuenf,\cneun\eta\}$, 
Claim \eqref{claim-main-thm} follows with $\czehn=e^{\cdrei^{\max}(\alpha )}$
and $\celf=- \alpha (\cfuenf\alpha + \cvier )>0$.

To prove the second part of the theorem, we notice that, 
by Theorem \ref{th:muzero},  
the variables $(t_{\0},s_\0)$ and $(\nabla t_{\bb },y_\bb)$ are stochastically 
independent and the distribution of $(t_{\0},s_\0)$ is independent of $L$. 
Moreover using \eqref{repr-nabla-t} 
with $i=\0$ we can reconstruct any $t_{j}$ knowing only $t_{\0 }$ and a 
{\it finite} number
of variables $\nabla t^{\bb }_{e}$, independent of $L$, for $L$ large 
enough. A similar statement holds for the $s_j$ using only a finite 
number of variables $\nabla t^\bb_e$ and $y^\bb_e$. 
Then any local observable of $(t_{j},s_j)_{j\in V}$ 
(i.e.\ depending only on a finite number of lattice sites) 
can be written as a local observable of  $(t_{\0 },s_\0,\nabla t_{\bb },y_\bb)$ 
uniformly in $L$ for $L$ large enough. Using the definitions of the previous 
section, in analogy to Lemma \ref{lemma-energy-in-terms-of-operator}, 
the average of 
any bounded local observable can be written as a ratio of scalar products
\begin{align}
\label{expectation-O}
\mathbb{E}_{\mu_L^{\0}}[\mathcal{O}]=  
\frac{\sk{ \hat\Psi^{\ul-j_{1}}_{\links},  \mathcal{K}_{\mathcal{O}}   
\  \hat\Psi^{\ol-j_{2}}_{\rechts}}}{\sk{\hat\Psi^{\ul-j_{1}}_{\links},  
\left(\mathcal{K}^{-}\right)^{j_{1}}
\left(\mathcal{K}^{+}\right)^{j_{2}}    \hat\Psi^{\ol-j_{2}}_{\rechts}}},
\end{align}
where the operator $ \mathcal{K}_{\mathcal{O}}$ depends on the observable 
and the level indices
$j_{1},j_2\geq 0$ need to be large enough depending on the choice of 
$ \mathcal{K}_{\mathcal{O}}$.
Since $ \hat\Psi^m_{\links/\rechts }\rightarrow\hat\Phi^{-/+}_{\links/\rechts }$ as 
$m\to\infty$ by \eqref{limit-psim}, the limit of \eqref{expectation-O}
as $\ul,\ol\to\infty$ is well defined.
If for $L$ large enough one replaces $j_1$ by $j_1+n_1$ and $j_2$ by 
$j_2+n_2$ and $\mathcal{K}_{\mathcal{O}}$ by 
$\mathcal{K}_-^{n_1}\mathcal{K}_{\mathcal{O}}\mathcal{K}_+^{n_2}$, the 
expression \eqref{expectation-O} remains unchanged. 
This holds also in the limit as $\ul,\ol\to\infty$. 
Hence, the consistency conditions in Kolmogorov's extension theorem ensure 
the limiting measure $\mu^\0_\infty$ exists. This completes the proof.
\end{proof}

\smallskip\noindent\begin{proof}[Proof of Corollary \ref{cor-vrjp}]
Let $\tilde{Y}= (\tilde{Y}_{n})_{n\in \mathbb{N}_{0}}$ be the discrete 
time process associated to VRJP on $\G_\rho$. 
Let $\G_{\rho,L}$ denote the graph obtained from $\G_L$ by adding the 
additional vertex $\rho$ connected by an edge to $\0$. 
For any fixed time $T$, 
the process $\tilde{Y}$ can jump at most a distance of $T$ away from 
its starting point. Consequently, the law of $(\tilde{Y}_n)_{n=0,\ldots,T}$ 
agrees with the law of the discrete time process 
$(\tilde{Y}_n^L)_{n=0,\ldots,T}$ associated with the VRJP on 
$\G_{\rho,L}$ for all $L=(-\ul,\ol)$ with 
$\ul,\ol>T$. Thus, by Theorem 2 and the remarks in Sect.\  6 of 
\cite{sabot-tarres2012}, the process $(\tilde{Y}_n)_{n=0,\ldots,T}$ 
is a mixture of reversible Markov chains with mixing measure given by
random weights on the edges $W_{ij}(t,s)=\beta_{ij}e^{t_i+t_j}$ with 
$t_i,t_j$ distributed
according to $\mu^\0_L$ with $L$ large enough. Note that 
the edge weights are strictly positive. In particular, 
for any finite path $\vec{v}:=(v_0=\rho,v_1,\ldots,v_T)$ in $\G_\rho$ starting 
in $\rho$ and all $L$ with $\ul,\ol>T$, one has 
\begin{align}
\label{mixture}
\p( (\tilde{Y}_n)_{n=0,\ldots,T}=\vec{v})= 
\int\p^{W (t,s)} ( (\tilde{Y}_n)_{n=0,\ldots,T}=\vec{v}) \,  d\mu_\0^L (t,s).
\end{align}
Since $\p^{W (t,s)} ( (\tilde{Y}_n)_{n=0,\ldots,T}=\vec{v})\in[0,1]$ 
is a bounded observable, Theorem \ref{th1} implies that the right-hand side
of \eqref{mixture} equals 
$\int\p^{W (t,s)} ( (\tilde{Y}_n)_{n=0,\ldots,T}=\vec{v}) \,  d\mu^\0_\infty(t,s)$.
The events $\{ (\tilde{Y}_n)_{n=0,\ldots,T}=\vec{v} \}$
together with the empty set are closed under intersections 
and generate the whole space. This shows that $\tilde Y$ is a mixture of 
reversible Markov chains with mixing measure $\mu^\0_\infty$. Since 
all edge weights are strictly positive,
one has a mixture of irreducible Markov chains. 

To prove positive recurrence,  
let $x=(|x|=l,v)\in V$ be an arbitrary site and set $\ell:=(l,p)$.
By Theorem \ref{th:muzero}, $\nabla t_\bb$ and $t_\0$ are independent. 
Using this fact and the Cauchy-Schwarz inequality, we get 
for all $L$ with $-\ul\le l\le\ol$, 
\begin{equation}\label{7.5}
\mathbb{E}_{\mu_\0^L}\left[e^{\frac{t_x}{4}}\right]
= \mathbb{E}_{\mu_\0^L}\left[
e^{\frac{t_x-t_\ell}{4}} e^{\frac{t_\ell-t_\0}{4}} e^{\frac{t_\0}{4}}\right]
\le  \mathbb{E}_{\mu_\0^L}\left[ e^{\frac{t_x-t_\ell}{2}} \right]^{1/2}
\mathbb{E}_{\mu_\0^L}\left[ e^{\frac{t_\ell-t_\0}{2}} \right]^{1/2}
\mathbb{E}_{\mu_\0^L}\left[ e^{\frac{t_\0}{4}}\right].
\end{equation}
It follows from Theorem \ref{th:muzero} that 
$\mathbb{E}_{\mu_\0^L}\left[ e^{\frac{t_\0}{4}}\right]<\infty$.
Since $x$ and $\ell$ are on the same level, 
$\gamma^{x\ell}_{\tback }\subset S_l$ and thus
\begin{equation}\label{7.6}
\mathbb{E}_{\mu_\0^L}\left[ e^{\frac{t_x-t_\ell}{2}} \right]\leq  
\mathbb{E}_{\mu_\0^L}
\Bigg [ \prod_{e\in \gamma^{x\ell}_{\tback }}e^{\frac{|\nabla t^{\bb }_{e}|}{2}} \Bigg ]
\leq C_{\beta }^{|S|} 
\mathbb{E}_{\mu_\0^L}
\left[ 
e^{\sum_{e\in \gamma^{x\ell}_{\tback }}
\frac{\beta_{e} }{2} ( \cosh \nabla t^{\bb }_{e} -1 +\frac{y_e^2}{2})} \right] 
\leq (2C_{\beta})^{|S|}
\end{equation}
where $C_{\beta }$ is a constant that depends only on $(\beta_{e})_{e\in S}$ and
in the last inequality we used a straightforward generalization of 
\cite[Lemma 3, eq. (6.2) and (2.10)]{disertori-spencer-zirnbauer2010} to the case
of variable $\beta$.

Using Theorem \ref{th1}, we conclude 
\begin{align}
\mathbb{E}_{\mu_\0^L}\left[e^{\frac{t_x}{4}}\right]
\le \cvierzehn e^{-\celf\frac{|x|}{2}}
\end{align}
with a constant $\cvierzehn>0$. Then, by monotone convergence, we get
\begin{align}
\mathbb{E}_{\mu^\0_\infty}\left[e^{\frac{t_x}{4}}\right]
=\lim_{M\to\infty}\mathbb{E}_{\mu^\0_\infty}\left[
e^{\frac{t_x}{4}}\one_{\{|t_x|\le M\}}\right]
=\lim_{M\to\infty}\lim_{L\to\infty}
\mathbb{E}_{\mu_\0^L}\left[ e^{\frac{t_x}{4}}\one_{\{|t_x|\le M\}}\right]
\le \cvierzehn e^{-\celf\frac{|x|}{2}}. 
\end{align}
Using the exponential Chebyshev inequality and a Borel-Cantelli argument, 
it follows that $\sum_{x\in V} e^{t_x}<\infty$ $\mu^\0_\infty$-a.s. 
Consequently, 
 $\mu^\0_\infty$-a.s.\ the edge weights are summable
\begin{align}
\sum_{i\sim j}\beta_{ij}e^{t_i+t_j}\le (\max_{ij}\beta_{ij})
\sum_{i\in V} e^{t_i} \sum_{j\in V} e^{t_j} <\infty,
\end{align}
and hence we have a mixture of positive recurrent Markov chains. 
\end{proof}

\smallskip\noindent\begin{proof}[Proof of Corollary \ref{lemma-exp-loc}]
The argument is similar to the one used in Theorem 2.1 in
\cite{merkl-rolles-asymptotic}. For
$(t,s)\in\R^{V\times V}$, let $\p_v^{W(t,s)}$ denote the distribution
of the Markovian random walk on $\G_\rho$ with weights 
$W_{ij}=W_{ij}(t,s)=\beta_{ij}e^{t_i+t_j}$ starting at $v\in V\cup\{\rho\}$.
This random walk is reversible with a reversible measure given by
\begin{align}
\pi^{W}_{i}=\sum_{j\sim i}W_{ij},\quad i\in V\cup\{\rho\}.
\end{align}
For all $n\in\N_0$ and $v\in V$, one has
\begin{align}
\pi^{W}_{\rho }\p_\rho^{W}(\tilde{Y}_n=v)
=\pi^{W}_{v}\p_v^{W}(\tilde{Y}_n=\rho).
\end{align}
Then  for any $\alpha\in(0,1)$
\begin{align}
\p_\rho^{W}(\tilde{Y}_n=v)
&=\left[ \frac{\pi^{W}_v}{\pi^{W}_\rho}\right]^\alpha
\p_v^{W}(\tilde{Y}_n=\rho)^\alpha\, 
\p_\rho^{W}(\tilde{Y}_n=v)^{1-\alpha}
\cr &\le
\left[\frac{\pi^{W}_v}{\pi^{W}_\rho}\right]^\alpha
=
\left[\sum_{i\sim v}\frac{\beta_{iv}}{\varepsilon}e^{t_i+t_v-t_\0}\right]^\alpha
\le 
\frac{\betamax^\alpha}{\varepsilon^\alpha}\sum_{i\sim v}e^{\alpha(t_i+t_v-t_\0)}.
\end{align}
Integrating over $s$ and $t$ with respect to $\mu^\0_\infty$, as 
in \eqref{mixing}, we conclude
\begin{align}
\p(\tilde{Y}_n=v)
=
\int\p_\rho^{W(t,s)} (\tilde{Y}_n=v) \,  d\mu^\0_\infty(t,s)
\le
\frac{\betamax^\alpha}{\varepsilon^\alpha}\sum_{i\sim v}
\mathbb{E}_{\mu^\0_\infty}[e^{\alpha(t_i+t_v-t_\0)}].
\end{align}
Let $\ell=(m,p)$ be the copy of the pinning point $\0$ at the level
$m=|v|$ of $v$. Using independence of $t_\0$ from the gradient variables
 (see Theorem \ref{th:muzero} and Lemma \ref{lemma3.1})  and the 
Cauchy-Schwarz inequality, we get, for any $i\sim v$, 
\begin{align}
\mathbb{E}_{\mu^\0_\infty}[e^{\alpha(t_i+t_v-t_\0)}]&=
\mathbb{E}_{\mu^\0_\infty}[e^{\alpha(t_i-t_\ell+t_v-t_\ell)}
e^{2\alpha(t_\ell-t_\0)}e^{\alpha t_\0}]
\cr&\le
\mathbb{E}_{\mu^\0_\infty}[e^{2\alpha(t_i-t_\ell+t_v-t_\ell)}]^{1/2}\ 
\mathbb{E}_{\mu^\0_\infty}[e^{4\alpha(t_\ell-t_\0)}]^{1/2}\ 
\mathbb{E}_{\mu^\0_\infty}[e^{\alpha t_\0}]
\end{align}
Now, setting $\alpha=1/8$, we can use Theorem \ref{th1}, plus  
the same arguments we used in 
\eqref{7.5} and \eqref{7.6} above.  We obtain
\[
\frac{\betamax^\alpha}{\varepsilon^\alpha}\sum_{i\sim v}
\mathbb{E}_{\mu^\0_\infty}[e^{\alpha(t_i+t_v-t_\0)}]
\ \leq \ \csechs e^{-\csieben|v|}
\]
for some positive constants $\csechs,\csieben$. This proves the first
claim.
The second claim follows with precisely the same Borel-Cantelli 
argument as in Corollary 2.2 \cite{merkl-rolles-asymptotic}.
\end{proof}

%%%%%%%%%%%%%%%%%%%%%%%%%%%%%%%%%%%%%%%%%%%%%%%%%%%%%%
%%%%%%%%%%%%%%%%%%%%%%%%%%%%%%%%%%%%%%%%%%%%%%%%%%%%%%
\begin{appendix}
%%%%%%%%%%%%%%%%%%%%%%%%%%%%%%%%%%%%%%%%%%%%%%%%%%%%%%
%%%%%%%%%%%%%%%%%%%%%%%%%%%%%%%%%%%%%%%%%%%%%%%%%%%%%%
\section{Spectral properties of transfer operators}
This appendix reviews the results from the Perron-Frobenius-Jentzsch
theory that we need. For more background on this theory, we refer to 
\cite{Schaefer74} and \cite{Zaanen1997}.

Let $(\Omega,\mathcal{A},\mu)$ be a $\sigma$-finite
measure space and $k:\Omega\times\Omega\to\R$
be a measurable function with the following properties:
\renewcommand{\labelenumi}{(\alph{enumi})}
\begin{enumerate}
\item
$
\int_\Omega\int_\Omega k(x,y)^2\,\mu(dx)\,\mu(dy)<\infty.
$
\item
$k(x,y)\ge 0$ holds for all $x,y\in\Omega$.
\item
There are $S\in \mathcal{A}$ with $\mu(S)>0$
and $\epsilon>0$ such that $k(x,y)\ge \epsilon$
holds for all $x,y\in S$.
\item
For $N\in\N$ with $N\ge 1$, 
let $k_N:\Omega\times\Omega\to\R$ denote 
the iterated integral kernel,
recursively defined by $k_1=k$ and 
$k_{N+1}(x,y)=\int_\Omega k_N(x,z)k(z,y)\,\mu(dz)$.
For some $N\in\N$ with $N\ge 1$, the kernel $k_N$ is strictly positive.
\end{enumerate}
Let $\mathcal{H}=L^2(\Omega,\mathcal{A},\mu;\C)$
and $\K,\K^{*}:\mathcal{H}\to \mathcal{H}$ %$\K^*:\mathcal{H}\to \mathcal{H}$  
 be the linear operators defined by
\[
\K f(x)=\int_\Omega k(x,y) f(y)\,\mu(dy),
% \text{ for $\mu$-a.e. }x\in\Omega
\qquad 
\K^*f(y)=\int_\Omega f(x) k(x,y) \,\mu(dx)% \text{ for $\mu$-a.e. }y\in\Omega
\]
for  $\mu$-a.e. $x\in\Omega$, resp. for  $\mu$-a.e. $y\in\Omega$, where $\K^*$
is the  adjoint operator of $\K$. $\K$ and $\K^*$ are Hilbert-Schmidt operators; see 
e.g.\ Theorem VI.23 in \cite{Reed-SimonI}. In particular, they are
compact; see e.g.\ Theorem VI.22(e) in \cite{Reed-SimonI}. 
For $z\in\C$, let
\[
E_z(\K)=\{f\in \mathcal{H}:\;\K f=zf\},\  \text{and}\ 
S_z(\K)=\{f\in \mathcal{H}:\;(\K-z\id)^nf=0\text{ for some }n\in\N\}
\]
denote the eigenspace $E_z(\K)$ and the spectral subspace
$S_z(\K)\supseteq E_z(\K)$ corresponding to $z$, respectively.
The eigenspaces $E_z(\K^*)$ and the spectral subspaces $S_z(\K^*)$
for the adjoint operator are defined in the same way.
Let $\rho(\K)=\{z\in\C:\;\K-z\id:\mathcal{H}\to \mathcal{H}\text{ is bijective}\}$
denote the resolvent set of $\K$, $\sigma(\K)=\C\setminus \rho(\K)$
be the spectrum,
and let $\lambda=r(\K):=\sup_{z\in\sigma(\K)}|z|
=\lim_{n\to\infty}\|\K^n\|^{1/n}$
%\[
%\lambda=r(\K):=\sup_{z\in\sigma(\K)}|z|
%=\lim_{n\to\infty}\|\K^n\|^{1/n}
%\]
be the spectral radius. (Equality of these two representations of
$r(\K)$ is proven in Lemma VII 3.4 in 
\cite{Dunford-SchwarzI}.)
Then the following holds:
\renewcommand{\labelenumi}{(\arabic{enumi})}
\begin{enumerate}
\item
\label{claim 1} 
$\sigma(\K)\setminus\{0\}$ consists of isolated points $z$
with $0<\dim E_z(\K)<\infty$, which can  accumulate at most at $0$.
If $\dim \mathcal{H}=\infty$, then $0\in\sigma(\K)$.
\item
\label{claim 2} 
For every $z\in\sigma(\K)\setminus\{0\}$, one has 
$\dim S_z(\K)=\dim S_{\overline{z}}(\K^*)<\infty$.
\item
\label{claim 3} 
$\sigma(\K^*)=\{\overline{z}:z\in\sigma(\K)\}=\sigma(\K)$ and
$\lambda=r(\K)=r(\K^*)>0$.
\item
\label{claim 4} 
$\lambda\in\sigma(\K)$, and $E_\lambda(\K)$ 
contains a $\mu$-a.e.\ positive function $\Phi_\rechts >0$. 
Similarly, $E_\lambda(\K^*)$ contains a $\mu$-a.e.\ positive function $\Phi_\links>0$.
\item 
\label{claim 5}
Let $N\in\N$ with $N\ge 1$,
$f,g\in \mathcal{H}$ with $g\ge 0$ ($\mu$-a.e.).
If $\K^N f=\lambda^N f+g$, it follows that $g=0$ ($\mu$-a.e.).
\item
\label{claim 6} 
$\dim E_\lambda(\K)= 1$ and $\dim E_\lambda(\K^*)= 1$.
\item
\label{claim 7} 
$S_\lambda(\K)=E_\lambda(\K)$ and $S_\lambda(\K^*)=E_\lambda(\K^*)$.
\item
\label{claim 8} 
For every $z\in\sigma(\K)$ with $z\neq\lambda$, one has $|z|<\lambda$.
Furthermore, one has\\
$\sup\{|z|:\;z\in\sigma(\K)\setminus\{\lambda\}\}<\lambda$.
\item
\label{claim 9} 
Let $\Phi_\rechts $ and $\Phi_\links$ from (\ref{claim 4}) 
be normalized such that $\langle \Phi_\links,\Phi_\rechts \rangle=1$.
Set $ P :\mathcal{H}\to \mathcal{H}$, $ P  f=\langle \Phi_\links,f\rangle \Phi_\rechts $.
Then
$\sigma(\K-\lambda P )\setminus\{0\}=\sigma(\K)\setminus\{\lambda,0\}$
and $r(\K-\lambda P )<\lambda$.
\item
\label{claim 10} 
One has
$\lambda^{-n}\K^n- P =(\lambda^{-1}\K- P )^n$ for all $n\in\N$, $n\ge 1$,
and $\|\lambda^{-n}\K^n- P \|=O(a^n)\text{ as }n\to\infty$
%\[
%\|\lambda^{-n}\K^n- P \|=O(a^{-n})\text{ as }n\to\infty
%\]
for any $a<1$ with $r(\K-\lambda P )<\lambda a$.
Note that such an $a$ exists because of (\ref{claim 9}).
\end{enumerate}
\noindent
\begin{proof}
{\it Claim (\ref{claim 1}).}  This is the content of a theorem by F. Riesz, proven e.g.\ 
as Theorem 7.1 in Chapter VII of \cite{Conway}. 

{\it Claim (\ref{claim 2}).} For any $z\in\C\setminus\{0\}$, the operator
$\K-z\id$ is a Fredholm operator with Fredholm index 
$\operatorname{ind}(\K-z\id)=\dim\operatorname{ker}(\K-z\id)-
\dim\operatorname{ker}(\K-z\id)^*=0$. This follows from the
Fredholm alternative,
see e.g.\ Proposition 3.3 in Chapter XI of \cite{Conway}. 
Then, for any $n\in\N$, 
$(\K-z\id)^n$ is a Fredholm operator with Fredholm index 
$\dim\operatorname{ker}(\K-z\id)^n-
\dim\operatorname{ker}(\K^*-\overline z\id)^n=
\operatorname{ind}(\K-z\id)^n=n\operatorname{ind}(\K-z\id)=0$
as well; see e.g.\ Theorem 3.7 in Chapter XI of \cite{Conway}.
This implies 
$\dim S_z(\K)=\dim S_{\overline{z}}(\K^*)$. To see that these dimensions
are finite, it suffices to show that $\operatorname{ker}(\K-z\id)^n=
\operatorname{ker}(\K-z\id)^{n+1}$ holds for some $n\in\N$,
since this implies $\operatorname{ker}(\K-z\id)^m=
\operatorname{ker}(\K-z\id)^{m+1}$ for all $m\ge n$.
If the inclusion $\operatorname{ker}(\K-z\id)^n\subset
\operatorname{ker}(\K-z\id)^{n+1}$ was strict for all $n\in\N$,
we could choose for every $n\in\N$ some 
$f_n\in\operatorname{ker}(\K-z\id)^{n+1}$ with $\|f_n\|=1$
and $f_n\perp\operatorname{ker}(\K-z\id)^n$.
But then $\K f_n=zf_n+(\K-z\id)f_n\in zf_n+\operatorname{ker}(\K-z\id)^n$
has at least distance $|z|$ from the space $\operatorname{ker}(\K-z\id)^n$.
Because $\K f_m\in\operatorname{ker}(\K-z\id)^n$ holds for all $m<n$,
it follows that $\|\K f_n-\K f_m\|\ge  |z|$ holds for all $m<n$.
This means that the sequence $(\K f_n)_{n\in\N}$ cannot have an accumulation
point, contradicting the fact that $\K$ is a compact operator.

{\it Claim (\ref{claim 3}).} The first equality  is contained in Theorem VI.7 in 
\cite{Reed-SimonI}. The second equality  
follows from the fact that the integral kernel $k$ is real-valued. 
The claim $r(\K)=r(\K^*)$ follows immediately from $\sigma(\K)=\sigma(\K^*)$.
To prove $r(\K)>0$, we use hypotheses (b) and (c) as follows: 
$\K\one_S\ge a \one_S$, where we abbreviate $a=\epsilon \mu(S)>0$. 
Because of $k\ge 0$, 
the operator $\K$ is positive in the sense that $f\ge g$ implies
$\K f\ge \K g$ for any $f,g\in \mathcal{H}$. Inductively, it follows
that $\K^n\one_S\ge a^n \one_S$ holds for all $n\in\N$.
This implies $\|\K^n\|\ge a^n$ for all $n$ and hence
$r(\K)=\lim_{n\to\infty}\|\K^n\|^{1/n}\ge a>0$.

{\it Claim (\ref{claim 4}).} Since $\sigma(\K)$ is nonempty and compact,
there is a $z\in\sigma(\K)$ with $|z|=r(\K)$.
By part (\ref{claim 1}), there is an eigenfunction $g\in E_z(\K)$
with $\|g\|=1$. We abbreviate $f:=|g|$. In particular, $\|f\|=\|g\|=1$.
From positivity of $\K$, it follows
that $\K f=\K|g|\ge|\K g|=|z||g|=\lambda f\ge 0$
and then $\K^{n+1}f\ge \lambda \K^nf\ge \lambda^{n+1}f\ge 0$ 
for all $n\in\N$ by iteration.
Take a sequence $(\lambda_m)_{m\in\N}$ of positive numbers $\lambda_m>\lambda$
with $\lambda_m\downarrow\lambda$ as $m\to\infty$ and
consider for the moment a fixed $m\in\N$. 
Since $\|\K^nf\|\le\|\K^n\|=(\lambda+o(1))^n$ as $n\to\infty$,
the series
\begin{align}
h_m:=(\id-\lambda_m^{-1}\K)^{-1}f
=\sum_{n=0}^\infty \lambda_m^{-n}\K^nf
\end{align}
converges. 
The facts $f\neq 0$ and $\K^n f\ge 0$ for all $n$ imply $h_m\neq 0$
and $h_m\ge 0$; hence $v_m:=h_m/\|h_m\|\ge 0$ is well-defined.
Furthermore,
\[
h_m\ge \sum_{n=0}^\infty \lambda_m^{-n}\lambda^n f
=(1-\lambda/\lambda_m)^{-1}f\ge 0
\]
implies $\|h_m\|\ge (1-\lambda/\lambda_m)^{-1}\|f\|=(1-\lambda/\lambda_m)^{-1}
\stackrel{m\to\infty}{\longrightarrow} \infty.$
%\[
%\|h_m\|\ge (1-\lambda/\lambda_m)^{-1}\|f\|=(1-\lambda/\lambda_m)^{-1}
%\stackrel{m\to\infty}{\longrightarrow} \infty.
%\]
From
$\K h_m-\lambda_mh_m=-\lambda_m f$
we conclude
$\|\K v_m-\lambda_mv_m\|=\lambda_m \|f\|/\|h_m\|\to 0$ as $m\to\infty$.
Because of $\|\lambda_mv_m-\lambda v_m\|=\lambda_m-\lambda\to 0$
it follows also $\|\K v_m-\lambda v_m\|\to 0$ as $m\to\infty$.
In particular, $\lim_{m\to\infty}\|\K v_m\|=\lim_{m\to\infty}\|\lambda v_m\|
=\lambda$.
By compactness of the operator $\K$ and $\|v_m\|=1$,
some subsequence $(\K v_{m_l})_{l\in\N}$ converges to
some $\Phi_\rechts \in \mathcal{H}$. 
By the positivity  of $\K$ and $v_m\ge 0$ we know $\K v_m\ge 0$ for all $m$,
hence $\Phi_\rechts \ge 0$ ($\mu$-a.e.). 
Furthermore, $\|\Phi_\rechts \|=\lim_{l\to\infty}\|\K v_{m_l}\|=\lambda$
and
\[
\|\K \Phi_\rechts -\lambda \Phi_\rechts \|=\lim_{l\to\infty}\|\K^2v_{m_l}-\lambda \K v_{m_l}\|
\le\lim_{l\to\infty}\|\K\|\|\K v_{m_l}-\lambda v_{m_l}\|=0.
\]
This means that $0\neq \Phi_\rechts \in E_\lambda(\K)$, hence
$\lambda\in \sigma(\K)$.
By hypothesis (d), for some $N\in\N$
the integral kernel $k_N$ of $\K^N$ takes only positive values, then
the facts $\Phi_\rechts \ge 0$, 
$\Phi_\rechts \neq 0$ and $\Phi_\rechts =\lambda^{-N}\K^N \Phi_\rechts $ imply 
$\Phi_\rechts >0$ (modulo changes on null sets).
The same arguments, applied to $\K^*$ instead of $\K$,
show  that $E_\lambda(\K^*)$ contains a
positive function $\Phi_\links$ (modulo changes on null sets). 
%This proves (\ref{claim 4}). 

{\it Claim (\ref{claim 5}).} %To prove Claim (\ref{claim 5}),
Take $N\in\N$, $f,g\in \mathcal{H}$ with $g\ge 0$ $\mu$-a.e.\
and $\K^N f=\lambda^N f+g$.
Using the eigenfunction $\Phi_\links>0$ of $\K^*$ from Claim (\ref{claim 4}),
we obtain
\[
\lambda^N \langle \Phi_\links,f\rangle
=\langle (\K^*)^N\Phi_\links,f\rangle= \langle \Phi_\links,\K^N f\rangle
=
\lambda^N \langle \Phi_\links,f\rangle+\langle \Phi_\links,g\rangle
\]
and therefore $\langle \Phi_\links,g\rangle=0$. Since $\Phi_\links>0$ and $g\ge 0$,
this implies $g=0$ $\mu$-a.e.

{\it Claim (\ref{claim 6}).} Let $u\in E_\lambda(\K)$:
our goal is to show that $u$ is a multiple of $\Phi_\rechts $ $\mu$-a.e.
Take again $N\in\N$ as in hypothesis (d); 
then $u\in E_{\lambda^N}(\K^N)$ also holds.
Since the integral kernel $k_N$ of $\K^N$ is real-valued,
it follows $\re u\in E_{\lambda^N}(\K^N)$ and  $\im u\in E_{\lambda^N}(\K^N)$;
thus it suffices to show that every real-valued $u\in E_{\lambda^N}(\K^N)$
is a multiple of $\Phi_\rechts$ 
$\mu$-a.e. Assume by contradiction that this was
false. Then there is $a\in\R$ such that neither $au+\Phi_\rechts \ge 0$ nor
$au+\Phi_\rechts \le 0$ holds $\mu$-a.e. 
Setting $h=au+\Phi_\rechts \in E_{\lambda^N}(\K^N)$, $f=|h|$ and 
$g=\K^N|h|-|\K^N h|$,
it follows $g\ge 0$, and
$\K^N f=|\K^N h|+g=|\lambda^N h|+g=\lambda^N f+g$. 
Since the integral kernel $k^N$ of $\K^N$ takes only positive values,
$g$ cannot be $0$ $\mu$-a.e.
This contradicts Claim  (\ref{claim 5}).

{\it Claim (\ref{claim 7}).} %The proof is similar to that of Claim  (\ref{claim 6}).
Assume that there was a strict inclusion 
$E_\lambda(\K)\subsetneqq S_\lambda(\K)$. Then there is 
$f\in \ker (\K-\lambda\id)^2$ with $\K f=\lambda f+\Phi_\rechts $,
since $\ker (\K-\lambda\id)=E_\lambda(\K)$ is spanned by $\Phi_\rechts $
by claims (\ref{claim 4}) and (\ref{claim 6}).
This contradicts Claim (\ref{claim 5}) since $\Phi_\rechts >0$.
The same argument applied to $\K^*$ instead of $\K$ shows 
$E_\lambda(\K^*)=S_\lambda(\K^*)$.

{\it Claim (\ref{claim 8}).} Let $z\in\sigma(\K)$ with 
$|z|=\lambda$.
Then, by (\ref{claim 1}), there is an eigenfunction
$f\in E_z(\K)$, $f\neq 0$. 
Taking $N\in\N$ from hypothesis (d),
we get 
$\lambda^N |f| =|z^Nf|=|\K^N f|\le \K^N|f|$, by  (\ref{claim 5})
$\lambda^N |f| = |\K^N f|= \K^N|f|$ $\mu$-a.e.. This means
for $\mu$-a.e.\ $x\in\Omega$
%But for any $x\in\Omega$ with 
\[
|\K^N f(x)|=\left|\int_\Omega k_N(x,y) f(y)\,\mu(dy)\right|
=\int_\Omega k_N(x,y) |f(y)|\,\mu(dy)=\K^N|f|(x),
\]
therefore, again for $\mu$-a.e.\ $x\in\Omega$, 
there is a constant $c_x\in\C$ with $|c_x|=1$ such that
$k_N(x,y)f(y)=c_x k_N(x,y)|f(y)|$ holds for $\mu$-a.e.\ 
$y\in\Omega$. Using that $k_N$ is strictly positive, 
this implies $f=c|f|$ ($\mu$-a.e.)
for some $c\in \C$
with $|c|=1$,
therefore $\K^N f=c\K^N|f|=c\lambda^N |f| =\lambda^N f$.
We obtain $f\in S_\lambda(\K)$ and hence $f\in E_\lambda(\K)$
from Claim (\ref{claim 7}).
This implies $z=\lambda$ since $0\neq f\in E_z(\K)\cap E_\lambda(\K)$.
This shows that for all $z\in\sigma(\K)$ we have $z=\lambda$ or $|z|<\lambda$.
Since the spectral values of $\K$ can only accumulate at $0$,
this implies $\sup\{|z|:\;z\in\sigma(\K)\setminus\{\lambda\}\}<\lambda$.

{\it Claim (\ref{claim 9}).}
Let $\Sigma:=\K-\lambda P $.
$\Sigma$ is a compact operator, since $\K$ is compact and $ P $ has rank $1$.
We show now that $\lambda\notin \sigma(\Sigma)$.
Let $f\in E_\lambda(\Sigma)$; we need to show $f=0$.
From
\begin{align}
\K f-\lambda f&=\K f-\Sigma f=\lambda P  f= P  \Sigma f
=\sk{\Phi_\links,\K f}\Phi_\rechts -\lambda\sk{\Phi_\links, P  f}\Phi_\rechts 
\cr&
=
\sk{\K^*\Phi_\links,f}\Phi_\rechts -\lambda \sk{\Phi_\links,\Phi_\rechts }\sk{\Phi_\links,f}\Phi_\rechts 
\cr
&=\lambda\sk{\Phi_\links,f}\Phi_\rechts -\lambda\sk{\Phi_\links,f}\Phi_\rechts =0
\end{align}
we get that $f$ is a multiple $c\Phi_\rechts $ of $\Phi_\rechts $, with $c\in\C$,  since
$E_\lambda(\K)$ is spanned by $\Phi_\rechts $.
But $f\in E_\lambda(\Sigma)$ and
$\Sigma \Phi_\rechts =\K \Phi_\rechts -\lambda\sk{\Phi_\links,\Phi_\rechts }\Phi_\rechts =\lambda \Phi_\rechts -\lambda \Phi_\rechts =0$
then imply $\lambda f=\Sigma f=c\Sigma \Phi_\rechts =0$ and therefore $f=0$.

Next, let $z\in\C\setminus\{0,\lambda\}$.
We need to show that $z\in\sigma(\K)$ holds if and only if $z\in\sigma(\Sigma)$.
Now every non-zero spectral value of $\K$ (resp. $\Sigma$) is an eigenvalue of $\K$
 (resp. $\Sigma$).%, and
%every non-zero spectral value of $\Sigma$ is an eigenvalue of $\Sigma$.
Therefore, it suffices to show that $E_z(\K)=E_z(\Sigma)$.
To prove $E_z(\K)\subseteq E_z(\Sigma)$, let $f\in E_z(\K)$. 
Then $\lambda  P  f=\lambda \sk{\Phi_\links,f}\Phi_\rechts =\sk{\K^*\Phi_\links,f}\Phi_\rechts =\sk{\Phi_\links,\K f}\Phi_\rechts 
=z\sk{\Phi_\links,f}\Phi_\rechts =z P  f$, hence $ P  f=0$ since $z\neq \lambda$.
This shows $\Sigma f=\K f=zf$, i.e. $f\in E_z(\Sigma)$.

To prove $E_z(\K)\supseteq E_z(\Sigma)$, let $f\in E_z(\Sigma)$.
Note that $ P ^2= P $ since for  $g\in \mathcal{H}$ we have
 $ P ^2g=\sk{\Phi_\links,g}\sk{\Phi_\links,\Phi_\rechts }\Phi_\rechts 
=\sk{\Phi_\links,g}\Phi_\rechts = P  g$. % for $g\in \mathcal{H}$.
We obtain 
\begin{align}
z  P  f&= P \Sigma f= P  \K f-\lambda P ^2f=
\sk{\Phi_\links,\K f}\Phi_\rechts -\lambda P  f
\cr
&=\sk{\K^*\Phi_\links,f}\Phi_\rechts -\lambda P  f
=
\lambda\sk{\Phi_\links,f}\Phi_\rechts -\lambda P  f=0
\end{align}
hence again $ P  f =0$. This shows $\K f=\Sigma f=zf$,
 i.e. $f\in E_z(\K)$.
Thus we have proven $\sigma(\Sigma)\setminus\{0\}=
\sigma(\K)\setminus\{\lambda,0\}$. The remaining claim $r(\Sigma)<\lambda$
follows now from assertion (\ref{claim 8}).

{\it Claim (\ref{claim 10}).}
Note that $ P \Sigma=0=\Sigma P $ hold, because for $f\in \mathcal{H}$, we have
\begin{align*}
& P \Sigma f=\sk{\Phi_\links,\K f}\Phi_\rechts -\lambda P ^2f=
\sk{\K^*\Phi_\links,f}\Phi_\rechts -\lambda P  f=\lambda\sk{\Phi_\links,f}\Phi_\rechts -\lambda P  f=0,
\\
&\Sigma P  f=\K P  f-\lambda  P ^2 f
= \sk{\Phi_\links,f}\K \Phi_\rechts -\lambda  P  f=\lambda\sk{\Phi_\links,f}\Phi_\rechts -\lambda  P  f=0.
\end{align*}
As a consequence, we obtain Claim (\ref{claim 10}) as follows:
\[
\lambda^{-n}\K^n=(\lambda^{-1}\Sigma+ P )^n=
\lambda^{-n}\Sigma^n+ P ^n=\lambda^{-n}\Sigma^n+ P 
=(\lambda^{-1}\K- P )^n+ P 
\]
and $\|\lambda^{-n}\K^n- P \|^{1/n}=\|\lambda^{-n}\Sigma^n\|^{1/n}
\stackrel{n\to\infty}{\longrightarrow} r(\lambda^{-1}\Sigma)
=\lambda^{-1}r (\Sigma)<1.$
\end{proof}
\end{appendix}

{\small
\bibliographystyle{alpha}

%\bibliography{mathphys}
}

\end{document}